\newtheorem{prop}{Proposition}
\newtheorem{cor}{Corollary}
\newtheorem{lemma}{Lemma}
\newtheorem{thm}{Theorem}
\newtheorem{assump}{Standing Assumption}
\newcommand{\cblue}{\color{black}}
\DeclareMathOperator{\sgn}{sgn}
\begin{document}

\title{A multi-asset investment and consumption problem with transaction costs}
\author{David Hobson \and Alex Tse \and Yeqi Zhu}
\thanks{David Hobson: Department of Statistics, University of Warwick, Coventry, CV4 7AL, UK. D.Hobson@warwick.ac.uk; \\
Alex Tse: Cambridge Endowment for Research in Finance, Judge Business School, University of Cambridge, Cambridge, CB2 1AG, UK. S.Tse@jbs.cam.ac.uk \\
Yeqi Zhu: Credit Suisse, London, UK. (The opinions expressed in the paper are those of the author and not of Credit Suisse.) Yeqi.Zhu@credit-suisse.com}
\date{\today}

\begin{abstract}
In this article we study a multi-asset version of the Merton investment and consumption problem with proportional transaction costs. In general it is difficult to make analytical progress towards a solution in such problems, but we specialise to a case where transaction costs are zero except for sales and purchases of a single asset which we call the illiquid asset.

Assuming agents have CRRA utilities and asset prices follow exponential Brownian motions we
show that the underlying HJB equation can be transformed into a boundary value problem for a first order differential equation. The optimal strategy is to trade the illiquid asset only when the fraction of the total portfolio value invested in this asset falls outside a fixed interval. Important properties of the multi-asset problem (including when the problem is well-posed, ill-posed, or well-posed only for large transaction costs) can be inferred from the behaviours of a quadratic function of a single variable and another algebraic function.


\end{abstract}

\maketitle


\section{Introduction}
\label{sect:intro}

In one of his seminal works, Merton~\cite{Merton:69} considers a portfolio and consumption problem faced by a price-taking agent in a continuous-time stochastic financial model consisting of a risk-free bond and a risky asset. The agent is assumed to have the objective of maximising the expected discounted utility from consumption
over an infinite horizon. In a model in which the single risky asset follows an exponential Brownian motion with constant parameters and the agent has constant relative risk aversion, Merton shows that the optimal behaviour is to consume at a rate which is proportional to wealth, and to invest a constant fraction of wealth in the risky asset. The result generalises easily to multiple risky assets.

Constantinides and Magill~\cite{ConstantinidesMagill:76} were the first to add proportional transaction costs to the model. In a model with a single risky asset they conjectured the form of the optimal strategy, namely it is optimal to keep the fraction of wealth invested in the risky asset in an interval. Subsequently Davis and Norman~\cite{DavisNorman:90} gave a precise statement of the result and showed how the solution could be expressed in terms of local times. The optimal behaviour is to trade in a minimal fashion so as to keep the variables (cash wealth, wealth in the risky asset) in a wedge-shaped region in the plane, and this is achieved by sales and purchases of the risky asset in the form of singular stochastic controls.

The approach in Davis and Norman~\cite{DavisNorman:90} is to write down the Hamilton-Jacobi-Bellman (HJB) equation, and to characterise the candidate value function as a solution to this equation. Shreve and Soner~\cite{ShreveSoner:94} reproved the results of \cite{DavisNorman:90} using viscosity solutions and gave several extensions. These approaches remain the main methods for solving portfolio optimisation problems with transaction costs, although recently a different technique based on shadow prices has been proposed, see Guasoni and Muhle-Karbe~\cite{GuasoniMuhleKarbe:12} for a users' guide.
Kallsen and Muhle-Karbe~\cite{KallsenMuhleKarbe:10}, Choi et al~\cite{ChoiSirbuZitkovic:13} and Herczegh and Prokaj~\cite{HerczeghProkaj:15} use the dual approach to characterise the solution to the problem with transaction costs and one risky asset.

The results in Davis and Norman~\cite{DavisNorman:90} are limited to a single risky asset, and it is of great interest to understand how they generalise to multiple risky assets.
In his survey article on consumption/investment problems with transaction costs Cadenillas~\cite[page 65]{Cadenillas:00} says that `most results in this survey are limited to the case of only one bond and only one stock. It is then important to see if these results can be extended to cover a realistic number of stocks'. Although there has been some progress since that paper was published, similar sentiments are echoed in recent papers by Chen and Dai~\cite[page 2]{ChenDai:13}: `most of the existing theoretical characterisations of the optimal strategy are for the single risky-asset case. In contrast there is a relatively limited literature on the multiple risky-asset case' and Guasoni and Muhle-Karbe~\cite[page 194]{GuasoniMuhleKarbe:12}: `In sharp contrast to frictionless models, passing from one to several risky assets is far from trivial with transaction costs \ldots multiple assets introduce novel effects, which defy the one-dimensional intuition'. In summary therefore, there is great interest in both theoretical and numerical results on the multi-asset case, and this paper can be considered as a contribution to that literature.

In the multi-asset case, and on the computational side, Muthuraman and Kumar~\cite{MuthurmanKumar:06} use a process of policy improvement to construct a numerical solution for the value function and the associated no-transaction region, Collings and Haussmann~\cite{CollingsHausmann:99} derive a numerical solution via a Markov chain approximation for which they prove convergence, {\cblue and Dai and Zhong~\cite{DaiZhong:10} use a penalty method to obtain numerical solutions}. On the theoretical front Akian {\em et al}~\cite{AkianMenaldiSulem:95} show that the value function is the unique viscosity solution of the HJB equation (and provide some numerical results in the two-asset case) and Chen and Dai~\cite{ChenDai:13}
identify the shape of the no-transaction region in the two-asset case. Explicit solutions of the general problem remain very rare.

One situation when an explicit solution is possible is the rather special case of uncorrelated risky assets, and an agent with constant absolute risk aversion. In that case the problem decouples into a family of optimisation problems, one for each risky asset, see Liu~\cite{Liu:04}.
Another setting for which some progress has been made is the problem with small transaction costs, see Whalley and Wilmott~\cite{WhalleyWilmott:97}, {\cblue Janecek and Shreve~\cite{JanecekShreve:04}, Bichuch and Shreve~\cite{BichuchShreve:13}, Soner and Touzi~\cite{SonerTouzi:13}, and, for a recent analysis in the multi-asset case, Possama\"{i} et al~\cite{PossamaiSonerTouzi:15}. These papers use an expansion method to provide asymptotic formulae for the optimal strategy, value function and no-transaction region.}

{\cblue Our focus is on optimal investment/consumption problems, but there is a parallel literature on optimal investment problems involving maximising expected utility at a distant terminal horizon, see, for example, Dumas and Luciano~\cite{DumasLuciano:91} for an explicit solution in the one-asset case and Bichuch and Guasoni~\cite{BichuchGuasoni:16} for recent work in a setting similar to ours with liquid and illiquid assets}.

In this paper we consider the problem with a risk-free bond and two risky assets. Transactions in the first risky asset are costless, but
transactions in the second risky asset, which we term the illiquid asset, incur proportional costs. {\cblue This is also the setting of a recent paper by Choi~\cite{Choi:16}.}
More generally, we may have several risky assets on which no transaction costs are payable. By a mutual fund theorem, this general case can be reduced to the
case with a single liquid, risky asset.

This paper is an extension of Hobson et al~\cite{HobsonTseZhu:16} which considers a similar problem with a bond and an illiquid asset but with no other risky assets\footnote{\cblue This paper can also be viewed as a development of the results of Hobson and Zhu~\cite{HobsonZhu:14}. The model in Hobson and Zhu includes both a liquid risky asset and an illiquid asset, but assumes that transaction cost on sales of the illiquid asset is infinite. This case might be called the ``perfectly illiquid'' case: the illiquid asset can be sold, but not bought, and the problem is an optimal liquidation problem. This paper extends Hobson and Zhu~\cite{HobsonZhu:14} to allow for finite transaction costs and purchases of the illiquid asset.}. Many of the techniques of \cite{HobsonTseZhu:16} carry over to the wider setting of this paper. {\cblue (Similarly, the paper of Choi~\cite{Choi:16} extends the work of Choi et al~\cite{ChoiSirbuZitkovic:13} to include a risky liquid asset.)} However, since there are fewer parameters when the financial market includes just one risky asset, the problem in \cite{HobsonTseZhu:16} is significantly simpler and much more amenable to a comparative statics analysis. In contrast, this paper treats the multi-asset problem which has proved so difficult to analyse in full generality, albeit in a rather special case. The multi-asset setting brings new challenges and complicates the analysis.

It is straightforward to write down the Hamilton-Jacobi-Bellman (HJB) equation for our problem. The value function is a function of four variables (wealth in liquid assets, price of the illiquid asset, quantity of illiquid asset held, time) and satisfies a HJB equation which is second order, non-linear and subject to value matching and smooth fit at a pair of unknown free boundaries. (The smooth fit turns out to be of second order.)
Our first achievement is to show that the problem of finding the free boundaries and the value function can be reduced to the study of a boundary crossing problem for a family of solutions to a class of first order ordinary differential equations parametrised by the initial values. This allows us to characterise precisely the parameter combinations for which the problem is well-posed (Theorem~\ref{thm:wellposed}), and in those cases to give an expression for the value function (Theorem~\ref{thm:valfun}). These results extend Choi et al~\cite{ChoiSirbuZitkovic:13} and Hobson et al~\cite{HobsonTseZhu:16} to the case of multiple risky assets.

{\cblue As mentioned above, Choi~\cite{Choi:16} studies a similar problem. The main difference between this paper and Choi~\cite{Choi:16} is that we analyse the HJB equation, whereas Choi takes the dual approach and studies shadow prices. Choi~\cite{Choi:16}[Remark 2.2, Assumption 2.4] assumes that the corresponding two-asset problem with zero transaction costs is well-posed, and hence the problem with liquid and illiquid assets is well-posed whatever the value of transaction costs. In contrast, in addition to the unconditionally well-posed case, we also consider the case where the problem is ill-posed for zero and small transaction costs, but well-posed for large transaction costs. (Note that analysis of this situation is beyond the scope of approaches which rely on expansions in a (small) transaction cost parameter.) In fact we show (Corollary~\ref{cor:wellposed}) that the problem is well-posed for sufficiently large transaction costs provided the problem is well-posed when the liquid asset is omitted. We call the case when the problem is well-posed only for large transaction costs the conditionally well-posed case.}

Our second achievement is to make definitive statements about the comparative statics for the problem. We focus on the boundaries of the no-transaction wedge and the certainty equivalent value of the holdings in the illiquid asset. Amongst other results, we prove (see Theorem~\ref{thm:compstat} and Corollary~\ref{cor:compstat} for precise statements) that as the drift on the illiquid asset improves, the agent aims to keep a larger fraction of his total wealth in the illiquid asset, in the sense that the critical ratios at which sales and purchases take place are increasing in the drift.
Conversely, as the agent becomes more impatient, the agent keeps a smaller fraction of wealth in the illiquid asset. Further, we prove (Theorem~\ref{thm:compstat2} and Corollary~\ref{cor:compstat2}) that as the drift on the illiquid asset improves, or as the agent becomes less impatient, the certainty equivalent value of the holdings in the illiquid asset increases. See Section~\ref{sec:compstat} for a more detailed discussion.

The remainder of the paper is as follows. In the next section we formulate the problem. In Section~\ref{sect:hjb_derive} we derive the HJB equation and give heuristics showing how it can be converted to a free boundary value problem involving a first order differential equation. Then we can state our main results on the existence of a solution (Section~\ref{sec:main}). In Section~\ref{sect:fbp} we discuss the various cases which arise. In Section~\ref{sec:compstat} we discuss the comparative statics of the problem, before Section~\ref{sec:conc} concludes. Materials on the solution of the free boundary value problem, the verification argument for the HJB equation, and other lemmas on the analysis of solutions of the differential equations are relegated to the appendices.

\section{The problem}
\label{sect:problem}

The economy consists of one money market instrument paying constant interest rate $r>0$ and two risky assets, one of which is liquidly traded while the other one is illiquid. There are no transaction costs associated with trading in the liquid asset. Meanwhile, trading in the illiquid asset incurs a proportional transaction cost $\lambda\in[0,\infty)$ on purchases and $\gamma\in[0,1)$ on sales, where not both $\lambda$ and $\gamma$ are zero. Let $(S,Y)=(S_{t},Y_{t})_{t\geqslant 0}$ be the price processes of the liquid and illiquid assets respectively. The price dynamics are given by
\[
(S_{t},Y_t) = \left( S_{0}\exp\left((\mu-\frac{\sigma^{2}}{2})t+\sigma B_{t}\right),
Y_{0}\exp\left((\alpha-\frac{\eta^{2}}{2})t+\eta W_{t}\right) \right)
\]
where $(B,W)$ is a pair of Brownian motions with correlation coefficient $\rho\in(-1,1)$. Write $\beta:=(\mu-r)/\sigma$ and $\nu:=(\alpha-r)/\eta$ for the Sharpe ratio of the liquid and illiquid asset respectively.

Let $\Theta_{t}$ be the number of units of the illiquid asset held by an agent at time $t$. Then $\Theta_{t}=\Theta_{0}+\Phi_{t}-\Psi_{t}$ where $\Phi=(\Phi_{t})_{t\geqslant 0}$ and $\Psi=(\Psi_{t})_{t\geqslant 0}$ are both increasing, non-negative processes representing the cumulative units of purchases and sales respectively of the illiquid asset. Let $C=(C_{t})_{t\geqslant 0}$ be the non-negative consumption rate process of the agent and $\Pi=(\Pi_{t})_{t\geqslant 0}$ be the cash value of holdings in the risky liquid asset. We assume $\Theta$, $C$ and $\Pi$ are progressively measurable and right-continuous. If $X=(X_{t})_{t\geqslant 0}$ is the total value of the liquid instruments (cash and the liquid risky asset) then, assuming transaction costs are paid in cash, {\cblue and consumption is from the cash account},
\begin{align*}
dX_{t}&= r(X_t - \Pi_{t})dt +  \frac{\Pi_{t}}{S_{t}}dS_{t} -C_{t}dt-Y_{t}(1+\lambda)d\Phi_{t}+Y_{t}(1-\gamma)d\Psi_{t} \\
&=\left[(\mu-r)\Pi_{t}+rX_{t}-C_{t}\right]dt-Y_{t}(1+\lambda)d\Phi_{t}+Y_{t}(1-\gamma)d\Psi_{t}+\sigma \Pi_t dB_{t}.
\end{align*}

We say that a portfolio $(X,\Theta)$ is solvent at time $t$ if its instantaneous liquidation value is non-negative, that is
\begin{align*}
X_{t}+\Theta_{t}^{+}Y_{t}(1-\gamma)-\Theta_{t}^{-}Y_{t}(1+\lambda)\geqslant 0.
\end{align*}
A consumption/investment strategy $(C,\Pi,\Theta)$ is said to be admissible if the resulting portfolio is solvent at the current time and at all the future time points. Write $\mathcal{A}(t,x,y,\theta)$ for the set of admissible strategies with initial time-$t$ value $(X_{t-}=x,Y_{t}=y,\Theta_{t-}=\theta)$.

We assume the agent has a CRRA utility function with risk aversion parameter $R \in (0,\infty) \setminus \{1\}$. His objective is to find an optimal strategy which maximises the expected lifetime discounted utility from consumption. The problem is thus to find
\begin{align}
V(x,y,\theta)=\sup_{(C,\Pi,\Theta)\in\mathcal{A}(0,x,y,\theta)}\mathbb{E}\left(\int_{0}^{\infty}e^{-\delta s}\frac{C_{s}^{1-R}}{1-R}ds\right)
\label{eq:thevalfun}
\end{align}
where $\delta$ is the agent's subjective discount rate.

We will call $X_t+ \Theta_t Y_t$ the paper wealth of the agent. In our parametrisation a key quantity will be $P_{t}:=\frac{\Theta_{t}Y_{t}}{X_{t}+\Theta_{t}Y_{t}}$, the proportion of paper wealth invested in the illiquid asset.

\section{The HJB equation and a free boundary value problem}
\label{sect:hjb_derive}

\subsection{Deriving the HJB equation}
Let
\begin{align*}
\mathcal{V}(x,y,\theta,t)=\sup_{(C,\Pi,\Theta)\in\mathcal{A}(t,x,y,\theta)}\mathbb{E}\left(\int_{t}^{\infty}e^{-\delta s}\frac{C_{s}^{1-R}}{1-R}ds\right)
\end{align*}
be the forward-starting value function from time $t$. Inspired by the analysis in the classical case involving a single risky asset only, we postulate that the value function has the form
\begin{align}
\mathcal{V}(x,y,\theta,t)&=e^{-\delta t}V(x,y,\theta) = \Upsilon 
\frac{e^{-\delta t}(x+y\theta)^{1-R}}{1-R}G\left(\frac{y\theta}{x+y\theta}\right)
\label{eq:valfun}
\end{align}
for some strictly positive function $G$ to be determined and $\Upsilon$ a convenient scaling constant which will help simplify the HJB equation. We take $\Upsilon = \left(\frac{b_{1}}{R b_{4}}\right)^{-R}$ where $b_1$ and $b_4$ are constants to be defined in Section~\ref{ssec:reduce} below in terms of the financial parameters associated with the underlying problem. For the present we assume that $G$ is smooth and use heuristic arguments to derive a characterisation of the candidate value function. Later we will outline a verification argument that this candidate value function coincides with the solution of the corresponding optimal investment/consumption problem, and therefore deduce the necessary smoothness properties of $\mathcal{V}$ and $G$.

Building on the intuition developed by Constantinides and Magill~\cite{ConstantinidesMagill:76} and Davis and Norman~\cite{DavisNorman:90} we expect that the optimal strategy of the agent is to trade the illiquid asset only when $P_{t}$ falls outside a certain interval $[p_{*},p^{*}]$ to be identified. Due to the solvency restriction, we must have $-\frac{1}{\lambda}\leqslant P_{t}\leqslant \frac{1}{\gamma}$ and $[p_{*},p^{*}] \subseteq [-\frac{1}{\lambda}, \frac{1}{\gamma}]$. Whenever $P_{t}<p_{*}$, the agent purchases the illiquid asset to bring $P_{t}$ back to $p_{*}$. Hence for an initial position $(x,\theta)$ such that $p=\frac{y\theta}{x+y\theta}<p_{*}$, the number of units of illiquid asset to be purchased is given by $\phi=\frac{xp_{*}-(1-p_{*})y\theta}{y(1+\lambda p_{*})}$ such that $\frac{y(\theta+\phi)}{x+y(\theta+\phi)-y(1+\lambda)\phi}=p_{*}$. The value function does not change on this transaction, and hence we deduce that for $-\frac{1}{\lambda}\leqslant p<p_{*}$,
\begin{align*}
(x+y\theta)^{1-R}G(p)=[x+y(\theta+\phi)-y(1+\lambda)\phi]^{1-R}G(p_{*})
\end{align*}
and in turn
\begin{align}
G(p)&=\left(\frac{1+\lambda p}{1+\lambda p_{*}}\right)^{1-R}G(p_{*})=A_{*}(1+\lambda p)^{1-R}
\label{eq:buyregime}
\end{align}
where $A_{*}:=G(p_{*})(1+\lambda p_{*})^{R-1}$. Similar consideration leads to the conclusion that
\begin{align}
G(p)=\left(\frac{1-\gamma p}{1-\gamma p^{*}}\right)^{1-R}G(p^{*})=A^{*}(1-\gamma p)^{1-R}
\label{eq:sellregime}
\end{align}
for $p^{*}<p\leqslant\frac{1}{\gamma}$ where $A^{*}:=(1-\gamma p^{*})^{R-1}G(p^{*})$.

Consider $M=(M_{t})_{t\geqslant 0}$ defined via
\begin{align*}
M_{t}:=\int_{0}^{t}e^{-\delta s}\frac{C_{s}^{1-R}}{1-R}ds+e^{-\delta t}V(X_{t},Y_{t},\Theta_{t}).
\end{align*}
We expect $M$ to be a supermartingale in general, and a martingale under the optimal strategy. Suppose $V$ is $C^{2 \times 2 \times 1}$. Then applying Ito's lemma we find
\begin{align*}
e^{\delta t} dM_{t}&=\frac{C_{t}^{1-R}}{1-R}dt+V_{x}dX_{t}+\frac{1}{2}V_{xx}d[X]_{t} +V_{y}dY_{t}+\frac{1}{2}V_{yy}d[Y]_{t}+V_{\theta}d\Theta_{t}+V_{xy}d[X,Y]_{t}-\delta V dt \\
&=\left(\frac{C_{t}^{1-R}}{1-R}-V_{x}C_{t}+\frac{\sigma^{2}}{2}V_{xx}\Pi_{t}^{2}+((\mu-r) V_{x}+\sigma\eta\rho V_{xy}Y_{t})\Pi_{t}+r V_{x}X_{t}+\alpha V_{y} Y_{t}+\frac{\eta^{2}}{2}V_{yy}Y_{t}^{2}-\delta V\right)dt \\
&\qquad+(V_{\theta}-(1+\lambda)V_{x}Y_{t})d\Phi_{t}+(V_{x}Y_{t}(1-\gamma)-V_{\theta})d\Psi_{t}+\sigma V_{x}\Pi_{t}dB_{t}+\eta V_{y} Y_{t} dW_{t}.
\end{align*}
Further assume $V$ is strcitly increasing and concave in $x$. Then on maximising the drift term with respect to $C_{t}$ and $\Pi_{t}$ and setting the resulting maxima to zero, we obtain the HJB equation over the no-transaction region:
\begin{align}
\frac{R}{1-R}V_{x}^{1-1/R}+rx V_{x}+\alpha yV_{y}+\frac{\eta^{2}}{2}y^{2}V_{yy}-\frac{(\beta V_{x}+\eta\rho yV_{xy})^{2}}{2V_{xx}}-\delta V=0.
\label{eq:hjb}
\end{align}

\subsection{Reduction to a first order free boundary value problem}
\label{ssec:reduce}

Define the auxiliary parameters $b_{1}$, $b_{2}$, $b_{3}$ and $b_{4}$ as
\begin{align*}
b_{1}=\frac{2\left[\delta-r(1-R)-\frac{\beta^{2}(1-R)}{2R}\right]}{\eta^{2}(1-\rho^{2})},\quad b_{2}=\frac{\beta^{2}-2R\eta\rho\beta+\eta^{2}R^{2}}{\eta^{2}R^{2}(1-\rho^{2})},\quad b_{3}=\frac{2(\nu-\beta\rho)}{\eta(1-\rho^{2})}, \quad b_{4}=\frac{2}{\eta^{2}(1-\rho^{2})}.
\end{align*}
It will turn out that the optimal investment and consumption problem depends on the original parameters only through these
auxiliary parameters and the risk aversion level $R$.

Here $b_1$ plays the role of a `normalised discount factor', which adjusts the discount factor to allow for numeraire growth effects and for investment opportunities in the transaction-cost free risky asset.
$b_{4}$ is a simple function of the `idiosyncratic volatility' of the illiquid asset.
The parameter $b_3$ is the `effective Sharpe ratio, per unit of idiosyncratic volatility' of the illiquid asset.
The parameter $b_2$ is the hardest to interpret: essentially it is a
nonlinearity factor which arises from the multi-dimensional structure of the problem.
Note that $b_{2}=1+\frac{1}{1-\rho^{2}}\left(\frac{\beta}{\eta R}-\rho\right)^{2}\geqslant 1$.

In the sequel we will work with the following assumption.
\begin{assump}
\label{ass:sa}
Throughout the paper we assume $b_{1}>0$, $b_2>1$ and $b_{3}>0$.
\end{assump}
The rationale for imposing $b_{1}>0$ is that $b_1>0$ is necessary to ensure well-posedness of the Merton problem in the absence of the illiquid asset. (If $R<1$ and $b_{1}\leqslant 0$, the value function is infinite for the Merton problem. Conversely, if $R>1$ and $b_{1}\leqslant 0$, then for every admissible strategy the expected discounted utility of consumption equals $-\infty$. {\cblue If the Merton problem is ill-posed in the absence of the illiquid asset, then our problem is necessarily ill-posed.})

In contrast, the assumption $b_3 > 0$ is not necessary. However, the advantage of working with a positive effective Sharpe ratio of the illiquid asset ($b_{3}>0$) is that the no-transaction wedge is contained in the first two quadrants of the $(x,y\theta)$ plane. The assumption $b_3>0$ reduces the number of cases to be considered in our analysis, and facilitates the clarity of the exposition, but the methods and results developed in this paper can be extended easily to the case of an illiquid asset with negative effective Sharpe ratio\footnote{\cblue If $b_3=0$ the agent chooses never to invest in the illiquid asset. In this case agent closes any initial position in $Y$ at time zero and thereafter the problem reduces to a standard Merton problem with the single risky asset $S$ and no transaction costs.}.

The case $b_2 = 1$ is rather special and we exclude it from our analysis.
One scenario in which we naturally find $b_2 = 1$ is if $\beta = 0 = \rho$. In this case there is neither a hedging motive, nor an investment motive
for holding the liquid risky asset\footnote{\cblue More generally, the position in the liquid asset $S$ is a combination of an investment position to take advantage of the expected excess returns in $S$ and a hedging position to offset the risk of the position in the illiquid asset $Y$. If $\frac{\beta}{\eta r} = \rho$ then when $X=0$ these terms exactly cancel. In particular, if the half-line $X=0$ is inside the no-transaction region, then since consumption takes place from the cash account, if ever $X=0$ then wealth can only go negative. Then the subspace $X\leq0$ is absorbing, and no further purchases of the liquid asset are ever made.}. Essentially then, the investor can ignore the presence of the liquid risky asset, reducing the dimensionality of the problem. This problem is the subject of \cite{HobsonTseZhu:16}. If $b_2=1$ then the solution $n$ we define in the next paragraph may pass through singular points. See
Choi et al~\cite{ChoiSirbuZitkovic:13} or Hobson et al~\cite{HobsonTseZhu:16} for a discussion of some of the issues.

We adopt the same transformation as \cite{HobsonTseZhu:16} to reduce the order of the HJB equation. Recall the relationship between $\mathcal{V}$ and $G$ in \eqref{eq:valfun} and the definition $p = \frac{y\theta }{x + y\theta}$. Away from $p=1$, set $h(p)=\sgn(1-p)|1-p|^{R-1}G(p)$, $w(h)=p(1-p)\frac{dh}{dp}$, $W(h)=\frac{w(h)}{(1-R)h}$, let $N=W^{-1}$ be the inverse function to $W$ and set $n(q)=|N(q)|^{-1/R}|1-q|^{1-1/R}$. Then, we show in Appendix~\ref{subsect:hjb} that \eqref{eq:hjb} can be transformed into a first order differential equation
\begin{align}
n'(q)=O(q,n(q))
\label{eq:theode}
\end{align}
where
\begin{align}
O(q,n)=\frac{(1-R)n}{R(1-q)}-\frac{2(1-R)^{2}qn/R}{2(1-R)(1-q)\left[(1-R)q+R\right]-\varphi(q,n)-\sgn(1-R)\sqrt{\varphi(q,n)^{2}+E(q)^{2}}}
\label{eq:formO}
\end{align}
with
\begin{align*}
\varphi(q,n)&:=b_{1}(n-1)+(1-R)(b_{3}-2R)q+(2-b_{2})R(1-R), \\
E(q)^{2}&:=4R^{2}(1-R)^{2}(b_{2}-1)(1-q)^{2}.
\end{align*}

Define the quadratic
\begin{equation}
\label{eq:mdef}
 m(q)  :=  \frac{R(1-R)}{b_{1}}q^{2}-\frac{b_{3}(1-R)}{b_{1}}q+1
\end{equation}
and the algebraic function
\begin{equation}
\label{eq:elldef}
\ell(q):=m(q)+\frac{1-R}{b_{1}}q(1-q)+\frac{(b_{2}-1)R(1-R)}{b_{1}}\frac{q}{(1-R)q+R} .
\end{equation}
Note that $m$ has a turning point (a minimum if $R<1$ and a maximum if $R>1$) at $\frac{b_3}{2R} := q_M$ and set $m_M := m(q_M) = 1 - \frac{b_3^2(1-R)}{4 b_1 R}$.

The following are the key properties of the function $O$. They are special cases of a more complete set of properties given in Lemma~\ref{lemma:list} below.
\begin{lemma}
\begin{enumerate}
\item $O(q,n)$ can be extended to $q=1$ by continuity on $(1-R)n<(1-R)\ell(1)$;
\item $O(q,n)=0$ if and only if $n=m(q)$;
\item For given $R$ and $q$ the sign of $O(q,n)$ depends only on the signs of $n-m(q)$ and $\ell(q)-n$.
\end{enumerate}
\label{lem:shortlist}
\end{lemma}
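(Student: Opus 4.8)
The plan is to write the denominator of the second term of $O$ as
\[
\Delta(q,n):=2(1-R)(1-q)\left[(1-R)q+R\right]-\varphi(q,n)-\sgn(1-R)\sqrt{\varphi(q,n)^{2}+E(q)^{2}},
\]
so that, after factoring out $\frac{(1-R)n}{R(1-q)}$, one has $O(q,n)=\frac{(1-R)n}{R(1-q)}\cdot\frac{\Delta(q,n)-2(1-R)q(1-q)}{\Delta(q,n)}$. On the relevant domain $n=|N(q)|^{-1/R}|1-q|^{1-1/R}>0$, so this prefactor is nonzero. I would prove part (2) first. Dividing $O=0$ by the prefactor reduces the equation to $\Delta(q,n)=2(1-R)q(1-q)$; subtracting and using $(1-R)q+R-q=R(1-q)$, this becomes $2R(1-R)(1-q)^{2}-\varphi=\sgn(1-R)\sqrt{\varphi^{2}+E^{2}}$. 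Squaring and cancelling a factor $4R(1-R)(1-q)^{2}$ (using $E(q)^{2}=4R^{2}(1-R)^{2}(b_{2}-1)(1-q)^{2}$) collapses everything to the linear equation $\varphi=R(1-R)[(1-q)^{2}-(b_{2}-1)]$, whose solution is exactly $n=m(q)$ since $\varphi$ is affine in $n$ with $\partial_{n}\varphi=b_{1}$. Squaring is reversible because at $n=m(q)$ the left side equals $R(1-R)[(1-q)^{2}+(b_{2}-1)]$, which has sign $\sgn(1-R)$ thanks to $b_{2}>1$, matching the sign forced by $\sgn(1-R)$ on the right.

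For part (3) I would exploit the factored form together with monotonicity in $n$. Differentiating, $\partial_{n}\Delta=-b_{1}\bigl(1+\sgn(1-R)\varphi/\sqrt{\varphi^{2}+E^{2}}\bigr)$, and since $b_{2}>1$ and $q\neq1$ force $E(q)\neq0$ we have $|\varphi|/\sqrt{\varphi^{2}+E^{2}}<1$, so the bracket lies in $(0,2)$ and $\partial_{n}\Delta<0$. Thus both $\Delta$ and $\Delta-2(1-R)q(1-q)$ are strictly decreasing in $n$; by part (2) the latter vanishes only at $n=m(q)$, and an identical isolate-and-square computation shows $\Delta$ vanishes only at $n=\ell(q)$ (the sign check there uses $\sgn\bigl((1-R)(1-q)[(1-R)q+R]\bigr)=\sgn(1-R)$ on $q<1$). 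Hence $\sgn(\Delta-2(1-R)q(1-q))=\sgn(m(q)-n)$ and $\sgn(\Delta)=\sgn(\ell(q)-n)$, while the prefactor has the fixed sign $\sgn\bigl((1-R)/(1-q)\bigr)$ for given $R$ and $q$. Reading off the product gives $\sgn(O)=\sgn\bigl((1-R)/(1-q)\bigr)\,\sgn(m(q)-n)\,\sgn(\ell(q)-n)$, which depends on $n$ only through the signs of $n-m(q)$ and $\ell(q)-n$.

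Part (1) is the delicate one. At $q=1$ one has $E(1)=0$, so the radical degenerates to $|\varphi(1,n)|$, and a short computation simplifying the constants gives the clean identity $\varphi(1,n)=b_{1}(n-\ell(1))$. Consequently $\Delta(1,n)=-\varphi(1,n)-\sgn(1-R)|\varphi(1,n)|$ vanishes precisely when $\sgn(1-R)\varphi(1,n)\leq0$, i.e. on $(1-R)n\leq(1-R)\ell(1)$; on the open region $(1-R)n<(1-R)\ell(1)$ the leading term of $\Delta$ comes from the first summand, and expanding to first order in $(1-q)$ yields $\Delta(q,n)=2(1-R)(1-q)\bigl(1+o(1)\bigr)$. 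The first term of $O$ and the second term (whose denominator is $\Delta$) then each blow up like $(1-q)^{-1}$, and the point is that these singular parts cancel exactly, leaving a finite limit. Carrying out this expansion carefully — tracking the $O((1-q)^{2})$ error in the radical and the next-order coefficient in $\Delta$ so as to extract the finite limit — is the main obstacle; the rest of the lemma is essentially algebraic bookkeeping once the identities $\varphi(q,m(q))=R(1-R)[(1-q)^{2}-(b_{2}-1)]$, $\Delta(q,\ell(q))=0$ and $\varphi(1,n)=b_{1}(n-\ell(1))$ are in hand.
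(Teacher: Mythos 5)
Your factorisation $O(q,n)=\frac{(1-R)n}{R(1-q)}\cdot\frac{\Delta(q,n)-2(1-R)q(1-q)}{\Delta(q,n)}$, with $\Delta$ the denominator of the second term of $O$, is correct, and your proof of part (2) is sound: the isolate-and-square argument plus the sign check at $n=m(q)$ (using $b_2>1$) is a legitimate and rather more direct route than the paper's, which goes through the representation \eqref{eq:formOalt}. The genuine error is in part (3): the claim $\sgn(\Delta(q,n))=\sgn(\ell(q)-n)$ is false for $q>1$. Your own sign check for the equation $\Delta=0$ holds only for $q<1$, as your parenthesis concedes; for $q>1$ (still with $(1-R)q+R>0$) the quantity $2(1-R)(1-q)[(1-R)q+R]-\varphi(q,\ell(q))$ has sign $-\sgn(1-R)$, so $n=\ell(q)$ is a \emph{spurious} root of the squared equation, and in fact $\Delta$ never vanishes there at all: for $R<1$ one has $\varphi+\sqrt{\varphi^{2}+E^{2}}\geqslant 0$, whence $\Delta(q,n)\leqslant 2(1-R)(1-q)[(1-R)q+R]<0$ for every $n$ once $q>1$, so $\sgn(\Delta)\equiv -1$ regardless of where $n$ sits relative to $\ell(q)$. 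Consequently your final formula $\sgn(O)=\sgn\bigl((1-R)/(1-q)\bigr)\sgn(m(q)-n)\sgn(\ell(q)-n)$ is wrong on the region $\{q>1,\ m(q)<n<\ell(q)\}$, which is nonempty for $q$ just above $1$ by part 1(a) of Lemma~\ref{lemma:list}: there the true sign is $\sgn(m(q)-n)=-1$ (part 4(a)iii of Lemma~\ref{lemma:list}), while your formula yields $+1$. The correct statement for $q>1$ is $\sgn(O)=\sgn(m(q)-n)$ with no dependence on $\ell$; the paper obtains this because its function $D(q,n)$ in \eqref{eq:formOalt} vanishes at \emph{both} $n=m(q)$ and $n=\ell(q)$ when $q>1$ and is convex in $n$, so the sign flip of the factor $\ell(q)-n$ in the denominator is exactly compensated. (The same failure occurs for $R>1$ on $1<q<R/(R-1)$, where $\Delta>0$ identically.)

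Part (1) is also not proved. You correctly identify that $\Delta(1,n)=0$ on $(1-R)n<(1-R)\ell(1)$ (via $\varphi(1,n)=b_1(n-\ell(1))$ and $E(1)=0$), that both terms of $O$ blow up like $(1-q)^{-1}$, and that the singular parts must cancel; but the cancellation and finiteness of the limit \emph{are} the content of the claim, and you explicitly defer that computation. Note the cancellation is second order: near $q=1$ both the numerator $\Delta-2(1-R)q(1-q)$ and the correction coming from the radical are of size $(1-q)^{2}$, so one must track those coefficients to recover the limit \eqref{eq:limatone}. The paper's $D$-representation turns this into a one-line L'H\^{o}pital argument: $D(1,n)=0$ and $\frac{\partial}{\partial q}D(q,n)\big|_{q=1}=-2b_1R(n-m(1))$ give $F(1,n)=-\frac{(1-R)(n-m(1))}{\ell(1)-n}$ immediately. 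In summary: part (2) stands, part (3) contains an actual error for $q>1$, and part (1) is a sketch whose missing step is the whole proof.
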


Now we apply the same transformations which took \eqref{eq:hjb} to \eqref{eq:theode} to the value function on the purchase and sale regime. For $-\frac{1}{\lambda}\leqslant p<p_{*}$, $G(p)=A_{*}\left(1+\lambda p\right)^{1-R}$ as given by \eqref{eq:buyregime}.
Then
\[ w(h) = p(1-p) \frac{dh}{dp} = p(1-p)(1-R)h \left[ \frac{\lambda}{1+\lambda p} + \frac{1}{1-p} \right] = (1-R)h \left[ \frac{p(1+\lambda)}{1 + \lambda p} \right] \]
and $|1-W(h)| = \frac{|1-p|}{1+\lambda p} = \left( \frac{A_*}{|h|} \right)^{1/(1-R)}$. It follows that
$n(q)=(A_{*})^{-1/R}$. This expression holds for $-\frac{1}{\lambda}\leqslant p<p_{*}$ on which $q=W(h)=\frac{(1+\lambda)p}{1+\lambda p}$. The equivalent range in $q$ is thus given by $q<q_{*}:=\frac{(1+\lambda)p_{*}}{1+\lambda p_{*}}$.  Similarly on the sale region we have $n(q)=(A^{*})^{-1/R}$ for $q>q^{*}:=\frac{(1-\gamma)p^{*}}{1-\gamma p^{*}}$.

The $C^{2\times2\times1\times1}$ smoothness of the original value function $\mathcal{V}$ now translates into $C^{1}$ smoothness of the transformed value function $n$. Hence we are looking for a continuously differentiable function $n$ and boundary points $(q_{*},q^{*})$ solving \eqref{eq:theode} on $q\in(q_{*},q^{*})$ with  $n(q)=(A_{*})^{-1/R}$ for $q \leq q_*$ and $n(q)=(A^{*})^{-1/R}$ for $q \geq q^*$. First order smoothness of $n$ at the boundary points forces $n'(q_{*})=n'(q^{*})=0$. By Lemma \ref{lem:shortlist}, $n'(q)=O(q,n(q))=0$ if and only if $n(q)=m(q)$. Hence the free boundary points must be given by the $q$-coordinates where $n$ intersects the quadratic $m$. The free boundary value problem now becomes solving $n'(q)=O(q,n(q))$ on $q\in(q_{*},q^{*})$ subject to $n(q_{*})=m(q_{*})$ and $n(q^{*})=m(q^{*})$.

As an example, suppose $R<1$ and $m_M>0$. Fix $u \in (0,q_M)$. Then the solution to \eqref{eq:theode} started at $(u,m(u))$ is decreasing; we are interested in when this solution crosses $m$ again; call this point $\zeta(u)$. Then we have a family of solutions $(n_u(q))_{u \leq q \leq \zeta(u)}$ to \eqref{eq:theode} with $n(u) = m(u)$ and $n(\zeta(u))=m(\zeta(u))$. The solution we want is the one which is consistent with the given transaction costs. Our approach is based on the same idea as in \cite{HobsonTseZhu:16}. Let $\xi=\frac{\lambda+\gamma}{1-\gamma}>0$ be the round-trip transaction cost. Suppose for now $1\notin[p_{*},p^{*}]$ and in turn $1\notin[q_{*},q^{*}]$. Exploiting the relationships that $q_{*}=\frac{(1+\lambda)p_{*}}{1+\lambda p_{*}}$ and  $q^{*}=\frac{(1-\gamma)p^{*}}{1-\gamma p^{*}}$, we have
\[ \ln(1+\xi) = \ln(1+\lambda) - \ln(1-\gamma) = \int_{p_{*}}^{p^{*}}\frac{dp}{p(1-p)}-\int_{q_{*}}^{q^{*}}\frac{dq}{q(1-q)} . \]
Then, using the definitions of $w$, $N$ and $O$,
\begin{eqnarray}
\ln(1+\xi)
&=&\int_{h_{*}}^{h^{*}}\frac{dh}{w(h)}-\int_{q_{*}}^{q^{*}}\frac{dq}{q(1-q)} \nonumber \\
&=&\int_{q_{*}}^{q^{*}}\frac{N'(q)dq}{(1-R)qN(q)}-\int_{q_{*}}^{q^{*}}\frac{dq}{q(1-q)} \nonumber \\
&=&\int_{q_{*}}^{q^{*}}\frac{R}{q(1-R)}\left(\frac{N'(q)}{RN(q)}-\frac{1-R}{R(1-q)}\right)dq \nonumber\\
&=&\int_{q_{*}}^{q^{*}}\left(-\frac{R}{q(1-R)}\frac{O(q,n(q))}{n(q)}\right)dq \label{eq:tci}
\end{eqnarray}
where to get the last line we use the fact that $\frac{O(q,n(q))}{n(q)}=\frac{n'(q)}{n(q)}=\frac{1-R}{R(1-q)}-\frac{1}{R}\frac{N'(q)}{N(q)}$. Hence the required solution from the free boundary value problem is the one such that
\begin{align}
\ln(1+\xi)=\int_{q_{*}}^{q^{*}}\left(-\frac{R}{q(1-R)}\frac{O(q,n(q))}{n(q)}\right)dq
\label{eq:trancost}
\end{align}
holds.

In the case where $1 \in  [p_{*}, p^{*}]$ or equivalently $1 \in [q_{*}, q^{*}]$, the integrals $\int_{p_{*}}^{p^{*}}\frac{dp}{p(1-p)}$ and $\int_{q_{*}}^{q^{*}}\frac{dq}{q(1-q)}$ are not well defined. But it can be shown that \eqref{eq:trancost} still holds using a limiting argument, see Appendix~\ref{app:tcc}.

To summarise, we would like to solve the following:
\begin{quotation}
\noindent \textit{(The free boundary value problem) find a positive function $n(\cdot)$ and a pair of boundary points $(q_{*},q^{*})$ solving
\begin{eqnarray}
n'(q)=O(q,n(q)), && q\in[q_{*},q^{*}] \nonumber \\
n(q_{*})=m(q_{*}), && n(q^{*})=m(q^{*})
\label{eq:freebound}
\end{eqnarray}
and \eqref{eq:trancost}}.
\end{quotation}

In Section \ref{sect:fbp}, we distinguish several different cases and discuss how to construct the solution $(n(\cdot),q_{*},q^{*})$ in each of these cases.

The central role played by the quadratic $m$ is clear from \eqref{eq:freebound}. The function $\ell$ acts as a bound on the feasible solutions to $n'=O(n,q)$, at least for $0<q \leq 1$. Suppose, for example, that $R<1$. Then for $q \in [q_*,q^*]$ we have $n(q) \geq m(q)$ by construction, but also $n(q) \leq \ell(q)$ for $q_* \leq q \leq q^* \wedge 1$. Moreover, the value $\ell(1)$ is crucial in determining when the problem is ill-posed.

\section{Main results}
\label{sec:main}

In Section \ref{sect:hjb_derive} we converted the original HJB equation into the free boundary value problem \eqref{eq:freebound}. Now we argue that, given a solution $(n(\cdot),q_{*},q^{*})$ to \eqref{eq:freebound} we can reverse the transformations and construct a candidate value function. 

Suppose there exists a solution $(n(\cdot),q_{*},q^{*})$ to \eqref{eq:freebound} with $n$ being strictly positive. Define $p_{*}=\frac{q_{*}}{1+\lambda(1-q_{*})}$ and  $p^{*}=\frac{q^{*}}{1-\gamma(1-q^{*})}$. Let $N(q)=\sgn(1-q)n(q)^{-R}|1-q|^{R-1}$, $W=N^{-1}$ and $w(h)=(1-R)hW(h)$. We would like to construct the candidate value function from $G(p) = \sgn(1-p) |1-p|^{1-R} h(p)$ where $h$ solves $\frac{dh}{dp} = \frac{w(h)}{p(1-p)}$.
The main subtlety is that $\frac{w(h)}{p(1-p)}$ is not well-defined at $p=1$. Nonetheless, the definition of $G$ at $p=1$ can be understood in a limiting sense. To this end, we distinguish two different cases based on whether $(q_*-1)$ and $(q^*-1)$ have the same sign or not, or equivalently whether the no-transaction wedge, plotted in $(x, y \theta)$ space, includes the vertical axis $x=0$ (corresponding to $p=1$).
\begin{prop}
(i) Suppose $1\notin[p_{*},p^{*}]$. Define $h(p)$ via
\begin{align}
\int_{N(q_{*})}^{h(p)}\frac{du}{w(u)}=\int_{p_{*}}^{p}\frac{du}{u(1-u)}
\label{eq:def_h_1}
\end{align}
on $p_{*}\leqslant p\leqslant p^{*}$. Then \eqref{eq:def_h_1} is equivalent to
\begin{align}
\int_{h(p)}^{N(q^{*})}\frac{du}{w(u)}=\int_{p}^{p^{*}}\frac{du}{u(1-u)}
\label{eq:def_h_2}
\end{align}
and \eqref{eq:def_h_2} is an alternative definition of $h(p)$.

Let
\begin{align*}
G^{C}(p)=
\begin{cases}
n(q_{*})^{-R}\left(1+\lambda p\right)^{1-R},& p\in[-\frac{1}{\lambda},p_{*}); \\
\sgn(1-p)|1-p|^{1-R}h(p),&p\in[p_{*}, p^{*}]; \\
n(q^{*})^{-R}\left(1-\gamma p\right)^{1-R},& p\in(p^{*},\frac{1}{\gamma}].
\end{cases}
\end{align*}
Then $G^{C}$ is a $C^{2}$ function on $(-\frac{1}{\lambda},\frac{1}{\gamma})$.
Moreover $\frac{(x+y \theta)^{1-R}}{1-R} G^C(\frac{y \theta}{x + y \theta})$ is strictly increasing and strictly concave in $x$.

(ii)
Suppose $1\in[p_{*},p^{*}]$. Define $h(p)$ via
\begin{align*}
\begin{cases}
\int_{N(q_{*})}^{h(p)}\frac{du}{w(u)}=\int_{p_{*}}^{p}\frac{du}{u(1-u)},& p_{*}<p<1;\\
\int_{h(p)}^{N(q^{*})}\frac{du}{w(u)}=\int_{p}^{p^{*}}\frac{du}{u(1-u)},&1<p<p^{*}.
\end{cases}
\end{align*}
Let
\begin{align*}
G^{C}(p)=
\begin{cases}
n(q_{*})^{-R}\left(1+\lambda p\right)^{1-R},& p\in[-\frac{1}{\lambda},p_{*}); \\
\sgn(1-p)|1-p|^{1-R}h(p),& p\in[p_{*}, p^{*}]\setminus\{1\}; \\
n(1)^{-R}e^{-(1-R)a},& p=1;\\
n(q^{*})^{-R}\left(1-\gamma p\right)^{1-R},& p\in(p^{*},\frac{1}{\gamma}]
\end{cases}
\end{align*}
with $a:=-\int_{q_{*}}^{1}\left(\frac{R}{q(1-R)}\frac{O(q,n(q))}{n(q)}\right)dq-\ln(1+\lambda)$. Then $|a|\leqslant \ln(1+\xi)$, and $G^{C}$ is a $C^{2}$ function on $(-\frac{1}{\lambda},\frac{1}{\gamma})$. Moreover $\frac{(x+y \theta)^{1-R}}{1-R} G^C(\frac{y \theta}{x + y \theta})$ is strictly increasing and strictly concave in $x$.
\label{prop:def_g}
\end{prop}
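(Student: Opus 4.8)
The plan is to reverse, step by step, the transformation that carried \eqref{eq:hjb} into \eqref{eq:theode}, and to check that the reconstructed $G^{C}$ inherits the smoothness and convexity that were merely assumed on the forward pass; I treat case (i) first and obtain (ii) by a limiting argument across $p=1$. For the equivalence of \eqref{eq:def_h_1} and \eqref{eq:def_h_2} I would add the two relations, reducing the claim to the single consistency identity $\int_{N(q_{*})}^{N(q^{*})}\frac{du}{w(u)}=\int_{p_{*}}^{p^{*}}\frac{dp}{p(1-p)}$. Substituting $u=N(q)$ on the left and using $w(N(q))=(1-R)qN(q)$ turns this into $\int_{q_{*}}^{q^{*}}\frac{N'(q)}{(1-R)qN(q)}dq$; the relation $\frac{N'}{N}=\frac{1-R}{1-q}-R\frac{O(q,n)}{n}$ then splits it as $\int_{q_{*}}^{q^{*}}\frac{dq}{q(1-q)}+\ln(1+\xi)$ via \eqref{eq:trancost}. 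Finally the algebraic identities $\frac{q_{*}}{1-q_{*}}=(1+\lambda)\frac{p_{*}}{1-p_{*}}$ and $\frac{q^{*}}{1-q^{*}}=(1-\gamma)\frac{p^{*}}{1-p^{*}}$ give $\int_{q_{*}}^{q^{*}}\frac{dq}{q(1-q)}=\int_{p_{*}}^{p^{*}}\frac{dp}{p(1-p)}-\ln(1+\xi)$, so the two $\ln(1+\xi)$ terms cancel and the identity holds.

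For the $C^{2}$ claim in case (i), each of the three pieces is manifestly smooth: the purchase and sale pieces are power functions, and on $(p_{*},p^{*})$ the function $h$ solves $h'=w(h)/(p(1-p))$ with a smooth right-hand side, provided $N$ is a $C^{1}$ diffeomorphism so that $W=N^{-1}$ and $w=(1-R)hW(h)$ are $C^{1}$. I would verify $N'\neq0$ on the no-transaction region (which holds away from $q=0$, and $q=0$ lies left of $q_{*}$ under $b_{3}>0$), giving $h\in C^{2}$ and hence $G^{C}\in C^{2}$ there. At the junctions $p_{*},p^{*}$ value matching follows from $h(p_{*})=N(q_{*})$ together with $1-q_{*}=(1-p_{*})/(1+\lambda p_{*})$. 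The decisive point for first- and second-order fit is that the free boundary conditions force $n'(q_{*})=n'(q^{*})=0$: indeed $n(q_{*})=m(q_{*})$ gives $O(q_{*},n(q_{*}))=0$ by Lemma~\ref{lem:shortlist}(2), while $n'\equiv0$ on the constant pieces, so $n$ is $C^{1}$ across $q_{*},q^{*}$. Pushing this $C^{1}$ regularity of $n$ back through $N$, $W$, $w$ and $h$ yields the $C^{2}$ matching of $G^{C}$, exactly reversing the correspondence between $C^{2}$ smoothness of the value function and $C^{1}$ smoothness of $n$ set up in Appendix~\ref{subsect:hjb}.

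For monotonicity and concavity write $F(x)=\frac{(x+y\theta)^{1-R}}{1-R}G^{C}(p)$ with $p=\frac{y\theta}{x+y\theta}$. Using $G^{C}(p)=\sgn(1-p)|1-p|^{1-R}h(p)$, $h'=w(h)/(p(1-p))$, $w(h)=(1-R)qh$, and the identity $(x+y\theta)|1-p|=|x|$, a direct computation gives $F_{x}=|x|^{-R}|1-q|^{R}n(q)^{-R}=Z^{-R}$ with $Z:=|x|\,n(q)/|1-q|$, which is strictly positive since $n>0$ and $x+y\theta>0$; this is strict monotonicity. For concavity it suffices to show $Z$ is strictly increasing in $x$, since then $F_{xx}=-RZ^{-R-1}\frac{dZ}{dx}<0$. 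Taking $q$ as independent variable I find $\frac{d\ln|x|}{dq}=\frac{D_{e}}{q(1-R)}$ and $\frac{d\ln Z}{dq}=\frac{N_{u}}{q(1-R)}$ with $D_{e}=R\frac{n'}{n}-\frac{1-R}{1-q}=-\frac{N'}{N}$ and $N_{u}=((1-R)q+R)\frac{n'}{n}-(1-R)$, so $\frac{dZ}{dx}=\frac{Z}{x}\frac{N_{u}}{D_{e}}$; substituting the explicit form \eqref{eq:formO} of $O=n'$ collapses the ratio to
\[
\frac{N_{u}}{D_{e}}=\frac{\varphi(q,n)+\sgn(1-R)\sqrt{\varphi(q,n)^{2}+E(q)^{2}}}{2R(1-R)(1-q)}.
\]
Because $b_{2}>1$ forces $E(q)^{2}>0$ for $q\neq1$, the numerator has the strict sign of $\sgn(1-R)$, and a short sign count gives $\sgn(N_{u}/D_{e})=\sgn(1-q)=\sgn(x)$; hence $\frac{dZ}{dx}=\frac{Z}{x}\frac{N_{u}}{D_{e}}>0$ and $F_{xx}<0$.

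Finally, case (ii), where $p=1$ ($q=1$) lies inside the wedge and both $|1-p|^{1-R}$ and the ODE are singular. I would show $G^{C}(p)=n(q)^{-R}\big(\tfrac{1-q}{1-p}\big)^{R-1}$ on $p<1$, with the mirror expression on $p>1$, so that continuity at $p=1$ amounts to the existence of $L:=\lim_{p\to1}\frac{1-q}{1-p}$, and $a=\ln L$; evaluating the limit through the integral representation reproduces the stated formula for $a$. The bound $|a|\le\ln(1+\xi)$ then follows from $a=I(q_{*},1)-\ln(1+\lambda)=-\ln(1-\gamma)-I(1,q^{*})$, where $I(q_{1},q_{2}):=-\int_{q_{1}}^{q_{2}}\frac{R}{q(1-R)}\frac{O(q,n)}{n}dq$, together with the monotonicity of $n$ on the no-transaction region (decreasing if $R<1$, increasing if $R>1$), which makes both $I(q_{*},1)\ge0$ and $I(1,q^{*})\ge0$, so that $-\ln(1+\xi)\le-\ln(1+\lambda)\le a\le-\ln(1-\gamma)\le\ln(1+\xi)$. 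The genuinely delicate step, which I expect to be the main obstacle, is the $C^{2}$ extension of $G^{C}$ through $p=1$: it rests on the removable singularity of $O$ at $q=1$ guaranteed by Lemma~\ref{lem:shortlist}(1) on the feasible range $(1-R)n<(1-R)\ell(1)$, which lets $n$ cross $q=1$ in a $C^{1}$ manner, combined with the limiting computation of Appendix~\ref{app:tcc} to match the two-sided second derivatives there. Everything else is bookkeeping once the sign of $N_{u}/D_{e}$ and the monotonicity of $n$ are in hand.
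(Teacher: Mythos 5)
Your case (i) is correct and follows essentially the paper's route. The equivalence of \eqref{eq:def_h_1} and \eqref{eq:def_h_2} via the identity $\frac{N'}{N}=\frac{1-R}{1-q}-R\frac{O}{n}$ and the consistency condition \eqref{eq:trancost} is exactly the paper's computation, and your treatment of the junctions $p_*,p^*$ (boundary conditions force $n'(q_*)=n'(q^*)=0$, so the constant extensions glue $C^1$ in $q$, which pulls back to $C^2$ gluing of $G^C$) is the same mechanism the paper verifies by direct computation of $G,G',G''$ limits. Your concavity argument is the one genuinely different-looking piece: writing $F_x=Z^{-R}$ with $Z=|x|n(q)/|1-q|$ and computing
\begin{equation*}
\frac{N_u}{D_e}=\frac{\varphi(q,n)+\sgn(1-R)\sqrt{\varphi(q,n)^{2}+E(q)^{2}}}{2R(1-R)(1-q)}
\end{equation*}
is correct (I checked the algebra against \eqref{eq:formO}), and since $b_2>1$ gives $E(q)^2>0$ for $q\neq1$ the sign count works. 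But this is really the paper's argument for \eqref{eq:wconc} made explicit: the sign you are exploiting is precisely the choice of root in \eqref{eq:ABC}. A worthwhile presentational alternative, not a new idea.

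Case (ii) has a genuine gap. You correctly identify the $C^2$ extension through $p=1$ as "the main obstacle", but you do not carry it out, and the tool you point to — Appendix~\ref{app:tcc} — is not the right one: that appendix only justifies the transaction-cost identity \eqref{eq:trancost} when $1\in[q_*,q^*]$; it says nothing about derivative matching. What is actually needed (and what the paper does) is: (a) continuity of $G$ \emph{and} $G'$ at $p=1$, which reduces to showing $\lim_{p\to1}\frac{pG'(p)}{(1-R)G(p)}=1-e^a$ from \emph{both} sides; this requires computing $\lim_{p\uparrow1}\frac{1-W(h(p))}{1-p}=e^a$ (equation \eqref{eq:limit_exp}) directly from the integral definition of $h$, and repeating the computation for $p>1$; and (b) existence and two-sided agreement of $G''(1)$, which the paper obtains from the identity \eqref{eq:2nd_deriv_expression} expressing the relevant second-order combination in terms of $n'(q)/n(q)$, together with the crucial non-degeneracy $n'(1)/n(1)-(1-R)\neq0$ (which holds because $\sgn(n'(1))=-\sgn(1-R)$). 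Note that your own concavity machinery cannot simply be pushed across $p=1$: at $q=1$ one has $E(1)=0$ and $\sgn(\varphi(1,n))=-\sgn(1-R)$ on the feasible range, so the numerator of $N_u/D_e$ vanishes along with the denominator, and you face exactly the $0/0$ limit that steps (a)--(b) resolve. Finally, you omit the boundary sub-cases $p_*=1$ and $p^*=1$, which the paper handles separately by checking that $a=-\ln(1+\lambda)$ (respectively $a=-\ln(1-\gamma)$) makes the piecewise definitions agree.
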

{\cblue Proposition~\ref{prop:def_g} is proved in Appendix~\ref{app:prop}.}

The first pair of main results of this paper are summarised in the following two theorems. For a given set of risk aversion parameter $R$, discount factor $\delta$ and market parameters $r$, $\mu$, $\sigma$, $\alpha$, $\eta$, $\rho$, we say the problem is ({\cblue unconditionally}) well-posed if the value function is finite on the interior of the solvency region for all values of the transaction costs {\cblue $\lambda \geq 0$ and $\gamma \in [0,1)$ with $\lambda + \gamma>0$}. We say the problem is ill-posed if the value function is infinite for all $\lambda$ and $\gamma$. We say the problem is conditionally well-posed if the problem is well-posed for large values of the round-trip transaction cost, but ill-posed for small values. Theorems~\ref{thm:wellposed} and \ref{thm:valfun} are proved in Appendix~\ref{app:pfmain}.

\begin{thm}
The investment/consumption problem is:
\begin{enumerate}
\item well-posed in either of the following cases:
	\begin{enumerate}
		\item $R>1$,
		\item $R<1$ and $m_{M} \geq 0$;
	\end{enumerate}
\item ill-posed if $R<1$, $m_{M}<0$ and $\ell(1)\leqslant0$;
\item conditionally well-posed if $R<1$, $m_{M}<0$ and $\ell(1)>0$. In this case the problem is well-posed if and only if $\xi > \overline{\xi}$
where $\overline{\xi}$ is defined in \eqref{eq:critical_trancost} below.
\end{enumerate}
\label{thm:wellposed}
\end{thm}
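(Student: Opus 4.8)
\emph{Proof strategy for Theorem~\ref{thm:wellposed}.} The plan is to show that, for a fixed round-trip cost $\xi$, well-posedness is equivalent to solvability of the free boundary value problem \eqref{eq:freebound}--\eqref{eq:trancost}, and then to determine the set of $\xi$ for which a solution exists. For the ``solution $\Rightarrow$ finite value'' direction I would feed a solution $(n(\cdot),q_{*},q^{*})$ into Proposition~\ref{prop:def_g} to build $G^{C}$ and the candidate value function, and invoke the verification argument to identify it with $V$; finiteness is then immediate. For the converse ``no solution $\Rightarrow$ infinite value'' direction I would exhibit, in each ill-posed regime, a sequence of admissible strategies whose expected discounted utility diverges, exploiting that the obstruction to a finite no-transaction wedge is precisely the ill-posedness of the underlying frictionless problem. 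The engine of the classification is a one-parameter study of $n'=O(q,n)$: for $R<1$ I fix the left endpoint $u=q_{*}$, launch the solution from $(u,m(u))$, and (as in the example preceding \eqref{eq:freebound}) track the first return $\zeta(u)=q^{*}$ to the quadratic $m$. By Lemma~\ref{lem:shortlist}(ii) the free-boundary conditions hold automatically, and Lemma~\ref{lem:shortlist}(iii) fixes the sign of $O$, hence the monotonicity of $n$, throughout. I then define the round-trip cost along this curve,
\[
T(u) := \int_{u}^{\zeta(u)} \left(-\frac{R}{q(1-R)}\frac{O(q,n_{u}(q))}{n_{u}(q)}\right) dq ,
\]
and reduce the theorem to computing the range of $T$: the problem is well-posed at $\xi$ if and only if $\ln(1+\xi)$ lies in this range.

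Cases (1a) and (1b) are exactly the regimes with $m_{M}\geq 0$; note $R>1$ forces $m_{M}=1-\frac{b_{3}^{2}(1-R)}{4b_{1}R}>1$, so (1a) sits inside this class. Here the turning point $q_{M}$ is itself a feasible degenerate wedge, and I would show $T(u)\to 0$ as the wedge collapses onto $q_{M}$ (the Merton limit) and $T(u)\to\infty$ at the opposite end of the feasible range, so that $T$ maps onto $(0,\infty)$ and every $\xi>0$ is attained. For $R>1$ the same device applies with the direction of the launched solution and the inequalities between $n$ and $m$ reversed.

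The delicate regime is $R<1$, $m_{M}<0$, where $m$ has two positive roots $q_{1}<q_{M}<q_{2}$ and is negative in between. Positivity of $n$ forbids starting the wedge inside $[q_{1},q_{2}]$, so any admissible curve must run from $u<q_{1}$ across the negative region to $\zeta(u)>q_{2}$, crossing $q=1$ en route whenever $m(1)<0$. The decisive fact is Lemma~\ref{lem:shortlist}(i): for $R<1$ a solution can be continued through $q=1$ only while $n<\ell(1)$. A short computation gives $\ell(1)=m(1)+\frac{(b_{2}-1)R(1-R)}{b_{1}}$, so $\ell(1)\leq 0$ forces $m(1)<0$, i.e.\ $1\in(q_{1},q_{2})$; then every candidate curve must cross $q=1$ inside the negative region, where $n>0\geq\ell(1)$ is incompatible with continuation as a positive solution. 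Hence no curve connects the two sides, \eqref{eq:freebound} has no solution for any $\xi$, and the problem is ill-posed, giving case (2). When instead $\ell(1)>0$, curves can be continued across $q=1$ and a genuine family $(n_{u})$ exists; but because the Merton point $q_{M}$ is now infeasible ($m(q_{M})=m_{M}<0$) the wedge cannot shrink to zero, so I would establish $\inf_{u}T(u)=:\ln(1+\overline{\xi})>0$ while $\sup_{u}T(u)=\infty$, whence $T$ is onto $(\ln(1+\overline{\xi}),\infty)$. This is case (3), with $\overline{\xi}$ the value recorded in \eqref{eq:critical_trancost} and well-posedness holding precisely for $\xi>\overline{\xi}$.

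The main obstacle is the asymptotic analysis of $T(u)$ at the ends of the feasible range. Proving $T(u)\to 0$ as the wedge collapses, and in particular identifying the strictly positive limit in the conditionally well-posed case, requires controlling $n_{u}$ uniformly as it passes near the zeros of $m$ and across $q=1$, where $O$ is defined only by continuity; the interplay between $m$ and the barrier $\ell$ near $q=1$ is exactly what separates $\inf T=0$ from $\inf T=\ln(1+\overline{\xi})>0$. The second genuinely hard point is the ``no solution $\Rightarrow$ infinite value'' implication, for which the free boundary analysis alone does not suffice: one must convert non-existence of the wedge into unboundedness of the value function by producing explicit diverging strategies or a super-solution obstruction.
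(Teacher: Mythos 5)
Your architecture for the well-posed cases coincides with the paper's: parametrise solutions $n_u$ of $n'=O(q,n)$ by the left boundary point, track the first return $\zeta(u)$ to the quadratic $m$, and compute the range of the round-trip-cost functional (your $T(u)$ is the paper's $\ln(1+\Sigma(u))$ from \eqref{eq:trancost2}, and your claimed range properties are exactly Lemma~\ref{lemma:onto}); combined with Proposition~\ref{prop:def_g} and the verification argument in Appendices~\ref{subsect:candidate_valfun} and~\ref{app:pfmain}, this yields $V=V^C$ whenever $\ln(1+\xi)$ is attained, which is the paper's route. The genuine gap is in the ill-posed directions, which you correctly flag as the hard point but leave unfilled: non-existence of a solution to \eqref{eq:freebound} and \eqref{eq:trancost} does not by itself imply that the value function is infinite (a priori $V$ could be finite without having the conjectured smooth wedge structure), so case (2), and case (3) for $\xi\leq\overline{\xi}$, each require a separate positive argument, and this is precisely where the paper's proof has content that your proposal does not supply.

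The paper closes this gap with two devices, neither of which appears in your plan. For case (2) it does not argue through the ODE at all: it observes that $\ell(1)\leq 0$ is algebraically equivalent to $b_3\geq \frac{b_1}{1-R}+b_2R$, i.e.\ to $\alpha\geq \frac{1}{2}\eta^2R+\frac{\delta}{1-R}$, which is exactly the known unconditional ill-posedness condition for the single-risky-asset transaction cost problem in \cite{HobsonTseZhu:16} and \cite{ChoiSirbuZitkovic:13}; since any strategy ignoring the liquid asset $S$ is admissible in the present problem, infinite utility in that sub-problem transfers to the full problem (this is also the content of Corollary~\ref{cor:wellposed}). Note that your heuristic---that the obstruction is ``the ill-posedness of the underlying frictionless problem''---cannot be right as stated, because the frictionless problem is ill-posed (that is, $m_M<0$) in case (3) as well, where the problem is nevertheless well-posed for large $\xi$; the correct distinguishing object is the illiquid-asset-only problem with transaction costs. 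For case (3) with $\xi\leq\overline{\xi}$ the paper avoids constructing diverging strategies altogether: by Lemma~\ref{lemma:onto}, $n(\cdot)\downarrow 0$ as $\xi\downarrow\overline{\xi}$, hence $V^C\to\infty$ by its construction; since $V\geq V^C$ for every $\xi>\overline{\xi}$ and $V$ is monotone decreasing in the transaction costs, it follows that $V=\infty$ at $\xi=\overline{\xi}$ and hence for all smaller $\xi$. Both devices are short, but they are the decisive steps; without them, or an explicit construction of strategies with diverging utility, your proof of parts (2) and (3) is incomplete.
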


{\cblue Note that, if $R<1$ then $m_M>0$ is necessary and sufficient for the problem with transaction  costs set to zero to be well-posed.
Further, if $R<1$ and $m_M=0$ and $\lambda = 0 = \gamma$ (a case we have excluded) then the problem is ill-posed for zero transaction costs, but well-posed for non-zero transaction costs.

The following result follows from the proof of Theorem~\ref{thm:wellposed} in the ill-posed case and relies on the fact that in this case there is an admissible strategy which generates infinite expected utility without investing in the liquid risky asset $S$.

\begin{cor}
\label{cor:wellposed}
The problem with one risky liquid asset and one illiquid risky asset is ill-posed (for all values of transaction costs) if and only if the problem with the
risky liquid asset omitted is ill-posed (for all values of transaction costs).
\end{cor}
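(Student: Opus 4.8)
The plan is to prove the two implications separately, working at the level of admissible strategies and value functions rather than attempting a direct comparison of the parameter conditions $m_M<0$, $\ell(1)\leq 0$. A direct comparison would force me to relate the auxiliary parameters $b_1,b_3$ of the two-asset problem to those of the one-asset problem, and this is awkward because omitting $S$ (equivalently setting $\beta=0$, $\rho=0$) changes both $b_1$ and $b_3$ as well as sending $b_2\to 1$.

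First I would dispose of the implication ``illiquid-only ill-posed $\Rightarrow$ full problem ill-posed'' by monotonicity. Any strategy admissible for the market with $S$ omitted is, upon setting $\Pi\equiv 0$, admissible for the full two-asset problem: when $\Pi_t\equiv 0$ the dynamics of $X$ reduce exactly to the one-asset dynamics, the solvency constraint is unchanged, and the consumption payoff is identical. Hence $V_{\mathrm{full}}(x,y,\theta)\geq V_{\mathrm{ill}}(x,y,\theta)$ for every initial datum and every fixed pair $(\lambda,\gamma)$. Recalling that ill-posedness forces $R<1$ by Theorem~\ref{thm:wellposed}, so that ill-posedness means the value equals $+\infty$, if the illiquid-only problem is ill-posed for all transaction costs then $V_{\mathrm{full}}=+\infty$ for every $(\lambda,\gamma)$ too, i.e. the full problem is ill-posed for all transaction costs.

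The reverse implication is where the content lies, since here monotonicity points the wrong way: the two-asset value being infinite does not by itself force the one-asset value to be infinite, because the two-asset agent still has access to $S$. Here I would invoke the fact supplied by the proof of Theorem~\ref{thm:wellposed}(2): when $R<1$, $m_M<0$ and $\ell(1)\leq 0$, the ill-posedness of the full problem is witnessed by an admissible strategy that never invests in the liquid asset, i.e. has $\Pi\equiv 0$, yet attains infinite expected discounted utility. Because this witnessing strategy keeps $\Pi\equiv 0$, it is literally a strategy for the market with $S$ omitted; it remains admissible there and yields the same infinite payoff, so the illiquid-only problem is ill-posed. The ``for all transaction costs'' clause is inherited because the divergence comes from the long-run risk-adjusted growth of the illiquid holding, whereas the transaction cost is incurred only at the single initial rebalancing of the position in $Y$, so the strategy and its divergent utility persist for every $(\lambda,\gamma)$.

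I expect the main obstacle to be the verification of the $\Pi\equiv 0$ property of the infinite-utility strategy, which is exactly the ingredient imported from the proof of Theorem~\ref{thm:wellposed}: one must check that in the regime $R<1$, $m_M<0$, $\ell(1)\leq 0$ the divergence of expected utility can be produced by the illiquid asset acting alone, with no trading in $S$, the liquid asset being able only to improve matters. A secondary bookkeeping point, present in both directions, is that the trichotomy of Theorem~\ref{thm:wellposed} identifies ``ill-posed for all transaction costs'' with case~(2), so that the asserted equivalence is genuinely an equivalence between the two problems' case-(2) regimes.
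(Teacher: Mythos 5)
Your proposal is correct and is essentially the paper's own argument: the paper proves Corollary~\ref{cor:wellposed} precisely by combining the trichotomy of Theorem~\ref{thm:wellposed} with the fact, supplied by the proof of its ill-posed case, that ill-posedness is witnessed by an admissible strategy that never invests in $S$ (the paper obtains that witness by checking that $R<1$ and $\ell(1)\leqslant 0$ translate, in the original parameters, into the known one-asset ill-posedness condition $\alpha\geqslant \tfrac{1}{2}\eta^{2}R+\tfrac{\delta}{1-R}$, and your strategy-inclusion direction is exactly the paper's ``it is sufficient to show that the problem without the liquid asset is ill-posed''). One minor tidy-up: the ``for all transaction costs'' clause needs no heuristic about costs being incurred only at the initial rebalancing --- the case-2 parameter conditions do not involve $(\lambda,\gamma)$, so for each fixed $(\lambda,\gamma)$ separately the witnessing $\Pi\equiv 0$ strategy exists and transfers to the one-asset market with that same $(\lambda,\gamma)$.
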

}

\begin{thm}
Suppose the parameters are such that the problem is well-posed. Set
\begin{align*}
V^{C}(x,y,\theta)=\left(\frac{b_{1}}{Rb_{4}}\right)^{-R}\frac{(x+y\theta)^{1-R}}{1-R}G^{C}\left(\frac{y\theta}{x+y\theta}\right)
\end{align*}
where $G^{C}$ is as defined as in the relevant case of Proposition \ref{prop:def_g}. Then $V^{C}=V$ where $V$ is the value function of the investment/consumption problem defined in \eqref{eq:thevalfun}.
\label{thm:valfun}
\end{thm}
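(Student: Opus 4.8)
The statement is a verification theorem, so the plan is to prove the two inequalities $V \le V^C$ and $V \ge V^C$ separately, using the process
\[
M_t := \int_0^t e^{-\delta s}\frac{C_s^{1-R}}{1-R}\,ds + e^{-\delta t}V^C(X_t,Y_t,\Theta_t).
\]
Proposition~\ref{prop:def_g} already guarantees that $V^C$ is $C^{2\times 2\times 1}$ across the free boundaries (since $G^C$ is $C^2$ on $(-\frac1\lambda,\frac1\gamma)$) and that $V^C$ is strictly increasing and strictly concave in $x$; this is exactly the regularity needed to apply It\^o's lemma as in Section~\ref{sect:hjb_derive} and to justify the first-order conditions below.

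First I would establish the \emph{supermartingale inequality}. For an arbitrary admissible $(C,\Pi,\Theta)$ the It\^o expansion of $M$ computed in Section~\ref{sect:hjb_derive} has drift bounded above by the HJB operator applied to $V^C$, plus the two singular terms $(V^C_\theta-(1+\lambda)yV^C_x)\,d\Phi_t$ and $((1-\gamma)yV^C_x-V^C_\theta)\,d\Psi_t$. The content is therefore to verify the \emph{variational inequalities}: (a) the HJB equation \eqref{eq:hjb} holds with equality on the no-transaction wedge $q\in(q_*,q^*)$ by construction, and the operator is $\le 0$ on the purchase and sale regions where $G^C$ takes the explicit power forms \eqref{eq:buyregime}--\eqref{eq:sellregime}; (b) the gradient constraints $(1-\gamma)yV^C_x\le V^C_\theta\le(1+\lambda)yV^C_x$ hold throughout the solvency region, with equality on the sale and purchase regions respectively. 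Both reduce, after the transformations of Section~\ref{ssec:reduce}, to the inequalities $n\ge m$ on $[q_*,q^*]$ and $n\le\ell$ on $[q_*,q^*\wedge 1]$ recorded after \eqref{eq:freebound}, together with the sign characterisation of $O$ in Lemma~\ref{lem:shortlist}. Granting these, $M$ is a local supermartingale; localising by a sequence of stopping times, taking expectations, and passing to the limit (monotone convergence when $R<1$, Fatou when $R>1$, with the transversality estimate below controlling the terminal term) yields $\mathbb{E}\int_0^\infty e^{-\delta s}\frac{C_s^{1-R}}{1-R}\,ds\le V^C(x,y,\theta)$, hence $V\le V^C$.

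Next I would prove attainability, $V\ge V^C$, by exhibiting the candidate optimal strategy and showing $M$ is then a true martingale. The optimisers of the drift are the feedback controls $C^*_t=(V^C_x)^{-1/R}$ and $\Pi^*_t=-(\beta V^C_x+\eta\rho Y_tV^C_{xy})/(\sigma V^C_{xx})$, while the illiquid holding $\Theta$ is chosen as the minimal (local-time) control that keeps $P_t$ inside $[p_*,p^*]$, reflecting at the two boundaries. Along this strategy the drift vanishes (HJB equality on the no-trade wedge), the $d\Phi$ and $d\Psi$ terms vanish because trading occurs only on the boundaries where the corresponding gradient constraint holds with equality, and the solvency constraint is automatic since $P_t\in[p_*,p^*]\subseteq[-\frac1\lambda,\frac1\gamma]$. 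Thus $M$ is a local martingale; I would upgrade it to a true martingale by checking square-integrability of the stochastic integrals $\int\sigma V^C_x\Pi^*\,dB$ and $\int\eta V^C_yY\,dW$ on finite horizons, and then take $T\to\infty$.

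The main obstacle is the \emph{transversality} (closing) condition $\lim_{T\to\infty}\mathbb{E}\,e^{-\delta T}V^C(X_T,Y_T,\Theta_T)=0$, which is exactly where well-posedness enters and where the two signs of $1-R$ behave differently. Under the candidate strategy the ratio $P_t$ is a reflected (hence bounded-in-$q$) process while the paper wealth $X_t+\Theta_tY_t$ grows geometrically, so $e^{-\delta T}V^C(X_T,\dots)$ is, up to the bounded factor $G^C(P_T)$, a constant multiple of $e^{-\delta T}(X_T+\Theta_TY_T)^{1-R}$; computing its expected growth rate and showing it is strictly negative reduces to positivity of the normalised discount factor $b_1$ and to the well-posedness dichotomy of Theorem~\ref{thm:wellposed} (in the conditionally well-posed case one additionally uses $\xi>\overline{\xi}$). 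For $R>1$ one has $V^C<0$ and the estimate combines with Fatou's lemma to give the supermartingale bound directly; for $R<1$ one has $V^C>0$ and the delicate point is to rule out value leaking to infinity along general admissible strategies, which is handled by the same growth estimate applied at the level of paper wealth. I expect this integrability-at-infinity step, rather than the (essentially algebraic) verification of the variational inequalities, to be the crux of the argument.
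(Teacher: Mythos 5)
Your overall strategy coincides with the paper's: the same value process $M$, a supermartingale bound over arbitrary admissible strategies to get $V\le V^{C}$, and attainability via the feedback controls $C^{*},\Pi^{*}$ together with a reflecting local-time control $\Theta^{*}$ to get $V^{C}\le V$. The problem is that the two steps you dismiss as routine are exactly where the paper has to work, and as written your argument for them does not go through. The gradient constraints on the no-transaction region do \emph{not} reduce to $(1-R)m\le (1-R)n\le (1-R)\ell$ plus the sign characterisation of $O$. After the transformations, $\mathcal{M}V^{C}\le 0$ on $[p_{*},p^{*}]$ is equivalent to \eqref{eq:MV_ineq}, i.e.\ $\frac{1-W(h(p))}{1-p}\ge \frac{1}{1+\lambda p}$, which is a statement about the map $p\mapsto q(p)=W(h(p))$, not about where $n$ sits relative to $m$ and $\ell$. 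The paper proves it by an ODE comparison: the transaction-cost integral identity (whose integrand is nonnegative by the sign properties of $O$) yields $\frac{1}{q(1-q)}\frac{dq}{dp}\le \frac{1}{p(1-p)}$, and one then compares $q(\cdot)$ with the explicit solution $\chi(p)=\frac{(1+\lambda)p}{1+\lambda p}$ through $(p_{*},q_{*})$; when $1\in[p_{*},p^{*}]$ there is a further delicate point at $p=1$, resolved using $q'(1)=e^{a}$ and the bound $a>-\ln(1+\lambda)$ from Proposition~\ref{prop:def_g}(ii). None of this follows from $m\le n\le \ell$ alone, so your claimed reduction is a genuine gap, not a simplification.

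The second gap is the case $R>1$ in the supermartingale step. The localized inequality is $\mathbb{E}\bigl[\int_{0}^{t\wedge\tau_{n}}e^{-\delta s}\frac{C_{s}^{1-R}}{1-R}\,ds\bigr]+\mathbb{E}\bigl[e^{-\delta(t\wedge\tau_{n})}V^{C}(X_{t\wedge\tau_{n}},Y_{t\wedge\tau_{n}},\Theta_{t\wedge\tau_{n}})\bigr]\le V^{C}(x,y,\theta)$, and since $V^{C}<0$ you must bound the terminal term from \emph{below} for an \emph{arbitrary} admissible strategy before Fatou can help; your growth estimate is derived under the candidate optimal (reflected) strategy and says nothing about strategies that drive the state toward the solvency boundary, where $V^{C}\to-\infty$. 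Moreover the local martingale part is not bounded below, so the supermartingale property is not automatic. The paper resolves this with the Davis--Norman device: a perturbed candidate value function that is bounded on the no-transaction wedge, for which the corresponding value process is a genuine supermartingale, followed by a limiting argument in the perturbation. (Two smaller omissions: attainability also requires the initial jump transaction when the starting point lies outside the wedge, using that $V^{C}$ is constant along transaction lines; and the true-martingale and transversality claims under the optimal strategy are themselves nontrivial --- the paper establishes them via bounds on $\sigma\Pi^{*}V^{C}_{x}/V^{C}$ and $\eta yV^{C}_{y}/V^{C}$, deferring details to Tse's thesis.)
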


\section{Solutions to the free boundary value problem}
\label{sect:fbp}

Let $\mathcal{S} \subseteq \{(q,n); q>0, n \geq 0 \}$ be the set $\mathcal{S} = \{ q=1 \} \cup \{ q = \frac{R}{R-1} \} \cup \{ n = 0 \} \cup \{ q < 1, (1-R)n \geq (1-R)\ell(q) \}$.
On $(0,\infty) \times [0,\infty) \setminus \mathcal{S}$ define $F(q,n) = O(q,n)/n$. Extend the definition of $F$ to $(0,\infty) \times [0,\infty)$ where possible by taking appropriate limits.
We begin this section with a list of useful results regarding the functions $m$ and $\ell$ and operators $O$ and $F$. The proof of Lemma~\ref{lemma:list} is given in Appendix~\ref{app:ode}.
\begin{lemma}
\begin{enumerate}
	\item \begin{enumerate}
			\item For $R<1$, $\ell(q)>m(q)$ on $q\in(0,1]$. Moreover, on $(0,\infty)$, $m$ crosses $\ell$ exactly once from below at some point above $1$;
			\item For $R>1$, $m(q)>\ell(q)$ on $q\in(0,1]$. Moreover, on $(0,\infty)$, $m$ either does not cross $\ell$ at all, or touches $\ell$ exactly once in the open interval $(1,R/(R-1))$, or crosses $\ell$ twice on $(1,R/(R-1))$.
		\end{enumerate}
	\item 
	For $R>1$, $F(q,n)$ is well defined at $q=R/(R-1)$.
	\item For $n>0$ and $(1-R)n<(1-R)\ell(1)$, $F(1,n)$ is well-defined and
	\begin{align}
		F(1,n) := \lim_{q\to 1}F(q,n)=-\frac{(1-R)(n-m(1))}{\ell(1)-n}.
		\label{eq:limatone}
	\end{align}
	Also, for $q\leq1$ and $R<1$ we have $\lim_{n\uparrow \ell(q)}F(q,n)=-\infty$ (and $\lim_{n\downarrow \ell(q)}F(q,n)=+\infty$ if $R>1$). For $q>1$ and $R<1$ (and $1<q<\frac{R}{R-1}$ for $R>1$) we have
	\begin{align}
		F(q, \ell(q)) :=\lim_{n\to \ell(q)}F(q,n)=-\frac{1-R}{R(1-q)}\left\{\frac{q[(1-R)q+R]}{[(1-R)q+R]^{2}+(b_{2}-1)R^{2}}-1\right\}.
		\label{eq:limatn}
	\end{align}
\item $F(q,n)=0$ if and only if $n=m(q)$. Moreover,
	\begin{enumerate}
		\item for $R<1$:
			\begin{enumerate}
				\item On $0<q<1$, $F(q,n)<0$ for $m(q)<n<\ell(q)$ and $F(q,n)>0$ for $n<m(q)$ or $n>\ell(q)$;
				\item At $q=1$, $F(1,n)<0$ for $m(1)<n<\ell(1)$ and $F(1,n)>0$ for $n<m(1)$. $F(1,n)$ is not well-defined for $n\geqslant \ell(1)$;
				\item On $q>1$, $F(q,n)<0$ for $n>m(q)$ and $F(q,n)>0$ for $n<m(q)$;
			\end{enumerate}
		\item for $R>1$:
		\begin{enumerate}
				\item On $0<q<1$, $F(q,n)>0$ for $\ell(q)<n<m(q)$ and $F(q,n)<0$ for $n<\ell(q)$ or $n>m(q)$;
				\item At $q=1$, $F(1,n)>0$ for $\ell(1)<n<m(1)$ and $F(1,n)<0$ for $n>m(1)$. $F(1,n)$ is not well-defined for $n\leqslant \ell(1)$;
				\item On $1<q\leqslant R/(R-1)$, $F(q,n)<0$ for $n>m(q)$ and $F(q,n)>0$ for $n<m(q)$;
				\item On $q>R/(R-1)$, $F(q,n)<0$ for $m(q)<n<\ell(q)$ and $F(q,n)>0$ for $n>\ell(q)$ or $n<m(q)$.
			\end{enumerate}
	\end{enumerate}
\end{enumerate}
\label{lemma:list}
\end{lemma}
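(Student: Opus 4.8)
The plan is to reduce every assertion to the analysis of a single rewritten form of $O$. Starting from \eqref{eq:formO}, write $\mathcal{D}(q,n):=2(1-R)(1-q)[(1-R)q+R]-\varphi(q,n)-\sgn(1-R)\sqrt{\varphi(q,n)^{2}+E(q)^{2}}$ for the denominator of the second term of $O$. Pulling out the common factor $\frac{(1-R)n}{R(1-q)}$ and using the identity $2(1-R)(1-q)[(1-R)q+R]-2(1-R)q(1-q)=2R(1-R)(1-q)^{2}$ gives the master form
\begin{align*}
F(q,n)=\frac{O(q,n)}{n}=\frac{1-R}{R(1-q)}\,\frac{\mathcal{N}(q,n)}{\mathcal{D}(q,n)},\qquad \mathcal{N}(q,n):=2R(1-R)(1-q)^{2}-\varphi-\sgn(1-R)\sqrt{\varphi^{2}+E^{2}},
\end{align*}
so that $\mathcal{N}=\mathcal{D}-2(1-R)q(1-q)$. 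Everything then follows from understanding the two factors $\mathcal{N}$ and $\mathcal{D}$.

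First I would establish two algebraic equivalences by isolating the square root and squaring: $\mathcal{N}(q,n)=0\iff n=m(q)$ and $\mathcal{D}(q,n)=0\iff n=\ell(q)$, where in each case squaring is legitimate only on the branch dictated by $\sgn(1-R)$, a sign condition I would verify holds at the candidate root. For $\mathcal{N}$ the $E^{2}$ term cancels the quartic and reproduces \eqref{eq:mdef}; for $\mathcal{D}$ one recovers \eqref{eq:elldef}, the pole of $\ell$ at $q=R/(R-1)$ arising naturally from the factor $(1-R)q+R$. Next, since $\varphi$ is affine and increasing in $n$ (coefficient $b_{1}>0$ by Standing Assumption~\ref{ass:sa}) and $E^{2}>0$ for $q\neq1$ (as $b_{2}>1$), one computes $\partial_{n}\mathcal{N}=\partial_{n}\mathcal{D}=-b_{1}\bigl(1+\sgn(1-R)\varphi/\sqrt{\varphi^{2}+E^{2}}\bigr)<0$, so both $\mathcal{N}$ and $\mathcal{D}$ are strictly decreasing in $n$. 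Combined with the two equivalences this pins down the sign of $\mathcal{N}$ relative to $m(q)$ and of $\mathcal{D}$ relative to $\ell(q)$, and part (4) becomes the sign bookkeeping of $\frac{1-R}{R(1-q)}\frac{\mathcal{N}}{\mathcal{D}}$ in each region, the ordering of $m$ and $\ell$ being supplied by part (1).

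The decisive quantity for parts (2) and (3) is $K(q):=2(1-R)(1-q)[(1-R)q+R]$, the $n$-independent part of $\mathcal{D}$. Because $\sqrt{\varphi^{2}+E^{2}}>|\varphi|$ strictly when $E^{2}>0$, the term $-\varphi-\sgn(1-R)\sqrt{\varphi^{2}+E^{2}}$ carries the sign $-\sgn(1-R)$ and never vanishes, so $\mathcal{D}=0$ is possible only when $\sgn(1-R)K>0$, i.e.\ only when $(1-q)[(1-R)q+R]>0$; this holds for $q<1$ (all $R$) but fails on $q>1$ when $R<1$ and on $1<q<R/(R-1)$ when $R>1$. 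This single observation yields the dichotomy of part (3). For $q\leq1$, as $n$ tends to $\ell(q)$ from the admissible side (the one-sided approach being forced by the definition of $\mathcal{S}$, namely $\sgn(1-R)n<\sgn(1-R)\ell(q)$) we have $\mathcal{D}\to0$ with fixed sign while $\mathcal{N}(q,\ell(q))\neq0$ (since $\ell(q)\neq m(q)$ there, by part (1)), giving $F\to\mp\infty$ as stated. For $q>1$ (resp.\ $1<q<R/(R-1)$) the point $n=\ell(q)$ is regular; substituting $\varphi=(K^{2}-E^{2})/2K$ and $\sqrt{\varphi^{2}+E^{2}}=(K^{2}+E^{2})/2|K|$ collapses $F(q,\ell(q))$ to the rational expression \eqref{eq:limatn}. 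Part (2) is the boundary case $K=0$ at $q=R/(R-1)$, where $E^{2}>0$ keeps $\mathcal{D}=-\varphi+\sqrt{\varphi^{2}+E^{2}}>0$ and hence $F$ well defined.

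The hard part is the limit \eqref{eq:limatone} at $q=1$, where both terms of $F$ blow up like $(1-q)^{-1}$ and so must be expanded to second order. Setting $\epsilon=1-q$ and using $\varphi(1,n)=b_{1}(n-\ell(1))$ (a short computation confirming $\varphi(1,\ell(1))=0$), the hypothesis $(1-R)n<(1-R)\ell(1)$ forces $\sgn(1-R)\varphi(1,n)<0$, which is exactly what makes the $O(\epsilon)$ contributions of $-\varphi-\sgn(1-R)\sqrt{\varphi^{2}+E^{2}}$ cancel, leaving $\mathcal{D}=2(1-R)\epsilon-\bigl(2(1-R)^{2}+2R^{2}(1-R)^{2}(b_{2}-1)/|\varphi(1,n)|\bigr)\epsilon^{2}+O(\epsilon^{3})$. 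Feeding this into $F=\frac{1-R}{R\epsilon}-\frac{2(1-R)^{2}q/R}{\mathcal{D}}$, the $\epsilon^{-1}$ singularities cancel and the finite remainder simplifies, via $\varphi(1,n)=b_{1}(n-\ell(1))$ and $\ell(1)-m(1)=(b_{2}-1)R(1-R)/b_{1}$, to $-\frac{(1-R)(n-m(1))}{\ell(1)-n}$. I expect this second-order expansion, resting on careful tracking of the square-root branch, to be the main technical obstacle; the rest is sign analysis once the master form is in place. Finally, part (1) is independent of $O$: writing $\ell(q)-m(q)=\frac{(1-R)q}{b_{1}}\bigl[(1-q)+\frac{(b_{2}-1)R}{(1-R)q+R}\bigr]$ and analysing the bracketed function (strictly decreasing on $(0,\infty)$ for $R<1$, and having at most one interior turning point on $(1,R/(R-1))$ for $R>1$) delivers the stated crossing counts directly.
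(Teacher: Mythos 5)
Your proposal is correct in outline but follows a genuinely different route from the paper. The paper first proves an alternative factorisation of $O$ (Lemma~\ref{lemma:different_expression_o}, equation \eqref{eq:formOalt}), so that the zero curve $\{n=m(q)\}$ and the curve $\{n=\ell(q)\}$ appear explicitly in the numerator $D(q,n)$ and the denominator; the price is that at $q=1$, and at $n=\ell(q)$ for $q>1$, both numerator and denominator vanish, so \eqref{eq:limatone} and \eqref{eq:limatn} are obtained via L'H\^opital, and part (4) rests on monotonicity/convexity of $D(q,\cdot)$. You instead stay with the original form \eqref{eq:formO}, write $F=\frac{1-R}{R(1-q)}\,\mathcal{N}/\mathcal{D}$ with $\mathcal{N}=\mathcal{D}-2(1-R)q(1-q)$, and reduce everything to two facts: both factors are strictly decreasing in $n$, and each vanishes exactly where the squaring argument is on the admissible branch. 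This buys a cleaner proof of part (2) and of \eqref{eq:limatn} (direct substitution of $\varphi(q,\ell(q))=(K^{2}-E^{2})/2K$, no L'H\^opital), and it derives the blow-up/finite-limit dichotomy of part (3) from the single observation that $\mathcal{D}$ can vanish only when $(1-q)[(1-R)q+R]>0$. Your part (1) is essentially the paper's computation in disguise: your bracketed function equals $P(q)/[(1-R)q+R]$ for the paper's quadratic $P(q)=Rb_2+(1-2R)q-(1-R)q^2$, analysed by monotonicity of its derivative rather than by locating the roots of $P$; both are fine.

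Two points need repair. First, your paragraph-2 claim ``$\mathcal{D}(q,n)=0\iff n=\ell(q)$'' is false as stated: the branch verification you promise actually \emph{fails} at the candidate root exactly when $(1-q)[(1-R)q+R]<0$ (i.e.\ for $q>1$ when $R<1$, and for $1<q<R/(R-1)$ when $R>1$), and it is precisely this failure that your paragraph 3 exploits; the equivalence must be stated with that sign condition built in. Second, and more seriously, your displayed expansion of $\mathcal{D}$ near $q=1$ carries a sign error when $R>1$. The $\epsilon^{2}$ correction contributed by the square root is $-\sgn(1-R)\,E^{2}/\bigl(2|\varphi(1,n)|\bigr)=E^{2}/\bigl(2\varphi(1,n)\bigr)=E^{2}/\bigl(2b_{1}(n-\ell(1))\bigr)$, whose sign flips with $\sgn(1-R)$; with your uniform minus sign the computation delivers, for $R>1$, the limit $-\,(1-R)\bigl(n-2\ell(1)+m(1)\bigr)/\bigl(\ell(1)-n\bigr)$ instead of \eqref{eq:limatone}. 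Since you quote the correct final formula, this is evidently a transcription slip rather than a conceptual gap, but as written the $R>1$ half of \eqref{eq:limatone} does not follow; writing the correction as $E^{2}/\bigl(2\varphi(1,n)\bigr)$ fixes it uniformly in $R$, after which the identity $\ell(1)-m(1)=(b_{2}-1)R(1-R)/b_{1}$ collapses the remainder to \eqref{eq:limatone} exactly as you describe.
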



Recall $(q_{M},m_{M})$ is the extreme point of the quadratic $m$ (a minimum when $R<1$ and a maximum when $R>1$) with $q_{M}=\frac{b_{3}}{2R}>0$. The key analytical properties of the problem only depend on the signs of the three parameters $(1-R,m_{M},\ell(1))$. We classify four different cases using the decision tree in Figure \ref{fig:case}.

\begin{figure}
\centering
\begin{tikzpicture}[level 1/.style={sibling distance=50mm},level 2/.style={sibling distance=40mm}]
  \node {} 
    child {node {$R<1$}
    		child {node {$m_{M}>0$}
		child {node {Case 1 (W)}}}
		child {node {$m_{M}<0$}
		child {node {$\ell(1)\leqslant 0$}
		child {node {Case 2 (I)}}}
		child{node {$\ell(1)>0$}
		child {node {Case 3 (CW)}}}
		}
		} 
    child {node {$R>1$} 
      		child {node {Case 4 (W)}}  
    		}; 
\end{tikzpicture}
\caption{Classification of different cases based on the signs of the parameters. The abbreviations in parentheses indicate the solution features of the cases, where ``W" refers to unconditional well-posedness for all levels of transaction cost, ``I" refers to unconditional ill-posedness for all levels of transaction cost and ``CW" refers to conditional well-posedness, i.e. well-posedness for sufficiently high levels of transaction cost only.}
\label{fig:case}
\end{figure}
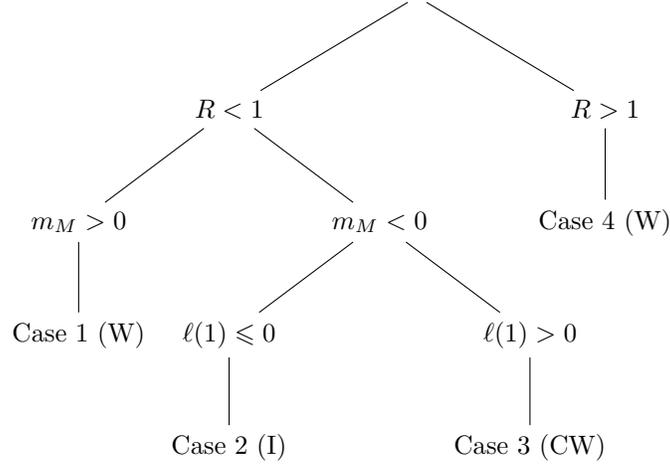


We parameterise the family of solutions to \eqref{eq:freebound} by the left boundary point. Fix $u$ and denote by $(n_{u}(q))_{q\geqslant u}$ the solution to the initial value problem
\begin{align*}
n'(q)=O(q,n(q)),\quad n(u)=m(u).
\end{align*}
Let $\zeta(u)=\inf \{ q\geqslant u:(1-R)n_{u}(q)<(1-R)m(q) \}$ denote where $n_{u}$ first crosses $m$ to the right of $u$.
Define
\begin{align}
\Sigma(u) = \exp\left(\int_{u}^{\zeta(u)}\left(-\frac{R}{q(1-R)}\frac{O(q,n_{u}(q))}{n_{u}(q)}\right)dq\right)-1.
\label{eq:trancost2}
\end{align}

\begin{lemma}  Suppose $m_M>0$. Then
$\Sigma$ is a strictly decreasing, continuous mapping $\Sigma:(0,q_{M}]\to[0,\infty)$ with $\Sigma(0+)=+\infty$ and $\Sigma(q_{M})=0$.

Now suppose $m_M \leq 0$. Let $p_- \leq p_+$ be the roots of $m(q)=0$. Set
\begin{equation}
 \overline{\xi}:=\lim_{u\uparrow p_{-}}\Sigma(u)=\exp\left(-\int_{p_{-}}^{p_{+}}\frac{R}{q(1-R)}F(q,0)dq\right)-1.
 \label{eq:critical_trancost}
\end{equation}
Then $\Sigma$ is a strictly decreasing, continuous mapping $\Sigma:(0,p_-]\to[\overline{\xi},\infty)$ with $\Sigma(0+)=+\infty$ and $\Sigma(p_-)=\overline{\xi}$. Moreover, $\lim_{u\uparrow p_{-}}n_{u}(\cdot)=0$ and $\lim_{u\uparrow p_{-}}\zeta(u)=p_{+}$.
\label{lemma:onto}
\end{lemma}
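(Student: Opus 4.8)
The plan is to treat the family $(n_u)$ as a nested collection of trajectories of the first-order equation \eqref{eq:theode} and to read off the properties of $\Sigma$ from the position of these trajectories relative to the curves $m$ and $\ell$ supplied by Lemmas~\ref{lem:shortlist} and~\ref{lemma:list}. Throughout I would track signs through the factor $(1-R)$, so that the cases $R<1$ and $R>1$ can be handled simultaneously in the $m_M>0$ statement.

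First I would show that $\zeta(u)$ is finite and that $n_u$ is strictly positive and bounded on $[u,\zeta(u)]$. Since $n_u(u)=m(u)$, Lemma~\ref{lem:shortlist}(2) gives $n_u'(u)=O(u,m(u))=0$, and $\frac{d}{dq}(n_u-m)\big|_{q=u}=-m'(u)$, so for $u<q_M$ the quantity $(1-R)(n_u-m)$ is strictly increasing at $u$ and the trajectory enters the region strictly between $m$ and $\ell$ (recall $(1-R)\ell>(1-R)m$ on $(0,1]$ by Lemma~\ref{lemma:list}(1)). On $q\leq 1$ the curve $\ell$ acts as a barrier, since $F\to\mp\infty$ as $n\to\ell$ by Lemma~\ref{lemma:list}(3), while $O$ has a fixed sign there by Lemma~\ref{lem:shortlist}(3); hence the trajectory is monotone in the $(1-R)$-sense and must return to $m$ at a point $\zeta(u)$, which the sign analysis of Lemma~\ref{lemma:list}(4) forces to satisfy $\zeta(u)>q_M$. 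The delicate part is controlling the trajectory as it passes $q=1$ (using the continuation \eqref{eq:limatone} of $F$ there) and, past $q=1$, using the finite limit \eqref{eq:limatn} together with the fact that $m$ reopens above $\ell$ to keep the solution trapped until it recrosses $m$.

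Next come continuity and strict monotonicity of $\Sigma$. Continuity follows from continuous dependence of $n_u$ on the initial height $m(u)$ together with continuity of $u\mapsto\zeta(u)$, the latter because $n_u$ crosses $m$ transversally at $\zeta(u)$ (there $n_u'=O\neq m'$). For monotonicity I would use uniqueness for \eqref{eq:theode} (local Lipschitz continuity of $O$ in $n$ away from $\mathcal{S}$): trajectories cannot cross, so for $u_1<u_2$ the ordering $(1-R)n_{u_1}(u_2)>(1-R)m(u_2)=(1-R)n_{u_2}(u_2)$ propagates, yielding $(1-R)n_{u_1}\geq(1-R)n_{u_2}$ on $[u_2,\zeta(u_2)]$ and hence the nesting $[u_2,\zeta(u_2)]\subset[u_1,\zeta(u_1)]$. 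Writing the integrand of \eqref{eq:trancost2} as $g(q,n):=-\frac{R}{q(1-R)}F(q,n)$, I would check that $g>0$ on the open trapping region (the sign of $F$ given by Lemma~\ref{lemma:list}(4) compensates the sign of the prefactor) and that $g(q,\cdot)$ is monotone in the $(1-R)$-sense there, so that $g(q,n_{u_1}(q))\geq g(q,n_{u_2}(q))$ on the common domain; combined with the strictly positive contributions from the extra intervals $[u_1,u_2]$ and $[\zeta(u_2),\zeta(u_1)]$, this gives $\Sigma(u_1)>\Sigma(u_2)$. The monotonicity of $g$ in $n$ is a short computation from \eqref{eq:formO} that I would isolate as a supporting claim.

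Finally I would compute the boundary values. For $u=q_M$ (case $m_M>0$), nesting and continuity give $\zeta(u)\to q_M$, so the interval of integration collapses while $g$ stays bounded near $q_M$, whence $\Sigma(q_M)=0$. For $\Sigma(0+)=+\infty$ the key is the corner $(0,1)$, where $m$ and $\ell$ coalesce and the denominator in \eqref{eq:formO} degenerates; a local expansion there shows that $g$ behaves like $c/q$ with $c>0$ along every $n_u$, so $\int_u^{q_M}g\,dq\geq c\log(q_M/u)\to\infty$. In the case $m_M\leq 0$ the same construction runs on $(0,p_-)$, and the new feature is the limit $u\uparrow p_-$: since $n\equiv 0$ is an equilibrium of \eqref{eq:theode} (as $O(q,0)=0$) and the $n_u$ are nested and bounded below by $0$, they decrease to a nonnegative solution issuing from $(p_-,0)$, which must be $n\equiv 0$, with $\zeta(u)\to p_+$; passing to the limit in \eqref{eq:trancost2} with $F(q,0)=\lim_{n\to 0}F(q,n)$ then yields $\Sigma(p_-)=\overline{\xi}$ as in \eqref{eq:critical_trancost}. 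I expect this last limit to be the main obstacle, because $(p_-,0)$ and $(p_+,0)$ are non-Lipschitz points of $O$ where uniqueness fails, so identifying the limiting trajectory as the zero solution and pinning $\zeta(u)\to p_+$ requires a careful squeeze rather than a direct appeal to continuous dependence.
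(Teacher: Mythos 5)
Your overall architecture is sound and largely parallels the paper's: the trapping-region analysis giving finiteness of $\zeta(u)$, the non-crossing/nesting of the trajectories $n_u$, the collapse of the integration interval for $\Sigma(q_M)=0$, and the reliance on Lemma~\ref{lem:dFdn} for strict monotonicity. Where you genuinely diverge is the monotonicity step: the paper differentiates $\Lambda(u)=\ln(1+\Sigma(u))$ in $u$ under the integral sign, the boundary terms vanishing because $O(u,n_u(u))=0=O(\zeta(u),n_u(\zeta(u)))$, and then applies Lemma~\ref{lem:dFdn} together with the monotonicity of $n_u$ in $u$. Your route --- pointwise comparison $g(q,n_{u_1}(q))\geq g(q,n_{u_2}(q))$ on the common interval $[u_2,\zeta(u_2)]$ plus strictly positive contributions from $[u_1,u_2]$ and $[\zeta(u_2),\zeta(u_1)]$ --- uses the same two ingredients but avoids having to justify differentiability of $u\mapsto n_u$ and $u\mapsto \zeta(u)$; it is a valid, slightly more robust alternative (and is essentially the device the paper itself uses later in its comparative-statics proofs). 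Two smaller remarks: your treatment of the $m_M\leq 0$ endpoint is welcome (the paper's written proof is terse there), but your claim that $(p_-,0)$ and $(p_+,0)$ are non-Lipschitz points where uniqueness fails is not right: since $O(q,n)=nF(q,n)$ with $F$ smooth in $n$ near $(p_-,0)$ (note $\ell(p_-)-0\neq 0$ and $E(p_-)^2>0$ there), $O$ is locally Lipschitz in $n$, so $n\equiv 0$ is the \emph{unique} solution through $(p_-,0)$; the genuine delicacy is rather the passage of the limiting trajectory through $q=1$, where the extension of $O$ is only known to be continuous, and there your squeeze argument is indeed the right tool.

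The genuine gap is in the step $\Sigma(0+)=+\infty$. The assertion that ``a local expansion shows $g$ behaves like $c/q$ with $c>0$ along every $n_u$'' is not literally true, and it conceals the one hard estimate in the whole lemma. Because $n_u(u)=m(u)$, the integrand vanishes at the left endpoint $q=u$: by \eqref{eq:formOalt} it is proportional to $D(q,n_u(q))$, which is zero whenever $n_u=m$. Hence no constant $c$ uniform in $u$ can bound $g$ below by $c/q$ near $q=u$, and what is actually required is a \emph{separation estimate uniform in} $u$: there exist $\hat{k}>0$ and $\hat{Q}>0$, independent of $u$, with $n_u(q)-m(q)\geq \hat{k}(q-u)$ on $(u,\hat{Q}]$. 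Combined with the denominator bound $0<\ell(q)-n_u(q)\leq \ell(q)-m(q)\leq Cq$ near $0$, this yields $g(q,n_u(q))\geq c\,(q-u)/q^{2}$, whose integral over $(u,\hat{Q}]$ grows like $\ln(1/u)$. Establishing that uniform linear escape from $m$ is precisely where the paper spends most of its proof: it constructs a barrier line $b(q)=1+kq$, with $k$ chosen between $m'(0)$ and the negative root of the auxiliary quadratic $H(x)=(1-R)(m'(0)-x)-R(\ell'(0)-x)x$, shows $O(q,b(q))>k$ on a fixed neighbourhood of $0$ so that trajectories cross $b$ from below, and only then deduces $n_u(q)-m(q)\geq\hat{k}(q-u)$ with constants independent of $u$. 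Without some such barrier argument your step fails: a priori the trajectories could hug $m$ ever more tightly as $u\downarrow 0$, the numerator $D$ could degenerate at the same $q^{2}$ rate as the denominator, and the integral in \eqref{eq:trancost2} could stay bounded.
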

Lemma~\ref{lemma:onto} is proved in Appendix~\ref{app:ode}.

\subsection{The cases}

\subsubsection{Case 1: $R<1$ and $m_M \geq 0$}
For any initial value $u\in(0,q_{M})$, $m'(u)<0=O(u,m(u))=O(u,n_{u}(u))=n_{u}'(u)$. Thus $n_{u}(q)$ must initially be larger than $m(q)$ for $q$ being close to $u$. By part 4 of Lemma \ref{lemma:list}, $O(q,n)$ is negative on $\{(q,n):0<q\leqslant 1,m(q)<n<\ell(q)\}\cup\{(q,n):q>1,n>m(q)\}$. Also, $n_{u}(q)$ cannot cross $l(q)$ from below on $0<q\leqslant 1$ since $\lim_{n\uparrow \ell(q)}O(q,n)=-\infty$.  By considering the sign of $O(q,n)$, we conclude $n_{u}$ must be decreasing until it crosses $m$. This guarantees the finiteness of $\zeta(u)$, and the triple $(n_{u}(\cdot),u,\zeta(u))$ represents one possible solution to problem \eqref{eq:freebound}. Notice that the family of solutions $(n_u(\cdot))_{0<u<q_M}$ cannot cross, and thus $n_{u}(q)$ is decreasing in $u$. The solutions corresponding to initial values $u=0$ and $u=q_{M}$ can be understood as the appropriate limit of a sequence of solutions.

Although $O(q,n)$ has singularities at $q=1$ and $n=\ell(q)$, part 3 of Lemma \ref{lemma:list} shows that a well-defined limit $O(q,n)$ exists on $\{(q,n):q=1,n<\ell(1)\}$ and $\{(q,n):q>1,n=\ell(q)\}$. Hence there exists a continuous modification of $O(q,n)$ and a solution $n_{u}$ can actually pass through these singularity curves. See Figure \ref{fig:case1}(a) for some examples. 

From the analysis leading to \eqref{eq:trancost}, the correct choice of $u$ should satisfy $\xi = \Sigma(u)$.
From Lemma \ref{lemma:onto}, for every given level of round-trip transaction cost $\xi$, there exists a unique choice of the left boundary point given by $u_{*}=\Sigma^{-1}(\xi)$ and then the desired solution to the free boundary value problem is given by $(n_{u_{*}}(\cdot),u_{*},\zeta(u_{*}))$. Figure \ref{fig:case1}(b) gives the plots of $\Sigma^{-1}(\xi)$ and $\zeta(\Sigma^{-1}(\xi))$ representing the boundaries $(q_{*},q^{*})$ under different levels of transaction cost.

\begin{figure}[!htbp]
	\captionsetup[subfigure]{width=0.4\textwidth}
	\centering
	\subcaptionbox{Examples of solutions $n_{u}(q)$ with different initial values $(u,m(u))$.}{\includegraphics[scale =0.35] {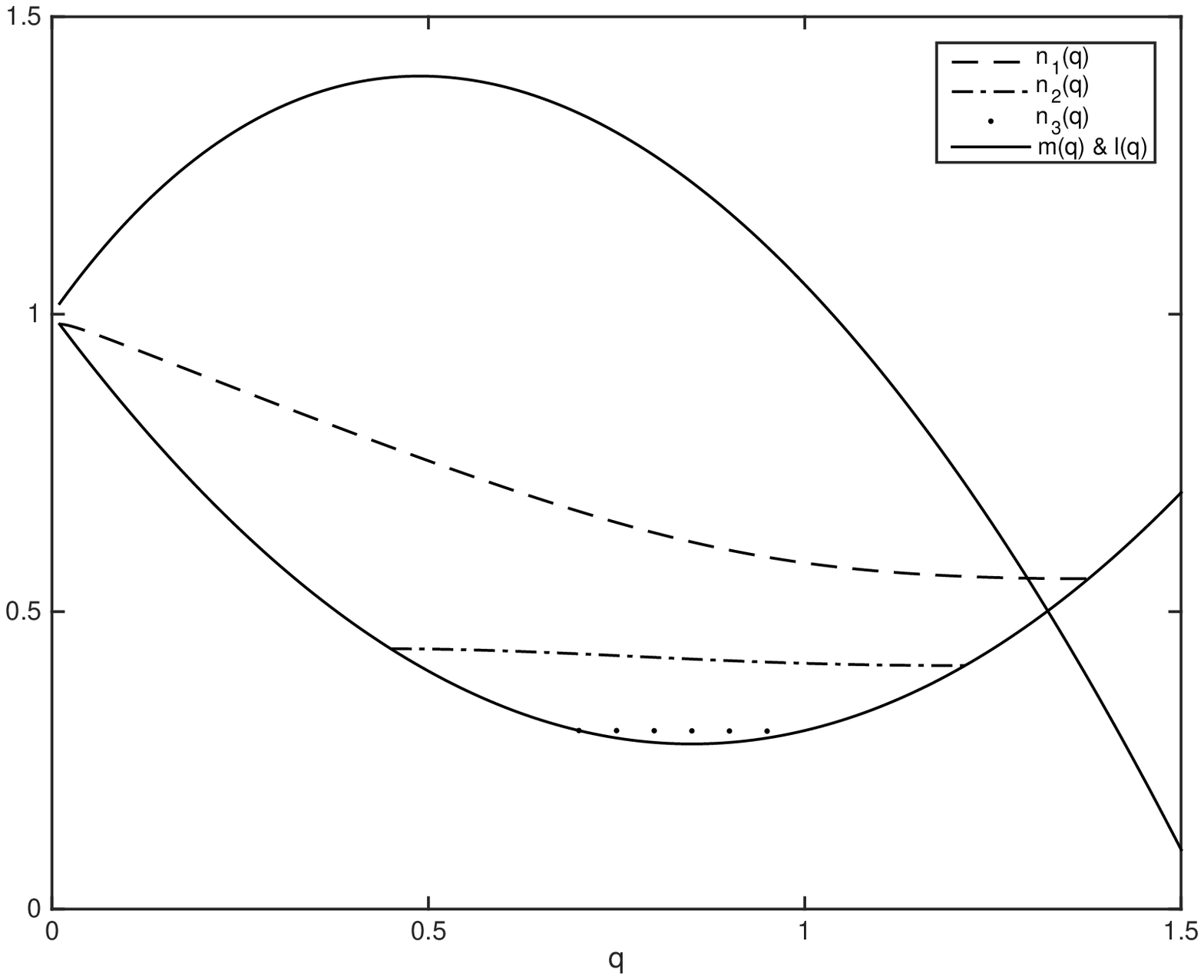}}
	\subcaptionbox{Plots of $q_{*}=\Sigma^{-1}(\xi)$ and $q^{*}=\zeta(q_{*})$.}{\includegraphics[scale =0.35]{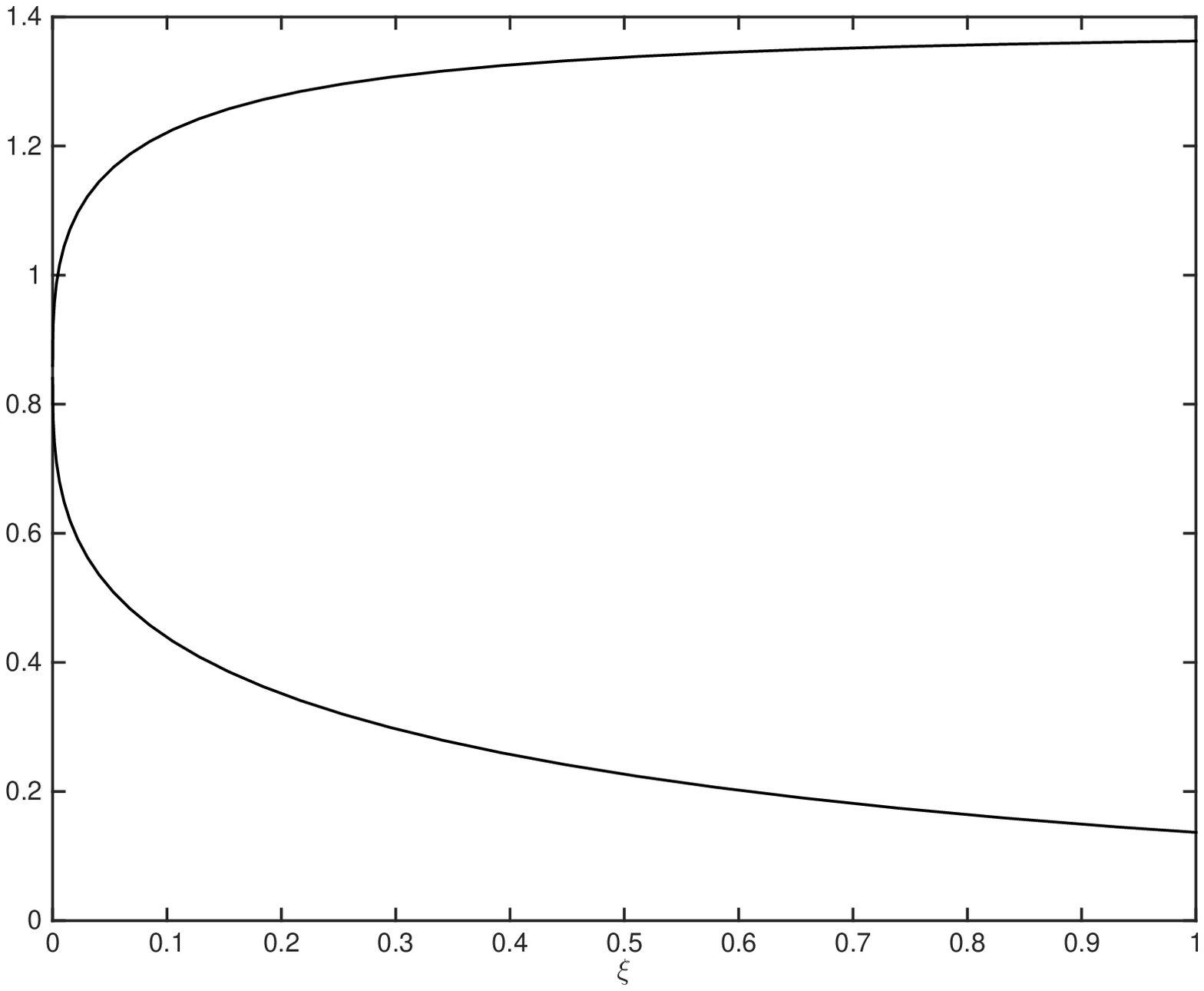}}
	
\caption{Case 1 where parameters chosen are $R=0.5$, $b_{1}=0.25$, $b_{2}=1.75$ and $b_{3}=0.85$.}
\label{fig:case1}
\end{figure}

{\cblue Based on Figure~\ref{fig:case1} we can make a series of simple observations about the behaviour of $q_*$ and $q^*$ (which hold in the other cases too) some of which will be proved in Section~\ref{sec:compstat} on the comparative statics of the problem. First, the lower and upper boundaries of the no transaction region, expressed in terms $q_*$ and $q^*$, are monotonic decreasing and monotonic increasing respectively. In particular, the no-transaction region gets wider as transaction costs increase. Second, the no-transaction region may be contained in the first quadrant $(0 < q_* < q^* < 1)$, or the upper-half plane $(0 < q_* < 1 < q^*)$, depending on $\xi$ and for other parameter values we may have that the no-transaction region is contained in the second quadrant ($1 < q_* < q^*$). Third, $lim_{\xi \downarrow 0} q_* = q_M = \lim_{\xi \downarrow 0} q^*$. Moreover, the numerics are suggestive of $\lim_{\xi \uparrow \infty} q_* = 0$ and $\lim_{\xi \uparrow \infty} q^* =: q^*_\infty < \infty$ so that there is a part of the solvency space close to the solvency limit $p = 1/\gamma$ which, even in the regime of very large transaction costs, is inside the region where a sale of $Y$ at $t=0$ is necessary. Fourth, $q^*$ is less sensitive to changes in $\xi$ than $q^*$ so that the no-transaction wedge is not centred on the Merton line.}

\subsubsection{Case 2: $R<1$, $m_M<0$, $\ell(1)\leq0$}

Let $\ell_{0}$ be the root of $\ell(q)=0$ on $q\in(0,1)$. Since the solution of $n'(q)=O(q,n(q))$ must be bounded below by zero and above by $\ell(q)$ for $q\in(0,\ell_0)$, for any initial value $(u,m(u))$ for which $m(u)>0$, the corresponding solution $n_{u}(\cdot)$ must hit $(\ell_{0},0)$. Hence there does not exist any positive solution which crosses $m$ again to the right of $u$. See Figure \ref{fig:case2}. In this case, there is no solution to the free boundary value problem and indeed the underlying problem is ill-posed for all levels of transaction costs and thus the value function cannot be defined.

\begin{figure}[!htbp]
\begin{center}
\includegraphics[scale=0.35]{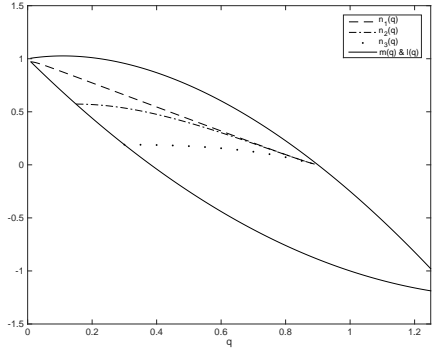}
\caption{Case 2 where parameters chosen are $R=0.5$, $b_{1}=0.25$, $b_{2}=1.75$ and $b_{3}=1.5$. {\cblue Then $q_M = \frac{b_3}{2R}=0.85$.}}
\label{fig:case2}
\end{center}
\end{figure}

\subsubsection{Case 3: $R<1$, $m_M<0$, $\ell(1) > 0$}

Let $p_{\pm}$ with $0<p_{-}<q_{M}<p_{+}$ be the two roots of $m(q)=0$. The parameterisation of the solution is the same as in Case 1 except the left boundary point should now be restricted to $u\in(0,p_{-})$ to ensure a positive initial value. The function $\Sigma$ defined in \eqref{eq:trancost2} is still a strictly decreasing map with $\Sigma(0+)=+\infty$ except its domain is now restricted to $(0,p_{-}]$.

Unlike Case 1, we now only consider $\Sigma^{-1}(\xi)$ on the range $\xi\in(\overline{\xi},\infty)$. For such a given high level of round-trip transaction cost, the required left boundary point is given by $u_{*}=\Sigma^{-1}(\xi)$ and $u^{*}=\zeta(u_{*})$, see Figure \ref{fig:case3}. In this case, the problem is conditionally well-posed, {\cblue ie it is well-posed} only for a sufficiently high level of transaction cost.

\begin{figure}[!htbp]
	\captionsetup[subfigure]{width=0.4\textwidth}
	\centering
	\subcaptionbox{Examples of solutions $n_{u}(q)$ with different initial values $(u,m(u))$.}{\includegraphics[scale =0.35] {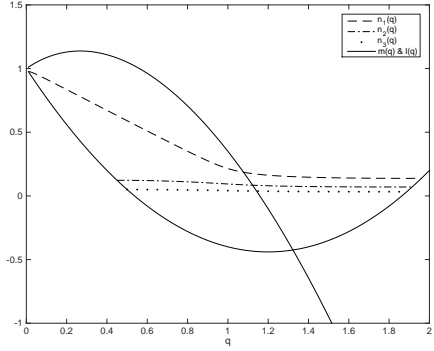}}
	\subcaptionbox{Plots of $q_{*}=\Sigma^{-1}(\xi)$ and $q^{*}=\zeta(q_{*})$. {\cblue $q_*$ and $q^*$ are not defined for $\xi < \bar{\xi}$.}}{\includegraphics[scale =0.35]{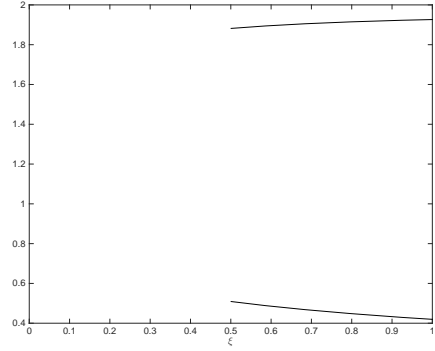}}
	
\caption{Case 3 where parameters chosen are $R=0.5$, $b_{1}=0.25$, $b_{2}=1.75$ and $b_{3}=1.2$.}
\label{fig:case3}
\end{figure}

\subsubsection{Case 4: $R>1$.}

In this case the quadratic $m$ has a positive maxima at $(q_{M},m_{M})$ and $m(q)>\ell(q)$ on $q\in(0,1)$. By checking the sign of $O(q,n)$ using part 4 of Lemma \ref{lemma:list}, one can verify that the solution $n_{u}$ of the initial value problem is always increasing for any choice of left boundary point $u\in(0,q_{M})$. In this case the family of solutions is increasing in $u$. The solution $n_{u}(q)$ crosses $m(q)$ from below at $\zeta(u)=\inf(q\geqslant u:n_{u}(q)>m(q))$. The correct choice of $u$ is again the one solving $\xi=\Sigma(u)$ using the same definition in \eqref{eq:trancost2}. As in Case 1, the function $\Sigma$ is onto from $(0,q_{M}]$ to $[0,\infty)$ and hence $u_{*}=\Sigma^{-1}(\xi)$ always exists uniquely for any $\xi$. See Figure \ref{fig:case4}. Indeed for $R>1$, the agent's utility function is always bounded above by zero and hence the value function always exists and is finite.

\begin{figure}[!htbp]
	\captionsetup[subfigure]{width=0.4\textwidth}
	\centering
	\subcaptionbox{Examples of solutions $n_{u}(q)$ with different initial values $(u,m(u))$.}{\includegraphics[scale =0.35] {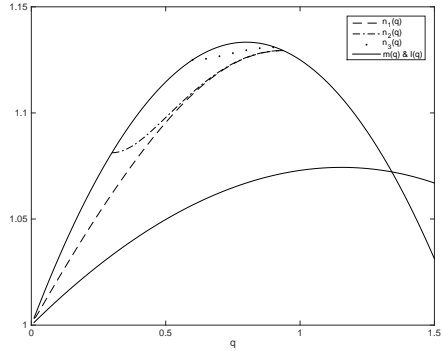}}
	\subcaptionbox{Plots of $q_{*}=\Sigma^{-1}(\xi)$ and $q^{*}=\zeta(q_{*})$.}{\includegraphics[scale =0.35]{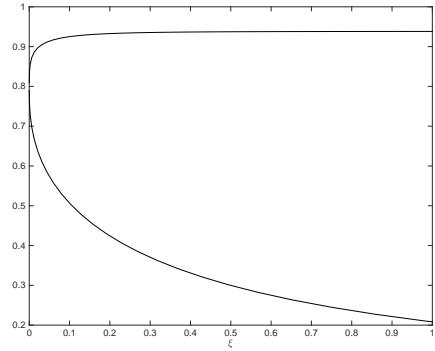}}
	
\caption{Case 4 where parameters chosen are $R=1.25$, $b_{1}=1.5$, $b_{2}=1.25$ and $b_{3}=2$.}
\label{fig:case4}
\end{figure}

\section{Comparative statics}
\label{sec:compstat}

In this section, we investigate how the no-transaction wedge $[p_{*},p^{*}]$ and the value function $V$ change with the market parameters and level of transaction costs.

\subsection{Monotonicity with respect to market parameters}


\begin{prop}
Suppose $(n(\cdot),q_{*},q^{*})$ is the solution to the free boundary value problem. Then:
\begin{enumerate}
	\item $q_{*}$ and $q^{*}$ are decreasing in $b_{1}$; 
	\item For $R<1$, $q_{*}$ and $q^{*}$ are increasing in $b_{3}$.
\end{enumerate}
\label{prop:compstat}
\end{prop}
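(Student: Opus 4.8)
The plan is to exploit the parametrisation of the solution family by the left touch-point $u$ together with the monotonicity of $\Sigma$ established in Lemma~\ref{lemma:onto}. The selected solution satisfies $\Sigma(q_*)=\xi$ and $q^*=\zeta(q_*)$, and $\Sigma$ is strictly decreasing, so implicit differentiation at fixed $\xi$ gives $\frac{\partial q_*}{\partial b_i}=-(\partial_{b_i}\Sigma)/(\partial_u\Sigma)$; since $\partial_u\Sigma<0$, the sign of $\partial q_*/\partial b_i$ equals that of $\partial_{b_i}\Sigma$ computed at fixed $u$. To treat $q^*$ I would use that $\zeta$ is a strictly monotone bijection between admissible left and right touch-points (the trajectories $n_u$ do not cross), so the same family can be parametrised by its right touch-point $v$ with a strictly monotone induced cost map $\Sigma^{R}(v):=\Sigma(\zeta^{-1}(v))$, and the same step reduces the monotonicity of $q^*$ to the sign of $\partial_{b_i}\Sigma^{R}$ at fixed $v$. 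Thus everything reduces to signing the parameter-derivative of the round-trip cost when one touch-point is held fixed.

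Next I would differentiate $\ln(1+\Sigma(u))=\int_u^{\zeta(u)}g(q,n_u(q))\,dq$, where $g(q,n)=-\frac{R}{q(1-R)}\frac{O(q,n)}{n}$, with respect to $b_i$ at fixed $u$. The crucial simplification is that the endpoint contribution from varying $\zeta(u)$ vanishes: at the right touch-point $n_u(\zeta)=m(\zeta)$, so $O(\zeta,n_u(\zeta))=0$ by Lemma~\ref{lem:shortlist}(2), hence $g(\zeta,\cdot)=0$. What remains is the interior integral $\int_u^{\zeta}\bigl(\partial_n g\cdot\psi+\partial_{b_i}g\bigr)\,dq$ of the total parameter-derivative of $g$ along the solution, where $\psi:=\partial_{b_i}n_u$ solves the linear variational equation $\psi'=\partial_n O\,\psi+\partial_{b_i}O$ with initial value $\psi(u)=\partial_{b_i}m(u)$ (since $n_u(u)=m(u)$ at the fixed left touch-point). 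Writing $\psi$ through its integrating-factor representation then expresses the required sign in terms of the signs of $\partial_{b_i}m$, $\partial_{b_i}O$ and the kernel $\partial_n g$.

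I would then record the sign data. For $R<1$ one finds $\partial_{b_3}m(q)=-\frac{(1-R)q}{b_1}<0$, and since only the denominator of the second term of $O$ depends on $b_3$ (through $\varphi$, with $\partial_{b_3}\varphi=(1-R)q>0$ and $1+\varphi/\sqrt{\varphi^2+E^2}\ge0$), one gets $\partial_{b_3}O\le0$ and hence $\partial_{b_3}g\ge0$ on $q,n>0$. Thus in part~(2) the forcing and the initial datum of the variational equation are both non-positive, so the integrating-factor formula forces $\psi\le0$ throughout, leaving only the kernel term $\partial_n g\cdot\psi$ to be controlled. For part~(1) the analogous computation gives $\partial_{b_1}m(q)=\frac{(1-R)q(b_3-Rq)}{b_1^2}$ and $\partial_{b_1}O$ of sign $\sgn(1-n)$.

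The main obstacle is signing the interior integral $\int_u^{\zeta}(\partial_n g\,\psi+\partial_{b_i}g)\,dq$, and two features block a one-line comparison argument. First, the kernel $\partial_n g$ is itself sign-changing: $F(q,n)=O/n$ vanishes at both $n=m(q)$ and $n=\ell(q)$ and is non-monotone in $n$ between them (Lemma~\ref{lemma:list}(4)), so $\partial_n g\cdot\psi$ has no fixed sign even when $\psi$ does. Second, in the $b_1$ case the data are themselves sign-indefinite along the solution, since $\partial_{b_1}m$ changes sign across $q=b_3/R$ and $\partial_{b_1}O$ changes sign across $n=1$, so neither $\psi$ nor the forcing is one-signed and a comparison theorem cannot be applied directly. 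Resolving this is where the region-by-region sign information of Lemma~\ref{lemma:list} and the a priori bound $m\le n\le\ell$ on the no-transaction interval must be used: I would split the integration range at $q=1$ and at the level $n=1$, and use the explicit limiting values of $F$ in Lemma~\ref{lemma:list}(3) to control the contributions near the singular curves $q=1$ and $n=\ell(q)$, thereby showing the integral nonetheless has a definite sign. This sign-control of the transaction-cost integrand, rather than the reduction machinery, is the heart of the argument.
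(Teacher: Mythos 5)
Your reduction matches the paper's framework: via the strict monotonicity of $\Sigma$ (Lemma~\ref{lemma:onto}), a right-endpoint parametrisation for $q^*$, and the vanishing of the endpoint term because $O(\zeta(u),n_u(\zeta(u)))=0$, everything comes down to signing the dependence of $\Sigma(u)$ on $b_i$ at a fixed left touch-point. The gap is in the step you yourself call the heart of the argument, and your roadmap for it fails for two reasons. First, it rests on a false premise: by Lemma~\ref{lemma:list}(4), $F(q,n)=O(q,n)/n$ vanishes only at $n=m(q)$, not at $n=\ell(q)$ (there it diverges for $q<1$, and takes the nonzero limit \eqref{eq:limatn} for $q>1$), and $F$ is in fact monotone in $n$: the paper's Lemma~\ref{lem:dFdn} gives $\partial_n F\leqslant 0$ on exactly the relevant range, so your kernel $\partial_n g$ has a definite sign after all. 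Second, and more seriously, even with the correct signs your variational integrand $\partial_n g\cdot\psi+\partial_{b_i}g$ consists of two terms of \emph{opposite} sign in both parts of the proposition. For $b_3$ with $R<1$: $\psi\leqslant 0$, $\partial_n g\geqslant 0$, $\partial_{b_3}g\geqslant 0$, so the solution-shift term is nonpositive while the direct term is nonnegative. For $b_1$ with $R<1$ (where the data are actually one-signed: $n<1$ along the solution so $\partial_{b_1}O>0$, and $\partial_{b_1}m>0$ on $[q_*,q^*]$ since $q^*\leqslant 2q_M-q_*$ by symmetry of $m$): $\psi\geqslant 0$, $\partial_n g\geqslant 0$, $\partial_{b_1}g\leqslant 0$, so again the two terms compete. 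The desired conclusion amounts to the direct term dominating the shift term, a magnitude comparison for which splitting the range at $q=1$ or $n=1$ and citing limiting values of $F$ provides no mechanism.

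What you are missing is the paper's device for eliminating this competition: a change of variables that removes the parameter from the integrand altogether, so that the whole effect of $b_i$ enters through an ordering of trajectories with a single sign. For part (1), setting $\overline{n}=b_1(n-1)$ and $\overline{m}=b_1(m-1)$ makes $\overline{m}$, $\overline{\ell}$ and $\overline{F}(q,\overline{n}):=F(q,n)$ independent of $b_1$; the ODE becomes $\overline{n}'=(\overline{n}+b_1)\overline{F}(q,\overline{n})$, a direct (non-infinitesimal) comparison orders the solutions and the crossing points $\zeta(u)$ in $b_1$, and then $\ln(1+\Sigma(u))=-\int_u^{\zeta(u)}\frac{R}{q(1-R)}\overline{F}(q,\overline{n}(q))\,dq$ is compared term by term using only Lemma~\ref{lem:dFdn} (monotonicity of $\overline{F}$ in $\overline{n}$) together with interval inclusion, both effects pushing the same way. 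For part (2) the analogous substitution is $a=n-m$, under which $F$ viewed as a function of $(q,a)$ and the boundary data $a(q_*)=a(q^*)=0$ are independent of $b_3$. This route also sidesteps any need to justify differentiability of $\Sigma$ and of $n_u$ in the parameters, which your implicit-function and variational-equation steps take for granted.
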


Proposition~\ref{prop:compstat} is proved in Appendix~\ref{app:compstat}.

Recall that
$p_{*}=\frac{q_{*}}{1+\lambda(1-q_{*})}$ and  $p^{*}=\frac{q^{*}}{1-\gamma(1-q^{*})}$.
Then, Proposition~\ref{prop:compstat} gives immediately:
\begin{thm}
\begin{enumerate}
	\item $p_{*}$ and $p^{*}$ are decreasing in $b_{1}$; 
	\item For $R<1$, $p_{*}$ and $p^{*}$ are increasing in $b_{3}$.
\end{enumerate}
\label{thm:compstat}
\end{thm}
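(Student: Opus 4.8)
The plan is to deduce Theorem~\ref{thm:compstat} directly from Proposition~\ref{prop:compstat} by exploiting the fact that the maps converting the transformed boundary points $q_*$ and $q^*$ into the original boundary points $p_*$ and $p^*$ are strictly increasing. The crucial observation is that the transaction cost parameters $\lambda$ and $\gamma$ are held fixed throughout the comparative statics: they are independent of the market/preference quantities encoded in $b_1$ and $b_3$, so any change in $b_1$ or $b_3$ propagates to $p_*$ and $p^*$ only through its effect on $q_*$ and $q^*$.

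First I would record the relevant monotonicity of the change-of-variables maps. For fixed $\lambda \geq 0$, differentiating $p_{*}(q) = \frac{q}{1+\lambda(1-q)}$ gives $\frac{dp_{*}}{dq} = \frac{1+\lambda}{\left(1+\lambda(1-q)\right)^{2}} > 0$, so $p_*$ is a strictly increasing function of $q_*$. Similarly, for fixed $\gamma \in [0,1)$, differentiating $p^{*}(q) = \frac{q}{1-\gamma(1-q)}$ gives $\frac{dp^{*}}{dq} = \frac{1-\gamma}{\left(1-\gamma(1-q)\right)^{2}} > 0$, so $p^*$ is a strictly increasing function of $q^*$.

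Then I would simply compose these monotone relationships with Proposition~\ref{prop:compstat}. Since $q_*$ is decreasing in $b_1$ and $p_*$ is increasing in $q_*$, it follows that $p_*$ is decreasing in $b_1$; the identical reasoning applied to $q^*$ and $p^*$ shows that $p^*$ is decreasing in $b_1$, which establishes part~(1). For part~(2), when $R<1$ the proposition gives that $q_*$ and $q^*$ are increasing in $b_3$, and composing once more with the increasing maps yields that $p_*$ and $p^*$ are increasing in $b_3$.

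The genuine content — and the only real obstacle — resides in Proposition~\ref{prop:compstat}, which we take as given here; its proof requires a careful comparison argument tracking how the family of solution curves $n_u$ to the ODE~\eqref{eq:theode} and the quadratic $m$ shift as $b_1$ or $b_3$ is varied, combined with the strict monotonicity of $\Sigma$ furnished by Lemma~\ref{lemma:onto}. Once that proposition is in hand, the passage from $(q_*,q^*)$ to $(p_*,p^*)$ is purely algebraic and presents no difficulty beyond the elementary sign computations above, so the theorem follows immediately.
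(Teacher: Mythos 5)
Your proof is correct and takes essentially the same route as the paper, which likewise obtains Theorem~\ref{thm:compstat} ``immediately'' from Proposition~\ref{prop:compstat} via the relations $p_{*}=\frac{q_{*}}{1+\lambda(1-q_{*})}$ and $p^{*}=\frac{q^{*}}{1-\gamma(1-q^{*})}$. Your explicit check that these change-of-variable maps are strictly increasing for fixed $\lambda$ and $\gamma$ is exactly the elementary step the paper leaves implicit.
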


Theorem~\ref{thm:compstat} describes the comparative statics in terms of the auxiliary parameters.\footnote{Since the free boundary value problem does not depend on $b_4$, $q_*$ and $q^*$ are trivially independent of $b_4$. We have strong numerical evidence that $q_*$ is decreasing in $b_{2}$ and $q^{*}$ is increasing in $b_{2}$, but we have not been able to prove this result.} In general, it is difficult to make categorical statements about the comparative statics with respect to the original market parameters since many of the market parameters enter the definitions of more than one of the auxiliary parameters. However, we have the following results concerning the dependence of $p_*$ and $p^*$ on the discount rate, and on the drift of the illiquid asset.

\begin{cor} $p_{*}$ and $p^{*}$ are decreasing in $\delta$. If $R<1$ then $p_{*}$ and $p^{*}$ are increasing in $\alpha$.
\label{cor:compstat}
\end{cor}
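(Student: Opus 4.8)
The plan is to deduce the corollary from Theorem~\ref{thm:compstat} by a monotone reparametrisation argument. Since $q_{*},q^{*}$ (and hence $p_{*},p^{*}$) depend on the underlying market and preference data only through the auxiliary parameters $b_{1},b_{2},b_{3},b_{4}$ and the risk aversion $R$, it suffices to track how each of $\delta$ and $\alpha$ enters these five quantities. The key observation I would establish first is that both $\delta$ and $\alpha$ are \emph{clean} parameters, in the sense that each perturbs exactly one of the $b_{i}$ while leaving the others (and $R$) fixed; this is precisely what allows the single-variable monotonicity statements of Theorem~\ref{thm:compstat} to transfer without any competing effects.

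For the dependence on $\delta$, I would read off from the definition $b_{1}=\frac{2[\delta - r(1-R) - \beta^{2}(1-R)/(2R)]}{\eta^{2}(1-\rho^{2})}$ that $\delta$ enters linearly in the numerator with coefficient $\frac{2}{\eta^{2}(1-\rho^{2})}>0$ (using $|\rho|<1$), so $b_{1}$ is strictly increasing in $\delta$; meanwhile $b_{2}$, $b_{3}$, $b_{4}$ and $R$ contain no $\delta$. Holding everything but $b_{1}$ fixed and invoking Theorem~\ref{thm:compstat}(1) then gives that $p_{*}$ and $p^{*}$ are decreasing in $\delta$. Note this argument is valid for every $R$, consistent with part (1) of the theorem being unconditional on the sign of $1-R$.

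For the dependence on $\alpha$, I would observe that $\alpha$ enters the data only through $\nu=(\alpha-r)/\eta$, and that $\nu$ in turn appears only in $b_{3}$. Substituting, $b_{3}=\frac{2(\nu-\beta\rho)}{\eta(1-\rho^{2})}=\frac{2(\alpha-r)-2\beta\rho\eta}{\eta^{2}(1-\rho^{2})}$, which exhibits $\alpha$ linearly with coefficient $\frac{2}{\eta^{2}(1-\rho^{2})}>0$; in particular the sign is insensitive to that of $\eta$, since $\eta$ enters as $\eta^{2}$. Hence $b_{3}$ is strictly increasing in $\alpha$ while $b_{1}$, $b_{2}$, $b_{4}$ and $R$ are unchanged. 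Under the restriction $R<1$ governing this half of the statement, Theorem~\ref{thm:compstat}(2) then yields that $p_{*}$ and $p^{*}$ are increasing in $\alpha$.

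The corollary is therefore essentially immediate once the theorem is in hand; the only point requiring care---and the step I would write out explicitly---is the verification that $\delta$ and $\alpha$ each feed into a single auxiliary parameter, so that no confounding monotonicity arises. The genuinely hard analytic work has already been absorbed into Proposition~\ref{prop:compstat} (and thence Theorem~\ref{thm:compstat}), so that no further differential-equation analysis is needed here.
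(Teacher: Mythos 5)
Your proposal is correct and matches the paper's intended argument: the paper treats Corollary~\ref{cor:compstat} as an immediate consequence of Theorem~\ref{thm:compstat}, relying on exactly the observation you make explicit, namely that $\delta$ enters the auxiliary parameters only through $b_{1}$ (increasingly) and $\alpha$ only through $b_{3}$ (increasingly, via $\nu$), with $b_{2}$, $b_{4}$ and $R$ unaffected. Writing out this parameter-tracking step, which the paper leaves implicit, is the right thing to do and introduces no gap.
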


{\cblue The corollary confirms the intuition that as the return on the illiquid asset asset increases, it becomes more valuable and the agent elects to buy the illiquid asset sooner, and to sell it later. Moreover, as his discount parameter increases, he wants to consume wealth sooner, and since consumption takes place from the cash account he elects to keep more of his wealth in liquid assets, and less in the illiquid asset.}

Now we consider the cash value of the holdings in the illiquid asset. We compare the agent with holdings in the illiquid asset to an otherwise identical agent (same risk aversion and discount parameter, and trading in the financial market with bond and risky asset with price $S$) who has a zero initial endowment in the illiquid asset and is precluded from taking any positions in the risky asset.

Consider the market without the illiquid asset. For an agent operating in this market a consumption/investment strategy is admissible for initial wealth $x>0$ (we write $(C = (C_t)_{t \geq 0},\Pi = (\Pi_t)_{t \geq 0}) \in \mathcal{A}_W(x)$) if $C$ and $\Pi$ are progressively measurable, and if the resulting wealth process $X=(X_t)_{t\geq0}$ is non-negative for all $t$. Here $X$ solves
\[ dX_t = r(X_t - \Pi_t) dt + \frac{\Pi_t}{S_t} dS_t - C_t dt \]
subject to $X_0=x$. Let $W=W(x)$ be the value function for a CRRA investor:
\[ W(x) = \sup_{(C,\Pi) \in \mathcal{A}_W(x)} {\mathbb E} \left[ \int_0^\infty e^{-\delta t} \frac{C_t^{1-R}}{1-R} dt \right] .\]
The problem of finding $W$ is a classical Merton consumption/investment problem without transaction costs. We find
\[ W(x) = \left[ \frac{1}{R} \left( \delta - r(1-R) - \frac{\beta^2(1-R)}{2R} \right) \right]^{-R} \frac{x^{1-R}}{1-R} = \left( \frac{b_1}{b_4 R} \right)^{-R} \frac{x^{1-R}}{1-R} .\]

Define $\mathcal{C} = \mathcal{C}(y\theta;x)$ to be the certainty equivalent value of the holding of the illiquid asset, i.e. the cash amount which the agent with liquid wealth $x$ and $\theta$ units of the illiquid asset with current price $y$, trading in the market with transaction costs, would exchange for his holdings of the illiquid asset, if after this exchange he is not allowed to trade in the illiquid asset. (We assume there are no transaction costs on this exchange, but they can be easily added if required.) Then $ \mathcal{C} = \mathcal{C}(y\theta;x)$ solves
\[ W(x + \mathcal{C}) = V(x,y,\theta) \]
which becomes
\[ \mathcal{C} = \mathcal{C}(y\theta;x) = (x + y \theta) G(p)^{1/(1-R)} - x. \]
Theorem~\ref{thm:compstat2} is proved in Appendix~\ref{app:compstat}.

\begin{thm}
\begin{enumerate}
    \item $(1-R)G$ is decreasing in $b_{1}$;
    \item $(1-R)G$ is increasing in $b_{3}$.
\end{enumerate}
\label{thm:compstat2}
\end{thm}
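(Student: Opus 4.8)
The plan is to translate the statement about $G$ into a statement about the free-boundary solution $n$ and the quadratic $m$, and then to prove it by combining the boundary monotonicity recorded in Proposition~\ref{prop:compstat} (together with Lemma~\ref{lemma:onto}) with a comparison argument for the flow $n'=O(q,n)$. First I would set up the reduction by reversing the transformations that take $G$ to $n$. For $p$ in the no-transaction region, writing $q=W(h(p))$ for the associated $q$-coordinate, one obtains the representation
\[ (1-R)G(p) = (1-R)\,|1-p|^{1-R}\,|1-q|^{R-1}\,n(q)^{-R}, \]
together with the global identity $\frac{d}{dp}\ln G(p)=\frac{(1-R)(q-p)}{p(1-p)}$ (which one checks also holds on the transaction regions, where $q=\frac{(1+\lambda)p}{1+\lambda p}$ or $q=\frac{(1-\gamma)p}{1-\gamma p}$). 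Using $n(q_*)=m(q_*)$, $n(q^*)=m(q^*)$ and the relations $1-q_*=\frac{1-p_*}{1+\lambda p_*}$, $1-q^*=\frac{1-p^*}{1-\gamma p^*}$, the buy and sell regimes collapse to
\[ (1-R)G(p)=(1-R)(1+\lambda p)^{1-R}m(q_*)^{-R}\ \ (p\le p_*),\qquad (1-R)G(p)=(1-R)(1-\gamma p)^{1-R}m(q^*)^{-R}\ \ (p\ge p^*). \]
Thus on the transaction regions the sign of $\partial_{b_1}[(1-R)G]$ and $\partial_{b_3}[(1-R)G]$ is governed entirely by the scalars $m(q_*)$ and $m(q^*)$, while on the no-transaction region $(1-R)G$ is pinned down by the ODE together with these two boundary values.

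Second, I would dispatch the buy boundary directly. Since $q_*<q_M=\frac{b_3}{2R}$ we have $m'(q_*)=\frac{(1-R)}{b_1}(2Rq_*-b_3)$, whose sign is that of $-(1-R)$. Combining this with the monotonicity of $q_*$ from Proposition~\ref{prop:compstat} ($q_*$ decreasing in $b_1$; and, for $R<1$, increasing in $b_3$) and with the explicit direct dependence $\partial_{b_1}m(q)=-\frac{(1-R)q(Rq-b_3)}{b_1^2}$, $\partial_{b_3}m(q)=-\frac{(1-R)q}{b_1}$ at fixed $q$, a short calculation shows that both the direct and the boundary channels push in the same direction: $m(q_*)$ increases in $b_1$ and decreases in $b_3$, whence $(1-R)m(q_*)^{-R}$ is decreasing in $b_1$ and increasing in $b_3$. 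This settles the buy region for the $b_1$ direction (both values of $R$) and for the $b_3$ direction when $R<1$.

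Third, the no-transaction region, the sell boundary, and the case $R>1$ in the $b_3$ direction require a genuine comparison argument for $n'=O(q,n)$. Increasing $b_1$ enters $O$ only through $\varphi(q,n)=b_1(n-1)+(1-R)(b_3-2R)q+(2-b_2)R(1-R)$, while increasing $b_3$ enters through $\varphi$ and through $m$; in each case I would first prove a monotone-dependence lemma showing that the vector field $O$ shifts in one direction in the parameter, uniformly in $(q,n)$ on the relevant region. I would then compare the trajectory for the perturbed parameter with the original one, following it from the (shifted) left endpoint $q_*$ where it leaves $m$ until it returns to the (shifted) $m$ at $q^*$, and translate the resulting ordering of trajectories back through the displayed representation of $(1-R)G$. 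This simultaneously yields the interior monotonicity and controls $m(q^*)$, hence the sell region.

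The hard part is that the two free boundaries move with the parameter at the same time as the vector field changes, so there is no fixed-boundary problem against which to compare $n$ at a fixed $q$; the comparison must be engineered so that the boundary shift and the field shift reinforce rather than cancel. The sell boundary is the delicate point, because there $m'(q^*)$ has the opposite sign to $m'(q_*)$, so the elementary quadratic estimate that settles the buy region fails and one is forced to propagate the trajectory ordering across the whole of $[q_*,q^*]$. Similarly, for $R>1$ Proposition~\ref{prop:compstat} provides no monotonicity of the boundaries in $b_3$, so that case must rest entirely on the flow comparison. Consequently the crux of the proof is a clean monotone-dependence lemma for $O$ in $(b_1,b_3)$ that is uniform in $q$ and compatible with the crossing conditions $n(q_*)=m(q_*)$ and $n(q^*)=m(q^*)$.
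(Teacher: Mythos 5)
Your reduction and your treatment of the buy boundary are consistent with what the paper does, but the proposal has two genuine gaps, and they sit exactly at the points you flag as "the hard part" without resolving them. First, the comparison of trajectories at fixed $q$ is actually \emph{not} the obstruction: the paper carries it out by extending $n$ to be constant outside $[q_*,q^*]$, computing $\sgn\bigl(\partial_{b_1}O\bigr)$ and $\sgn\bigl(\partial_{b_1}m\bigr)$ (using $n\leqslant 1$ when $R<1$), and invoking the monotonicity of $q_*$ from Proposition~\ref{prop:compstat}; so a fixed-$q$ ordering of $n(\cdot\,;b_1)$ is available. What is missing in your plan is the step you describe as "translate the resulting ordering of trajectories back through the displayed representation of $(1-R)G$". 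This translation is not automatic, because the change of variables $p\mapsto q=W(h(p))$ itself moves with the parameter, so an ordering of $n$ at fixed $q$ does not give an ordering of $G$ at fixed $p$. The paper bridges this with a separate crossing-counting argument: from the $n$-ordering it deduces $\sgn(1-p)\,\widehat{h}'(p)>\sgn(1-p)\,\widetilde{h}'(p)$ at every fixed $p$, concludes that $\widehat{G}$ can never downcross $\widetilde{G}$ away from $p=1$, and then excludes a downcrossing at $p=1$ by a direct computation from the identity $\frac{1}{G(1)}\bigl(G(1)-\frac{G'(1)}{1-R}\bigr)^{1-1/R}=n(1)$, which forces $\widehat{G}'(1)>\widetilde{G}'(1)$ whenever $\widehat{G}(1)=\widetilde{G}(1)$. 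Without this (or an equivalent) mechanism, your interior and sell-boundary claims are unsupported.

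Second, for part 2 with $R>1$ you state the case "must rest entirely on the flow comparison", but this is precisely where the flow comparison cannot be run: Proposition~\ref{prop:compstat} gives no monotonicity of $q_*$ in $b_3$ when $R>1$, so the trajectory comparison has no anchor at the left boundary, and indeed the paper does not attempt it. Instead it abandons the ODE for this case and argues probabilistically: since the value function depends on the parameters only through $R$ and $(b_1,b_2,b_3,b_4)$, comparing in $b_3$ is equivalent to comparing in the drift $\alpha$; the paper then couples two models with drifts $\tilde{\alpha}<\hat{\alpha}=\tilde{\alpha}+\epsilon$ via $\hat{Y}_t=e^{\epsilon t}\tilde{Y}_t$, and from any admissible strategy with $\Theta\geqslant 0$ in the low-drift model constructs an admissible strategy in the high-drift model (with $\hat{\Theta}_t=\tilde{\Theta}_te^{-\epsilon t}$ and suitably adjusted purchases and sales) whose consumption is pointwise strictly larger, so the value function, and hence $(1-R)G$, is larger. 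Some argument of this kind is needed to close the $R>1$ case of part 2; as written, your proposal does not cover it.
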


\begin{cor}
$\mathcal{C}$ is decreasing in $\delta$ and increasing in $\alpha$.
\label{cor:compstat2}
\end{cor}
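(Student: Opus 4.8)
The plan is to chain together three observations: how $\delta$ and $\alpha$ feed into the auxiliary parameters, a sign-robust monotonicity of $G^{1/(1-R)}$ in the quantity $(1-R)G$, and the monotonicity of $(1-R)G$ already supplied by Theorem~\ref{thm:compstat2}. First I would inspect the definitions of $b_{1},b_{2},b_{3},b_{4}$ and observe that $\delta$ appears \emph{only} in $b_1$, with $\partial b_1/\partial\delta = 2/(\eta^{2}(1-\rho^{2}))>0$, while $b_2,b_3,b_4$ are free of $\delta$. Likewise $\alpha$ enters \emph{only} through $\nu=(\alpha-r)/\eta$, which in turn appears only in $b_3$, giving $\partial b_3/\partial\alpha = 2/(\eta^{2}(1-\rho^{2}))>0$, with $b_1,b_2,b_4$ free of $\alpha$. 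Hence, holding the remaining market parameters fixed, increasing $\delta$ amounts to increasing $b_1$ alone, and increasing $\alpha$ amounts to increasing $b_3$ alone, which are precisely the perturbations governed by Theorem~\ref{thm:compstat2}.

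Second, I would translate statements about $(1-R)G$ into statements about $\mathcal{C}=(x+y\theta)G(p)^{1/(1-R)}-x$, evaluated at the fixed position $(x,y,\theta)$ with $p=y\theta/(x+y\theta)$. Writing $K:=(1-R)G$, so that $G=K/(1-R)$ and $K$ carries the sign of $1-R$ (as $G>0$), a one-line differentiation gives $\tfrac{d}{dK}G^{1/(1-R)}=\tfrac{1}{(1-R)^{2}}G^{1/(1-R)-1}>0$, since the base $G>0$ keeps the power well-defined and positive irrespective of the sign of $1-R$. Thus $G^{1/(1-R)}$ is strictly increasing in $(1-R)G$ in \emph{both} regimes $R<1$ and $R>1$. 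To promote this to $\mathcal{C}$ I need $x+y\theta>0$: from the solvency inequality $x+\theta^{+}y(1-\gamma)-\theta^{-}y(1+\lambda)\geq0$, splitting on the sign of $\theta$ yields $x+y\theta\geq0$, with strict positivity at any interior point where $p$ is finite. Consequently $\mathcal{C}$ is a strictly increasing affine function of $G^{1/(1-R)}$, hence strictly increasing in $(1-R)G$.

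Combining the pieces: $\delta\uparrow\Rightarrow b_1\uparrow\Rightarrow(1-R)G\downarrow$ by Theorem~\ref{thm:compstat2}(1), whence $\mathcal{C}\downarrow$; and $\alpha\uparrow\Rightarrow b_3\uparrow\Rightarrow(1-R)G\uparrow$ by Theorem~\ref{thm:compstat2}(2), whence $\mathcal{C}\uparrow$. The only delicate point is the sign bookkeeping in the second step, namely checking that $G^{1/(1-R)}$ increases with $(1-R)G$ simultaneously for $R<1$ and $R>1$, together with the positivity of paper wealth; but these are elementary once organised as above. All the analytic content of the corollary lives in Theorem~\ref{thm:compstat2}, and the proof merely transports it through the parameter maps $\delta\mapsto b_1$ and $\alpha\mapsto b_3$.
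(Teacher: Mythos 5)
Your proposal is correct and follows essentially the same route the paper intends: the corollary is an immediate consequence of Theorem~\ref{thm:compstat2}, since $\delta$ enters the free boundary problem only through $b_1$ (positively) and $\alpha$ only through $b_3$ (positively), and $\mathcal{C}=(x+y\theta)G(p)^{1/(1-R)}-x$ is increasing in $(1-R)G$ for both $R<1$ and $R>1$. Your explicit sign bookkeeping for $G^{1/(1-R)}$ and the check that $x+y\theta>0$ on the interior of the solvency region are exactly the details the paper leaves implicit.
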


Both these monotonicities are intuitively natural.
For the monotonicity in $\alpha$,  since the agent only ever holds long\footnote{Note, if he starts with a solvent initial portfolio, but with a negative holding in the illiquid asset, then the agent makes an instantaneous transaction at time zero to make his holding positive.} positions in the illiquid asset, we expect him to benefit from an increase in drift and hence price of the illiquid asset. (Note, some care is needed in making this argument precise. Part of the optimal strategy is to sometimes purchase units of the illiquid asset, and this will be more costly if the price is higher.)
If we consider monotonicty in $\delta$ then for $R<1$, increasing $\delta$ reduces the magnitude of the discounted utility of consumption, and reduces the value function. However, this is not the same as decreasing the certainty equivalent value of the holding of risky asset. Indeed, when $R>1$, increasing $\delta$ reduces the magnitude of the discounted utility of consumption, but since the terms are negative, this increases the value function. Nonetheless, $\mathcal{C}$ is decreasing in $\delta$.

\subsection{Monotonicity with respect to transaction costs}

From the discussion in Section \ref{sect:fbp}, we have seen that transformed boundaries only depends on the round-trip transaction cost $\xi$. In particular, $q_{*}$ and $q^{*}$ are respectively strictly decreasing and increasing in $\xi$. However, the purchase/sale boundaries in the original scale still depend on the individual costs of purchase and sale. 
Write
\begin{align*}
p_{*}(\lambda,\gamma)=\frac{q_{*}(\xi)}{1+\lambda(1-q_{*}(\xi))},\quad  p^{*}(\lambda,\gamma)=\frac{q^{*}(\xi)}{1-\gamma(1-q^{*}(\xi))}
\end{align*}
and recall that $\xi=\frac{\lambda+\gamma}{1-\gamma}$.
{\cblue If $q_*(\xi) < q^*(\xi) < 1$ then $p_*(\lambda, \gamma) \leq q_*(\xi) < q_M < q^*(\xi) \leq p^*(\lambda, \gamma)$ and the Merton line lies inside the no-transaction wedge. However, if $1 < q_*(\xi) < q^*(\xi)$ then we have $p_*(\lambda, \gamma) > q_*(\xi)$ and $p^*(\lambda, \gamma) < q^*$ and the Merton line may fall outside the no-transaction region.} We have
\begin{align*}
\frac{dp_{*}}{d\gamma}=\frac{\partial p_{*}}{\partial q_{*}}\frac{\partial q_{*}}{\partial \xi}\frac{\partial \xi}{\partial \gamma}
=\frac{1+\lambda}{(1-\gamma)^{2}}\frac{1+\lambda}{[1+\lambda(1-q_{*})]^{2}}\frac{\partial q_{*}}{\partial \xi}<0
\end{align*}
so that the critical ratio of wealth in the illiquid asset to paper wealth at which the agent purchases more illiquid asset is decreasing in the transaction cost on sales. However, perhaps surprisingly, the dependence of the critical ratio $p_*$ at which purchases occur on the transaction cost on purchases is not unambiguous in sign:
\begin{align*}
\frac{dp_{*}}{d\lambda}=\frac{\partial p_{*}}{\partial \lambda}+\frac{\partial p_{*}}{\partial q_{*}}\frac{\partial q_{*}}{\partial \xi}\frac{\partial \xi}{\partial \lambda}
=-\frac{q_{*}(1-q_{*})}{[1+\lambda(1-q_{*})]^{2}}+\frac{1}{1-\gamma}\frac{1+\lambda}{[1+\lambda(1-q_{*})]^{2}}\frac{\partial q_{*}}{\partial \xi}
\end{align*}
is not necessarily negative, for we may have $q_*>1$.
This issues are discussed further in Hobson et al~\cite{HobsonTseZhu:16} where examples are given {\cblue in which the Merton line lies outside the no transaction region and} in which the boundaries to the no-transaction region are not monotonic in the transaction cost parameters.

\section{Conclusion}
\label{sec:conc}

Merton's solution~\cite{Merton:69} of the infinite horizon, consumption and investment problem is elegant and insightful but assumes a perfect market with no frictions. Building on this work, there is a large literature, starting with Constantinides and Magill~\cite{ConstantinidesMagill:76} and Davis and Norman~\cite{DavisNorman:90} investigating the form of the solution in the presence of transaction costs. When there is a single asset Choi et al~\cite{ChoiSirbuZitkovic:13} (via shadow prices) and Hobson et al~\cite{HobsonTseZhu:16} (via an analysis of the HJB equation) are able to characterise precisely when the problem is well-posed. However, \cite{DavisNorman:90}, \cite{ChoiSirbuZitkovic:13}
and \cite{HobsonTseZhu:16} all assume the financial market includes just a single risky asset.

In this paper we have extended the results to two risky assets, and give a complete characterisation of the solution, but in the special case where transaction costs are payable on only one of the risky assets. This is also the model studied by Choi~\cite{Choi:16} using different methods. The presence of the second risky asset, which may be used for hedging and investment purposes, makes the problem significantly more complicated than the single risky asset case, but we can extend the methods of \cite{HobsonTseZhu:16} to give a complete solution. Indeed, up to evaluating an integral of a known algebraic function, we can determine exactly when the problem is well-posed and up to solving a free boundary value problem for a first order differential equation we can determine the boundaries of the no-transaction wedge.

At the heart of our analysis is this free boundary value problem. Although the utility maximisation problem depends on many parameters describing the agent (his risk aversion and discount rate), the market (the interest rate and the drifts, volatilities and correlations of the traded assets) and the frictions (the transaction costs on sales and purchases) the ODE depends on the risk aversion parameter and just three further parameters, and the solution we want can be specified further in terms of the round-trip transaction cost.

Building on the work of Choi et al~\cite{ChoiSirbuZitkovic:13}, in our previous work~\cite{HobsonTseZhu:16} we give a solution to the problem in the case of a single risky asset. The major issue in \cite{ChoiSirbuZitkovic:13} and \cite{HobsonTseZhu:16} is to understand the solution of an ODE as it passes through a singular point. In this paper the problem is richer, and the ODE is more complicated, but in other ways the analysis is much simpler because although the key ODE has singularities, these can be removed.

In the paper we have assumed a single illiquid asset and just one further risky asset, but the analysis extends immediately to the case of a single illiquid asset and several risky assets on which no transaction costs are payable, at the expense of a more complicated notation. This observation is a form of mutual fund theorem --- the agent chooses to invest in the additional liquid financial assets in fixed proportions and these assets may be combined into a representative market asset. Details of the argument in a related context may be found in Evans et al~\cite{EvansHendersonHobson:08}. Nonetheless, the extension to a model with many risky assets with transaction costs payable on all of them remains a challenging open problem.

\bibliographystyle{plain}


\appendix

\section{Transformation of the HJB equation}
\label{subsect:hjb}
Looking at the HJB Equation \eqref{eq:hjb}, and using intuition gained from similar problems, we expect that $V=V(x,y,\theta)$ can be written as $V(x,y,\theta)=x^{1-R}J(\frac{y \theta}{x})$ for $J$ a function of a single variable $z$ representing the ratio of wealth in the illiquid asset to wealth in the liquid assets. The equation for $J=J(z)$ contains expressions of the form $zJ'(z)$ and $z^2 J''(z)$ and so can be made into a homogeneous equation by the substitution $(z,J(z)) \mapsto (e^u,K(u))$. The second-order equation for $K$ can then be reduced to a first order equation by setting $w(K) = \frac{dK}{du}$ and making $K$ the subject of the equation, see \cite{EvansHendersonHobson:08} or \cite{HobsonTseZhu:16} for details of a similar order-reduction in a related problem. However, there are cases where $x=0$ lies inside the no-transaction region and at this point $z$ is undefined, and the above approach does not work. Hence, we need to use a different parametrisation. We use a parametrisation based on $P_t = \frac{Y_t \Theta_t}{X_t + Y_t \Theta_t}$ representing the proportion of paper wealth which is held in the illiquid asset. The delicate point at $x= \pm 0$ (or $z = \pm \infty$) becomes a delicate point at $p=1$, but as we show by a careful analysis any singularities can be removed.

Using the form of value function in \eqref{eq:valfun} to compute all the relevant partial derivatives, \eqref{eq:hjb} can be rewritten as
\begin{align}
0&=\frac{b_{1}}{b_{4}}\left[G(p)-\frac{pG'(p)}{1-R}\right]^{1-1/R}-\delta G(p)+r(1-p)\left[(1-R)G(p)-pG'(p)\right] \nonumber \\
&+\alpha \left[(1-R)pG(p)+p(1-p)G'(p)\right]  +\frac{\eta^{2}}{2}\left[p^{2}(1-p)^{2}G''(p)-2Rp^{2}(1-p)G'(p)-R(1-R)p^{2}G(p)\right] \nonumber \\
&-\frac{\left\{\beta \left[(1-R)G(p)-pG'(p)\right]+\eta\rho \left[-R(1-R)pG(p)+Rp(2p-1)G'(p)-p^{2}(1-p)G''(p)\right]\right\}^{2}}{2\left[p^{2}G''(p)+2RpG'(p)-R(1-R)G(p)\right]}.
\label{eq:hjb2}
\end{align}

Let\footnote{The assumption $b_3>0$ means that the agent would like to hold positive quantities of the illiquid asset, and that the no-transaction wedge is contained in the half-space $p>0$. To allow for $b_3<0$ it is necessary to consider $p<0$. This case can be incorporated into the analysis by incorporating an extra factor of $\sgn(p)$ into the definition of $h$, so that $h(p)=\sgn(p(1-p))|1-p|^{R-1}G(p)$. This then leads to extra cases, but no new mathematics, and the problem can still be reduced to solving $n'=O(q,n)$ where $O$ is given by \eqref{eq:formO}, but now for $q<0$.}
$h(p)=\sgn(1-p)|1-p|^{R-1}G(p)$ and $w(h)=p(1-p)\frac{dh}{dp}$. Then
\begin{equation}
\frac{w(h)}{p(1-p)}=\frac{dh}{dp}=\sgn(1-p)|1-p|^{R-1}\left[G'(p)+(1-R)\frac{G(p)}{1-p}\right]
\label{eq:dhdp}
\end{equation}
and in turn
\begin{align}
G'(p)=\frac{w(h)}{|p||1-p|^{R}}-(1-R)\frac{G(p)}{1-p}.
\end{align}
This gives 
\begin{eqnarray}
G(p)-\frac{pG'(p)}{1-R}&=&G(p)-\frac{p}{1-R}\left[\frac{w(h)}{|p||1-p|^{R}}-(1-R)\frac{G(p)}{1-p}\right] \nonumber \\
&=&|1-p|^{-R}h\left(1-\frac{w(h)}{(1-R)h}\right). \label{eq:Ghw}
\end{eqnarray}
We expect that $V_x > 0$ and hence that this expression is positive. It follows that $\sgn(1-p) = \sgn(h) = \sgn(1 - W(h))$.
Then
\begin{eqnarray}
\left(G(p)-\frac{pG'(p)}{1-R}\right)^{1-1/R} 
&=&\sgn(1-p)|1-p|^{1-R}h|h|^{-1/R}\left|1-\frac{w(h)}{(1-R)h}\right|^{1-1/R}, \label{eq:Vxdef} \\ 
(1-p)\left[(1-R)G(p)-pG'(p)\right]&= &
\sgn(1-p)|1-p|^{1-R}\left[(1-R)h-w(h)\right], \nonumber \\
(1-R)G(p)-pG'(p) 
&=&\frac{\sgn(1-p)|1-p|^{1-R}}{1-p}(1-R)h\left(1-\frac{w(h)}{(1-R)h}\right) \nonumber
\end{eqnarray}
and
\begin{eqnarray*}
(1-R)pG(p)+p(1-p)G'(p)
&=&\sgn(1-p)|1-p|^{1-R}w(h).
\end{eqnarray*}

Taking a further derivative
\begin{align*}
w(h)w'(h)&=p(1-p)\frac{dh}{dp}\frac{d}{dh}w(h)=p(1-p)\frac{d}{dp}w(h) \\
&=p(1-p)\frac{d}{dp}\left\{\sgn(1-p)|1-p|^{R-1}\left[p(1-p)G'(p)+(1-R)pG(p)\right] \right\} \\
&=\sgn(1-p)|1-p|^{R-1}\left[p^{2}(1-p)^{2}G''(p)+p(1-p)(1-2Rp)G'(p)+(1-R)p(1-Rp)G(p)\right]
\end{align*}
and hence the second order terms in \eqref{eq:hjb2} can be rewritten as:
\begin{eqnarray}
\lefteqn{p^{2}(1-p)^{2}G''(p)-2Rp^{2}(1-p)G'(p)-R(1-R)p^{2}G(p) \qquad} \nonumber \\
\hspace{20mm}&=&\sgn(1-p)|1-p|^{1-R}w(h)(w'(h)-1),  \nonumber \\
\lefteqn{-R(1-R)pG(p)+Rp(2p-1)G'(p)-p^{2}(1-p)G''(p) \qquad} \nonumber \\
&=&-\frac{|1-p|^{1-R}}{1-p}\sgn(1-p)\left(w'(h)w(h)-(1-R)w(h)\right), \nonumber \\
\lefteqn{p^{2}G''(p)+2RpG'(p)-R(1-R)G(p)} \nonumber \\
&=&\sgn(1-p)|1-p|^{-(1+R)}\left[w(h)w'(h)+(2R-1)w(h)-R(1-R)h\right]. \label{eq:concavityw}
\end{eqnarray}

Substituting back into \eqref{eq:hjb2}, and dividing through by $\sgn(1-p)|1-p|^{1-R}$ we obtain
\begin{align}
0&=\frac{b_{1}}{b_{4}}h|h|^{-1/R}\left|1-\frac{w(h)}{(1-R)h}\right|^{1-1/R}-\delta h \nonumber \\
&\qquad+r\left[(1-R)h-w(h)\right]+\alpha w(h) +\frac{\eta^{2}}{2}w(h)(w'(h)-1) \nonumber \\
&\qquad-\frac{\left\{\beta(1-R)h\left(1-\frac{w(h)}{(1-R)h}\right) -\eta\rho \left[w'(h)w(h)-(1-R)w(h)\right]\right\}^{2}}{2\left[w(h)w'(h)+(2R-1)w(h)-R(1-R)h\right]}.
\label{eq:hjb3}
\end{align}

Recall the definitions $W(h)=\frac{w(h)}{(1-R)h}$, $N=W^{-1}$ and $n(q)=|N(q)|^{-1/R}|1-q|^{1-1/R}$. Then $w(N(q))=(1-R)N(q)W(N(q))=(1-R)qN(q)$. Put $h=N(q)$ in \eqref{eq:hjb3} and divide by $h$. Then we have
\begin{align}
0&=\frac{b_{1}}{b_{4}}n(q)- \delta + r(1-R)(1-q)+\alpha(1-R)q +\frac{\eta^{2}}{2}(1-R)\left[qw'(N(q))-q\right] \nonumber \\
&\qquad -\frac{1-R}{2}\frac{\left\{\beta(1-q) -\eta\rho \left[qw'(N(q))-(1-R)q\right]\right\}^{2}}{qw'(N(q))+(2R-1)q-R}.
\label{eq:hjb4}
\end{align}

Recall the definitions of the auxiliary constants $(b_{i})_{i=1,2,3,4}$ given at the very start of Section~\ref{ssec:reduce}. Rearranging \eqref{eq:hjb4} and multiplying by $b_4$ 
\begin{eqnarray}
0&=&(1-R)q^{2}(w'(N(q)))^{2} \nonumber \\
&&\qquad+\left[b_{1}n(q)-\left[b_{1}+b_{2}R(1-R)\right]+(b_{3}+2R-2)(1-R)q\right]qw'(N(q)) \nonumber \\
&&\qquad+\left[(2R-1)(b_{3}-1)+R^{2}(1-b_{2})\right](1-R)q^{2} \nonumber \\
&&\qquad+\left[(1-2R)b_{1}+R(1-R)b_{2}-R(1-R)b_{3}\right]q \nonumber  \\
&&\qquad+b_{1}R+b_{1}\left[(2R-1)q-R \right]n(q) \nonumber \\
&=:& A(qw'(N(q)))^{2}+B(qw'(N(q)))+C. \label{eq:ABC}
\end{eqnarray}
This can be viewed as a quadratic equation in $qw'(N(q))$. Note that the coefficients $A$, $B$, $C$ depend on the market parameters only through the auxiliary parameters $b_1$, $b_2$, $b_3$.

We want the root corresponding to $V_{xx}<0$. This is equivalent to
\begin{equation}
 \frac{1}{1-R} p^2 G''(p) + \frac{2R}{1-R} pG'(p) -RG(p) < 0 .
\label{eq:VxxG}
\end{equation}
Using \eqref{eq:concavityw} and the fact that $\sgn(1-p) = \sgn(h)$, and multiplying \eqref{eq:VxxG} by $|1-p|^{R+1}/|h|$ we find we want the solution for which
\begin{equation}
  \frac{1}{(1-R)} \frac{1}{h} \{ w(h) w'(h) + (2R-1) w(h) - R(1-R) h \} =  \{ q w'(N(q)) + (2R-1) q - R \} < 0 .\label{eq:wconc}
\end{equation}

Consider \eqref{eq:hjb4}  and write $u = q w'(N(q))$. Then for fixed $q$ and $n(q)$,  \eqref{eq:hjb4} is of the form
$(1-R)a_1 u- a_2 = (1-R)\frac{(a_3 u+a_4)^2}{(u-a_5)}$ where $(a_i)_{1 \leq i \leq 5}$ are constants with $a_1 > a_3^2$ and $a_5=R-(2R-1)q$. It is easily seen that this equation has two solutions, one on each side of
$u=a_5$, and that from \eqref{eq:wconc} the one we want is the smaller root.
Thus
\begin{align*}
q w'(N(q))=\frac{-B-\sgn(A)\sqrt{B^{2}-4AC}}{2A}.
\end{align*}
where $A$, $B$, $C$ are the constants in \eqref{eq:ABC}. Note that $\sgn(A) = \sgn (1-R)$.
Then, we have
\begin{align*}
\frac{n'(q)}{n(q)}&=\frac{1-R}{R(1-q)}-\frac{1}{R}\frac{N'(q)}{N(q)} \\
&=\frac{1-R}{R(1-q)}-\frac{1-R}{R}\frac{q}{qw'(N(q))-(1-R)q^2} \\
&=\frac{1-R}{R(1-q)}-\frac{1-R}{R}\frac{2Aq}{-B-\sgn(A)\sqrt{B^{2}-4AC}-2A(1-R)q^2}.
\end{align*}
After some algebra, we arrive at
\begin{align*}
n'(q)=\frac{(1-R)n(q)}{R(1-q)}-\frac{2(1-R)^{2}qn(q)/R}{2(1-R)(1-q)\left[(1-R)q+R\right]-\varphi(q,n(q))-\sgn(1-R)\sqrt{\varphi(q,n(q))^{2}+E(q)^{2}}}.
\end{align*}

\section{Continuity and smoothness of the candidate value function}
\label{app:prop}

\begin{proof}[Proof of Case (i) of Proposition \ref{prop:def_g}]
We have
\begin{eqnarray*}
\int_{N(q_{*})}^{N(q^{*})}\frac{du}{w(u)}-\int_{p_{*}}^{p^{*}}\frac{du}{u(1-u)}
& = & \int_{q_{*}}^{q^{*}}\left(\frac{N'(u)}{(1-R)uN(u)}-\frac{1}{u(1-u)}\right)du+\int_{q_{*}}^{q^{*}}\frac{du}{u(1-u)}-\int_{p_{*}}^{p^{*}}\frac{du}{u(1-u)} \\
&=&\int_{q_{*}}^{q^{*}}\left(-\frac{R}{u(1-R)}\frac{O(u,n(u))}{n(u)}\right)du -\ln(1+\xi) \\
&=&0
\end{eqnarray*}
using \eqref{eq:trancost} and this establishes the equivalence of \eqref{eq:def_h_1} and \eqref{eq:def_h_2}.

Suppose we have a solution $(n(\cdot),q_{*},q^{*})$ to \eqref{eq:freebound} with $n$ being strictly positive. Let $N(q)=\sgn(1-q)n(q)^{-R}|1-q|^{R-1}$, $W=N^{-1}$ and $w(h)=(1-R)hW(h)$. We set $G^C(p) = \sgn(1-p) |1-p|^{1-R} h(p)$ where $h$ solves $\frac{dh}{dp} = \frac{w(h)}{p(1-p)}$.
For notational convenience (and to allow us to write derivatives as superscripts) write $G$ as shorthand for $G^C$.

First we check that $G$ is $C^2$. Outside the no-transaction interval this is immediate from the definition, and on $(p_*,p^*)$ it follows from the fact that $n$ and $n'$ are continuous. This property is inherited by the pair $(w,w')$ and then on integration by the trio $(h, h', h'')$ and finally $(G, G', G'')$.

It remains to check the continuity of $G$, $G'$ and $G''$ at $p_*$ and $p^*$. We prove the continuity at $p_*$; the proofs at $p^*$ are similar.
Using $\frac{1-q^*}{1-p^*} = \frac{1}{1+\lambda p^*}$ for the penultimate equivalence, we have
\begin{eqnarray*}
G(p_{*}+)&=&\sgn(1-p_{*})|1-p_{*}|^{1-R}h(p_{*}) \\
&=&\sgn(1-p_{*})|1-p_{*}|^{1-R}\sgn(1-q_{*})n(q_{*})^{-R}|1-q_{*}|^{R-1} \\
&=&n(q_{*})^{-R}(1+\lambda p_{*})^{1-R} = G(p_{*}-) .
\end{eqnarray*}
Then continuity of $G'$ at $p_*$ follows from \eqref{eq:Ghw} where
\[
G(p_*+)-\frac{p_*G'(p_* +)}{1-R} =  |1-p_*|^{-R} h_* (1-W(h_{*}))=
\frac{G(p_*+)}{1-p_*} (1 - q_*) = \frac{G(p_*)}{1 + \lambda p_*} =  G(p_*-)-\frac{p_*G'(p_* -)}{1-R}. \]
Finally, from \eqref{eq:concavityw},
\begin{eqnarray*}
\lefteqn{p_*^{2}G''(p_*+)+2Rp_*G'(p_*+)-R(1-R)G(p_*+)} \\
&=&\frac{G(p_*+)}{(1-p_*)^{2} h_{*}} \left[w(h_*)w'(h_*)+(2R-1)w(h_*)-R(1-R)h_*\right] \\
&=& - R(1-R)G(p_*) \left( \frac{1-q_*}{1-p_*} \right)^{2} = -R(1-R)\frac{G(p_*)}{(1 + \lambda p_*)^2} \\
& = & p_*^{2}G''(p_*-)+2Rp_*G'(p_*-)-R(1-R)G(p_*-)
\end{eqnarray*}
and we conclude that $G''$ is continuous at $p=p_{*}$.

Now we argue that $\frac{(x+y \theta)^{1-R}}{1-R}G( \frac{y \theta}{x+y \theta})$ is strictly increasing and strictly concave. Outside $[p_*,p^*]$ this is immediate form the definition. On $[p_*,p^*]$ the increasing property will follow if $G(p) - \frac{pG'(p)}{1-R}>0$. But this is trivial since
\begin{align*}
G(p) - \frac{pG'(p)}{1-R}=|1-p|^{-R}h(1-W(h))=|1-p|^{-R}N(q)(1-q)=|1-p|^{-R}|1-q|^{R}n(q)^{-R}>0.
\end{align*}
Meanwhile, $\frac{(x+y \theta)^{1-R}}{1-R}G( \frac{y \theta}{x+y \theta})$ is concave on $[p_*,p^*]$ is equivalent to \eqref{eq:VxxG}, or, by the analysis leading to \eqref{eq:wconc} to $qw'(N(q)) + (2R-1)q-R<0$. But this follows from our choice of root in \eqref{eq:ABC}.

\end{proof}

\begin{proof}[Proof of Case (ii) of Proposition \ref{prop:def_g}]
Note that the integrand of $\int_{q_{*}}^{q^{*}}\left(\frac{R}{q(1-R)}\frac{O(q,n(q))}{n(q)}\right)dq$ is everywhere negative and therefore $\int_{q_{*}}^{1}\left(-\frac{R}{q(1-R)}\frac{O(q,n(q))}{n(q)}\right)dq$ exists in $[0, \ln(1+\xi)]$. Hence $-\ln(1+\xi) \leqslant a \leqslant  \ln(1+\xi)$.

For $p\neq 1$, the $C^{2}$ smoothness of $G=G^C$ follows as in the first case of Proposition \ref{prop:def_g}. We will focus on the case of $p=1$.

Suppose first that $p_{*}<1<p^{*}$. Continuity of $G$ and $G'$ at $p=1$ can be established if we can show that both
\begin{align}
\lim_{p\to 1}\frac{1}{G(p)}\left(G(p)-\frac{pG'(p)}{1-R}\right)^{1-1/R}=n(1)
\label{eq:first_con}
\end{align}
and
\begin{align}
\lim_{p\to 1}\frac{pG'(p)}{(1-R)G(p)}=1-e^{a}.
\label{eq:second_con}
\end{align}
Substituting \eqref{eq:second_con} into \eqref{eq:first_con} we recover the given value of $G(1)$.

Using \eqref{eq:Vxdef} and the equivalence of $p\to 1$ and $q\to 1$ we have $\frac{1}{G(p)}\left(G(p)-\frac{pG'(p)}{1-R}\right)^{1-1/R} = |h|^{-1/R} |1-W(h)|^{1-1/R} = |N(q)|^{-1/R}|1-q|^{1-1/R}=n(q) \rightarrow n(1)$ and \eqref{eq:first_con} holds.

For \eqref{eq:second_con} we have,
\begin{align*}
\frac{1-W(h(p))}{1-p}&=\frac{(1-R)h(p)-p(1-p)h'(p)}{(1-R)(1-p) h(p)}=1-\frac{pG'(p)}{(1-R)G(p)}.
\end{align*}

Suppose $p<1$. Then using the definition of $h(p)$,
\begin{align*}
0&=\int_{N(q_{*})}^{h(p)}\frac{du}{w(u)}-\int_{p_{*}}^{p}\frac{du}{u(1-u)} \\
&=\int_{q_{*}}^{W(h(p))}\frac{N'(q)dq}{(1-R)qN(q)}-\int_{p_{*}}^{p}\frac{du}{u(1-u)} \\
&=\int_{q_{*}}^{W(h(p))}\left(\frac{N'(q)}{(1-R)qN(q)}-\frac{1}{q(1-q)}\right)dq+\int_{q_{*}}^{W(h(p))}\frac{dq}{q(1-q)}-\int_{p_{*}}^{p}\frac{du}{u(1-u)} \\
&=\int_{q_{*}}^{W(h(p))}\left(-\frac{R}{u(1-R)}\frac{O(u,n(u))}{n(u)}\right)du-\int_{p_{*}}^{q_{*}}\frac{du}{u(1-u)}-\int_{W(h(p))}^{p}\frac{dq}{q(1-q)} \\
&=\int_{q_{*}}^{W(h(p))}\left(-\frac{R}{u(1-R)}\frac{O(u,n(u))}{n(u)}\right)du-\ln(1+\lambda)-\ln\left(\frac{p}{W(h(p))}\frac{1-W(h(p))}{1-p}\right).
\end{align*}
Letting $p\uparrow 1$ and using the fact that $\lim_{p\to 1}W(h(p))=1$, we obtain
\begin{align}
\lim_{p\uparrow 1}\frac{1-W(h(p))}{1-p}=e^a.
\label{eq:limit_exp}
\end{align}
A similar calculation for $p>1$ gives $\lim_{p\downarrow 1}\frac{W(h(p))-1}{p-1}=e^a$ as well. Hence \eqref{eq:second_con} holds. As a byproduct, we can establish
\begin{align*}
\lim_{p\to 1}G'(p)=(1-R)(1-e^{a})G(1)=(1-R)(1-e^{a})n(1)^{-R}e^{-(1-R)a}.
\end{align*}

Consider now continuity of $G''$ at $p=1$. We show $\lim_{p\to 1}G''(p)$ exists. Consider:
\begin{align*}
\frac{[(1-R)G(p)-pG'(p)]^{2}}{G(p)[p^{2}G''(p)+2RpG'(p)-R(1-R)G(p)]}&=\frac{(1-R)^{2}h(1-W(h))^{2}}{w(h)w'(h)+(2R-1)w(h)-R(1-R)h} \\
&=\frac{(1-R)(1-q)^{2}}{(1-R)qN(q)/N'(q)-(1-q)[R+(1-R)q]} \\
&=\frac{(1-R)\left[1-R-R(1-q)n'(q)/n(q)\right]}{R[R+(1-R)q]n'(q)/n(q)-R(1-R)} .
\end{align*}
Then,
\begin{align}
\lim_{p\to 1}\frac{[(1-R)G(p)-pG'(p)]^{2}}{G(p)[p^{2}G''(p)+2RpG'(p)-R(1-R)G(p)]} &= \lim_{q\to 1}\frac{(1-R)\left[1-R-R(1-q)n'(q)/n(q)\right]}{R\{[R+(1-R)q]n'(q)/n(q)-(1-R)\}} \nonumber \\
&=\frac{(1-R)^{2}}{R[n'(1)/n(1)-(1-R)]}.
\label{eq:2nd_deriv_expression}
\end{align}
Note that $n'(1)/n(1)-(1-R)\neq 0$ since $\sgn(n'(1))=-\sgn(1-R)$. The limit is thus always well defined and can be used to obtain an expression for $\lim_{p\to 1}G''(p)$.

Since $G$ is $C^2$ and \eqref{eq:VxxG} holds for both $p<1$ and $p>1$ it follows that \eqref{eq:VxxG} holds at $p=1$ also and $\frac{(x+y \theta)^{1-R}}{1-R}G( \frac{y \theta}{x+y \theta})$ is concave on $[p_*,p^*]$.

Finally we consider the case where $p_{*}=1$ or $p^{*}=1$. Suppose we are in the former scenario. Then to show the continuity of $G$ at $p_{*}=1$ it is sufficient to show that
\begin{align*}
n(q_{*})^{-R}\left(1+\lambda \right)^{1-R}=n(1)^{-R}e^{-(1-R)a}.
\end{align*}
But $q_{*}=1$ when $p_{*}=1$ and thus $a=-\ln(1+\lambda)$. The above expression then holds immediately. Values of $G'(1)$ and $G''(1)$ can again be inferred from \eqref{eq:second_con} and \eqref{eq:2nd_deriv_expression}.  A similar result follows in the case $p^{*}=1$.

\end{proof}

\section{The candidate value function and the HJB equation}
\label{subsect:candidate_valfun}

In this section we verify that the candidate value function given in Proposition~\ref{prop:def_g} solves the HJB variational inequality
\begin{align}
\min\left(-\sup_{c>0,\pi}\mathcal{L}^{c,\pi}V^{C},-\mathcal{M}V^{C},-\mathcal{N}V^{C}\right)=0
\label{eq:hjb_ineq}
\end{align}
where $\mathcal{L}$, $\mathcal{M}$ and $\mathcal{N}$ are the operators
\begin{align*}
\mathcal{L}^{c,\pi}f&:=\frac{c^{1-R}}{1-R}-cf_{x}+\frac{\sigma^{2}}{2}f_{xx}\pi^{2}+((\mu-r) f_{x}+\sigma\eta\rho f_{xy}y)\pi \\
&\qquad+r f_{x} x+\alpha f_{y} y+\frac{\eta^{2}}{2}f_{yy}y^{2}-\delta f, \\
\mathcal{M}f&:=f_{\theta}-(1+\lambda)yf_{x},\\
\mathcal{N}f&:=(1-\gamma)yf_{x}-f_{\theta}.
\end{align*}

Note that for $f=f(x,y,\theta)$ which is strictly increasing and concave in $x$ we have
\begin{align*}
\mathcal{L}^{*}f:=\sup_{c>0,\pi}\mathcal{L}^{c,\pi}f=\frac{R}{1-R}f_{x}^{1-1/R}+rx f_{x}+\alpha yf_{y}+\frac{\eta^{2}}{2}y^{2}f_{yy}-\frac{(\beta f_{x}+\eta\rho yf_{xy})^{2}}{2f_{xx}}-\delta f
\end{align*}
and thus it is equivalent to show that $\min\left(-\mathcal{L}^{*}V^{C},-\mathcal{M}V^{C},-\mathcal{N}V^{C}\right)=0$. From construction of $V^{C}$, it is trivial that $\mathcal{L}^{*}V^{C}=0$, $\mathcal{M}V^{C}=0$ and $\mathcal{N}V^{C}=0$ on the no-transaction region, purchase-region and sale-region respectively. Hence it remains to show that
\begin{align*}
\begin{cases}
\mathcal{L}^{*}V^{C}\leqslant 0,\quad \mathcal{N}V^{C}\leqslant 0,&-1/\lambda\leqslant p<p_{*}; \\
\mathcal{M}V^{C}\leqslant 0,\quad \mathcal{N}V^{C}\leqslant 0 ,& p_{*}\leqslant p\leqslant p^{*}; \\
\mathcal{L}^{*}V^{C}\leqslant 0,\quad \mathcal{M}V^{C}\leqslant 0,&p^{*}<p\leqslant1/\gamma.
\end{cases}
\end{align*}

On the purchase region $p\in[-1/\lambda,p_{*})$, direct substitution reveals that
\begin{align*}
\mathcal{N}V^{C}=-\left(\frac{b_{1}}{Rb_{4}}\right)^{-R}n(q_{*})^{-R}(\lambda+\gamma)y(x+y\theta)^{-R}(1+\lambda p)^{-R}\leqslant 0,
\end{align*}
and
\begin{align*}
\mathcal{L}^{*}V^{C}=\frac{R(x+y\theta)^{1-R}}{1-R}\left(\frac{b_{1}}{Rb_{4}}\right)^{1-R}(1+\lambda p)^{1-R}n(q_{*})^{-R}\left(m(q_{*})-m\left(\frac{(1+\lambda)p}{1+\lambda p}\right)\right)\leqslant 0
\end{align*}
where we have used the facts that $n(q_{*})=m(q_{*})$, $\frac{(1+\lambda)p}{1+\lambda p}<\frac{(1+\lambda)p_{*}}{1+\lambda p_{*}}=q_{*}$ and the quadratic $m(q)$ is decreasing (respectively increasing) over $q<q_{*}<q_{M}$ when $R<1$ (respectively $R>1$). Similar calculations can be performed on the sale region $p\in(p^{*},1/\gamma]$ to show that $\mathcal{M}V^{C}\leqslant 0$ and $\mathcal{L}^{*}V^{C}\leqslant 0$.




Now we show that $\mathcal{M}V^{C}\leqslant 0$  on the no-transaction region $p\in[p_{*},p^{*}]$. The inequality $\mathcal{N}V^{C}\leqslant 0$ can be proved in an identical fashion. Again writing $G$ as shorthand for $G^C$, we have
\begin{align*}
\mathcal{M}V^{C}&=V^{C}_{\theta}-(1+\lambda)yV^{C}_{x} =\frac{pV^{C}}{\theta}\left[(1+\lambda p)\frac{G'(p)}{G(p)}-\lambda(1-R)\right].
\end{align*}
Since $\sgn(V^{C})=\sgn(1-R)$, it is necessary and sufficient to show
\begin{align*}
\sgn(1-R)\left[(1+\lambda p)\frac{G'(p)}{G(p)}-\lambda(1-R)\right]\leqslant 0.
\end{align*}
But $G(p)=\sgn(1-p)h(p)|1-p|^{1-R}$ for $p\neq 1$,  and then
\begin{align*}
\frac{G'(p)}{G(p)}=\frac{h'(p)}{h(p)}-\frac{1-R}{1-p}=\frac{w(h)}{h(p)p(1-p)}-\frac{1-R}{1-p}=\frac{1-R}{1-p}\left(\frac{W(h)}{p}-1\right)
\end{align*}
and the required inequality becomes
\begin{align}
\frac{1-W(h)}{1-p}\geqslant \frac{1}{1+\lambda p}.
\label{eq:MV_ineq}
\end{align}
We are going to prove \eqref{eq:MV_ineq} for $p\in[p_{*},p^{*}]\setminus\{1\}$. Then $\mathcal{M}V^{C}\leqslant 0$ will hold at $p=1$ as well by smoothness of $V^{C}$.

By construction $q = W(h(p))$. Since $W$ is monotonic and $h$ is monotonic except possibly at $p=1$ it follows that $q$ is an increasing function of $p$. Then, starting from the identity
\[ \int_{N(q_*)}^{N(q)} \frac{dh}{w(h)} = \int_{p_*}^{p} \frac{du}{u(1-u)} \]
and following the substitutions leading to \eqref{eq:tci}, we find
\begin{align*}
\int_{q_{*}}^{q}\left(-\frac{R}{u(1-R)}\frac{O(u,n(u))}{n(u)}\right)du= - \int_{q_*}^{q} \frac{dv}{v(1-v)} + \int_{p_*}^{p} \frac{du}{u(1-u)}.
\end{align*}
Since the expression on the left hand side is increasing in $q$, we deduce
\begin{align*}
\frac{1}{q(1-q)} \frac{dq}{dp}\leqslant \frac{1}{p(1-p)}.
\end{align*}

Define $\chi(p):=\frac{(1+\lambda)p}{1+\lambda p}$. then $\chi$ is a solution to the ODE $\chi'(p)=\varrho(p,\chi(p))$ where $\varrho(p,y)=\frac{y(1-y)}{p(1-p)}$.
Note that $\chi(p_{*})=\frac{(1+\lambda)p_{*}}{1+\lambda p_{*}}=q_{*}=q(p_{*})$.

Suppose $p_{*}<p^{*}< 1$. Then for $p<1$ and in turn $q=q(p)=W(h(p))<1$ we have
$q'(p)\leqslant \varrho(p,q(p))$, and we conclude $q(p)\leqslant \chi(p)$ for $p_{*}\leqslant p<p^* \leqslant 1$. Then
\begin{align*}
1-W(h(p))=1-q(p)\geqslant1-\chi(p)= \frac{1-p}{1+\lambda p}
\end{align*}
which establishes \eqref{eq:MV_ineq}. If instead $1<p_{*}<p^{*}$, we can arrive at the same result by showing $q(p)\geqslant \chi(p)$ for $1<p_{*}\leqslant p$ and in turn $\frac{dq}{dp}\geqslant \frac{q(q-1)}{p(p-1)}$.

It remains to consider the case of $p_{*}\leqslant 1\leqslant p^{*}$. The only issue is that the comparison of derivatives of $q(p)$ and $\chi(p)$ may not be trivial at $p=1$ because of the singularity in $\varrho(p,y)$. But by direct computation, we find $\chi'(1)=\frac{1}{1+\lambda}$. On the other hand,
\begin{align*}
q'(1-)=\lim_{p\uparrow 1}\frac{1-q(p)}{1-p}=\lim_{p\uparrow 1}\frac{1-W(h(p))}{1-p}=e^{a}
\end{align*}
due to \eqref{eq:limit_exp} and similarly we have $q'(1+)=e^{a}$. Then $q'(1)$ is well-defined, and moreover since $a>-\ln(1+\lambda)$ we have
\begin{align*}
q'(1)=e^{a}>1/(1+\lambda)=\chi'(1).
\end{align*}
Together with the fact that $q(1)=1=\chi(1)$, we must have that $q(p)$ is an upcrossing of $\chi(p)$ at $p=1$. From this we conclude $q(p)\leqslant \chi(p)$ on $p\in[p_{*},1)$ and $\chi(p)\leqslant q(p)$ on $p\in(1,p^{*}]$.  \eqref{eq:MV_ineq} then follows.

\section{Proof of the main results}
\label{app:pfmain}
\begin{proof}[Proof of Theorems \ref{thm:wellposed} and \ref{thm:valfun}]

We prove the two theorems together. Suppose we are in the well-posed cases. From the analysis in Section \ref{sect:fbp}, there exists a solution $(n(\cdot),q_{*},q^{*})$ to the free boundary value problem with $n$ being strictly positive. By the $C^{2}$ smoothness of $G^{C}$, $V^{C}$ is $C^{2\times 2 \times 1}$. Moreover, in Appendices \ref{app:prop} and \ref{subsect:candidate_valfun} we saw that $V^{C}$ is a strictly concave function in $x$ solving the HJB variational inequality \eqref{eq:hjb_ineq}.

Let $M_{t}:=\int_{0}^{t}e^{-\delta s}\frac{C_{s}^{1-R}}{1-R}ds+e^{-\delta t}V^{C}(X_{t},Y_{t},\Theta_{t})$. Applying Ito's lemma, we obtain
\begin{align*}
M_{t}&=M_{0}+\int_{0}^{t}e^{-\delta s}\mathcal{L}^{C_{s},\Pi_{s}}V^{C}ds+\int_{0}^{t}e^{-\delta s}\mathcal{M}V^{C}d\Phi_{s}+\int_{0}^{t}e^{-\delta s}\mathcal{N}V^{C}d\Psi_{s} \\
&\qquad+\int_{0}^{t}e^{-\delta s}\sigma V^{C}_{x}\Pi_{s}dB_{s}+\int_{0}^{t}e^{-\delta s}\eta V^{C}_{y} Y_{s} dW_{s}\\
&\leqslant M_{0}+\int_{0}^{t}e^{-\delta s}\sigma V^{C}_{x}\Pi_{s}dB_{s}+\int_{0}^{t}e^{-\delta s}\eta V^{C}_{y} Y_{s} dW_{s}.
\end{align*}

Suppose $R<1$. Then $M_{t}\geqslant 0$, and the sum of the stochastic integrals is a local martingale bounded below by $-M_{0}$ and in turn it is a supermartingale. Thus $\mathbb{E}(M_{t})\leqslant M_{0}=V^{C}(x,y,\theta)$ which gives
\begin{align*}
\mathbb{E}\left(\int_{0}^{t}e^{-\delta s}\frac{C_{s}^{1-R}}{1-R}ds\right)\leqslant V^{C}(x,y,\theta)-\mathbb{E}\left(e^{-\delta t}V^{C}(X_{t},Y_{t},\Theta_{t})\right)\leqslant V^{C}(x,y,\theta).
\end{align*}
On sending $t\to\infty$, we obtain $\mathbb{E}\left(\int_{0}^{\infty}e^{-\delta s}\frac{C_{s}^{1-R}}{1-R}ds\right)\leqslant V^{C}$ by monotone convergence and thus $V\leqslant V^{C}$ since $C$ is arbitrary.

If $R>1$, then the above argument does not go through directly since the local martingale will not be bounded below. But using the argument of \cite{DavisNorman:90}, we can consider a perturbed candidate value function which is bounded on the no-transaction region and define a version of the value process $M$ which will be a supermartingale. The result can be obtained by considering the limit of the perturbed candidate value function.

To show $V^{C}\leqslant V$, it is sufficient to demonstrate the existence of an investment/consumption strategy which attains the value $V^{C}$. Suppose the initial value $(x,y\theta)$ is such that $\frac{y\theta}{x+y \theta} = p\in[p_{*},p^{*}]$. Define feedback controls $C^{*}=(C^{*}_{t})_{t\geqslant 0}$ and  $\Pi^{*}=(\Pi^{*}_{t})_{t\geqslant 0}$ with $C^{*}_{t}=C^{*}(X_{t},Y_{t},\Theta_{t})$ and $\Pi^{*}_{t}=\Pi^{*}(X_{t},Y_{t},\Theta_{t})$ where
\begin{align*}
C^{*}(x,y,\theta):=[V_{x}^{C}(x,y,\theta)]^{-\frac{1}{R}},\quad \Pi^{*}(x,y,\theta):=-\frac{(\mu-r)V^{C}_{x}(x,y,\theta_{t})+\sigma\eta\rho y V^{C}_{xy}(x,y,\theta_{t})}{\sigma^{2}V^{C}_{xx}(x,y,\theta_{t})},
\end{align*}
and $\Theta^{*}=(\Theta^{*}_{t})_{t\geqslant 0}$ a finite variation, local time strategy in form of $\Theta^{*}_{t}=\theta+\Phi_{t}^{*}-\Psi_{t}^{*}$ which keeps $P_{t}$ within $(p_{*},p^{*})$. Let $X^{*}$ be the liquid wealth process evolving under these controls. Now since $(X^{*},Y\Theta^{*})$ is always located in the no-transaction wedge, this strategy is clearly admissible.

Let $M^{*}$ be the process $M^{*}=(M^{*}_{t})_{t\geqslant 0}$ evolving under this controlled system. Then
\begin{align*}
M^{*}_{t}&=M^{*}_{0}+\int_{0}^{t}e^{-\delta s}\mathcal{L}^{C^{*}_{s},\Pi^{*}_{s}}V^{C}ds+\int_{0}^{t}e^{-\delta s}\mathcal{M}V^{C}d\Phi^{*}_{s}+\int_{0}^{t}e^{-\delta s}\mathcal{N}V^{C}d\Psi^{*}_{s} \\
&\qquad+\int_{0}^{t}e^{-\delta s}\sigma V^{C}_{x}\Pi^{*}_{s}dB_{s}+\int_{0}^{t}e^{-\delta s}\eta V^{C}_{y} Y_{s} dW_{s}\\
&=:M^{*}_{0}+N_{t}^{1}+N_{t}^{2}+N_{t}^{3}+N_{t}^{4}+N_{t}^{5}.
\end{align*}
By construction of $C^{*}$ and $\Pi^{*}$, $N_{t}^{1}=0$. Moreover, $\Phi^{*}$ is carried by the set $\{P_{t}=p_{*}\}$ over which $\mathcal{M}V^{C}_{s}=0$. Hence $N_{t}^{2}=0$, and similarly $N_{t}^{3}=0$. Following ideas similar to Davis and Norman~\cite{DavisNorman:90}, it can be shown (see Tse~\cite{tse:16}) that the local-martingale stochastic integrals $N^{4}$ and $N^{5}$ are martingales. Then on taking expectation we have
\begin{align}
\mathbb{E}\left(\int_{0}^{t}e^{-\delta s}\frac{(C^{*}_{s})^{1-R}}{1-R}ds\right)+\mathbb{E}(e^{-\delta t}V^{C}(X^{*}_{t},Y_{t},\Theta^{*}_{t}))=\mathbb{E}(M^{*}_{t})=M^{*}_{0}=V^{C}.
\label{eq:opteq}
\end{align}
Further, it can also be shown (see Tse~\cite{tse:16}) that $\lim_{t\to\infty}\mathbb{E}(e^{-\delta t}V^{C}(X^{*}_{t},Y_{t},\Theta^{*}_{t}))=0$. Then letting $t\to\infty$ in \eqref{eq:opteq} gives
\begin{align*}
V^{C}&=\mathbb{E}\left(\int_{0}^{\infty}e^{-\delta s}\frac{(C^{*}_{s})^{1-R}}{1-R}ds\right)\leqslant \sup_{(C,\Pi,\Theta)\in\mathcal{A}(0,x,y,\theta)}\mathbb{E}\left(\int_{0}^{\infty}e^{-\delta s}\frac{C_{s}^{1-R}}{1-R}ds\right)=V.
\end{align*}

Now suppose the initial value $(x,y\theta)$ is such that $p<p_{*}$. Then consider a strategy of purchasing $\phi=\frac{xp_{*}-(1-p_{*})y\theta}{y(1+\lambda p_{*})}$ number of shares at time zero such that the post-transaction proportional holding in the illiquid asset is $\frac{y(\theta+\phi)}{x+y(\theta+\phi)-y(1+\lambda)\phi}=p_{*}$, and then follow the investment/consumption strategy $(C^{*},\Pi^{*},\Theta^{*})$ as in the case of $p\in[p_{*},p^{*}]$ thereafter. By construction of $V^{C}$, $V^{C}(x,y,\theta)=V^{C}(x-y(1+\lambda)\phi,y,\theta+\phi)$. Using \eqref{eq:opteq} we have
\begin{align*}
\mathbb{E}\left(\int_{0}^{t}e^{-\delta s}\frac{(C^{*}_{s})^{1-R}}{1-R}ds\right)+\mathbb{E}(e^{-\delta t}V^{C}(X^{*}_{t},Y_{t},\Theta^{*}_{t}))=V^{C}(x-y(1+\lambda)\phi,y,\theta+\phi)=V^{C}(x,y,\theta)
\end{align*}
and from this we can conclude $V^{C}\leqslant V$. Similar argument applies for initial value $p>p^{*}$.

Now we consider the set of parameters which leads to unconditional ill-posedness. It is sufficient to show that the problem without the liquid asset (which is the classical transaction cost problem involving one single risky asset only) is ill-posed. Note that $\ell(1)\leqslant0$ is equivalent to $b_{3}\geqslant \frac{b_{1}}{1-R}+b_{2}R$ and this inequality can be restated as $\alpha\geqslant \frac{1}{2}\eta^{2}R+\frac{\delta}{1-R}$. But this is exactly the ill-posedness condition in the one risky asset case. See \cite{HobsonTseZhu:16} or \cite{ChoiSirbuZitkovic:13}.

Finally we consider the conditionally well-posed case. From the discussion in Section \ref{sect:fbp}, it is clear that as long as $\xi>\overline{\xi}$ there still exists $(n(\cdot),q_{*},q^{*})$ a solution to the free boundary value problem and thus one could show $V^{C}=V$ following the same argument in the proof for the unconditionally well-posed cases. Moreover, from Lemma \ref{lemma:onto} we can see that $n(\cdot)\downarrow 0$ as $\xi\downarrow \overline{\xi}$, in turn $V^{C}\to\infty$ from its construction. But $V\geqslant V^{C}$ and thus we conclude $V\to\infty$ as $\xi\downarrow \overline{\xi}$. This shows the ill-posedness of the problem at $\xi=\overline{\xi}$, and using the monotonicity of $V$ in $\xi$ this conclusion extends to any $\xi\leqslant \overline{\xi}$.

\end{proof}

\section{The first order differential equation}
\label{app:ode}
For convenience, we recall some notations, and introduce some more:
\begin{eqnarray}
m(q) & = & \frac{R(1-R)}{b_{1}}q^{2}-\frac{b_{3}(1-R)}{b_{1}}q+1, \nonumber \\
\ell(q) & = & m(q)+\frac{1-R}{b_{1}}q(1-q)+\frac{(b_{2}-1)R(1-R)}{b_{1}}\frac{q}{(1-R)q+R}, \nonumber \\
\varphi(q,n) & = & b_{1}(n-1)+(1-R)(b_{3}-2R)q+(2-b_{2})R(1-R), \nonumber \\
E(q)^{2} & = & 4R^{2}(1-R)^{2}(b_{2}-1)(1-q)^{2}, \nonumber \\
v(q,n) & = & \varphi(q,n)-\sgn(1-R)\sqrt{\varphi(q,n)^{2}+E(q)^{2}} , \nonumber \\
D(q,n) & = & 2b_{1}[(1-R)q+R][n-m(q)]-q\left[v(q,n)-v(q,m(q))\right], \nonumber \\
A(q,n) & = & (\ell(q)-n)\left(2b_{1}[(1-R)q+R]-b_{1}q\left(1- \sgn(1-R)\frac{\varphi}{\sqrt{\varphi^{2}+E^{2}}}\right)\right)+D(q,n). \label{eq:Adef}
\end{eqnarray}

We begin with a useful lemma.
\begin{lemma}
$O(q,n)$ has an alternative expression
\begin{align}
O(q,n)=-\frac{(1-R)nD(q,n)}{2R(1-q)[(1-R)q+R]b_{1}[\ell(q)-n]} .
\label{eq:formOalt}
\end{align}
\label{lemma:different_expression_o}
\end{lemma}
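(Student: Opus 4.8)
The plan is to start from the defining formula \eqref{eq:formO} and merely place its two terms over a common denominator, then match the result with \eqref{eq:formOalt}. Throughout I would write $K:=(1-R)q+R$ and introduce the conjugate root $\overline{v}(q,n):=\varphi(q,n)+\sgn(1-R)\sqrt{\varphi(q,n)^2+E(q)^2}$, so that the denominator of the second term of \eqref{eq:formO} is exactly $2(1-R)(1-q)K-\overline v(q,n)$, and $\overline v$ satisfies $v+\overline v=2\varphi$, $v\,\overline v=-E^2$, and the quadratic relation $\overline v^2-2\varphi\,\overline v-E^2=0$. Using $K-q=R(1-q)$, combining the two fractions of \eqref{eq:formO} gives
\[ O(q,n)=\frac{(1-R)n\,\big(2R(1-R)(1-q)^2-\overline v(q,n)\big)}{R(1-q)\big(2(1-R)(1-q)K-\overline v(q,n)\big)}, \]
so it remains to prove the scalar identity $\frac{2R(1-R)(1-q)^2-\overline v}{2(1-R)(1-q)K-\overline v}=-\frac{D(q,n)}{2Kb_1(\ell(q)-n)}$.

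Next I would record the evaluations that drive everything. Since $\varphi$ is affine in $n$ with $\varphi(q,n)-\varphi(q,m(q))=b_1(n-m(q))$, direct substitution gives $\varphi(q,m(q))=R(1-R)[(1-q)^2-(b_2-1)]$, and then $\varphi(q,m)^2+E^2=R^2(1-R)^2[(1-q)^2+(b_2-1)]^2$ is a perfect square; hence $\overline v(q,m(q))=2R(1-R)(1-q)^2$ (this depends on $q$ only through $(1-q)^2$, so no sign ambiguity arises). Consequently the numerator above equals $\overline v(q,m)-\overline v(q,n)=:-\delta$, while the constant part of the denominator satisfies $2(1-R)(1-q)K-\overline v(q,m)=2q(1-R)(1-q)$, so the denominator is $2q(1-R)(1-q)-\delta$.

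I would then re-express $D$ in terms of $\overline v$: from $v+\overline v=2\varphi$ and affineness, $v(q,n)-v(q,m)=2b_1(n-m)-\delta$, whence $D=2b_1R(1-q)(n-m)+q\,\delta$ after using $K-q=R(1-q)$. Substituting these expressions and cross-multiplying reduces the required identity to a polynomial relation; cancelling a factor $q$ and writing $\ell-n=(\ell-m)-(n-m)$ with $b_1(\ell-m)=(1-R)q(1-q)+(b_2-1)R(1-R)q/K$ from \eqref{eq:elldef}, this relation is
\[ \delta^2-2b_1(n-m)\delta+2R(1-R)[(1-q)^2+(b_2-1)]\delta-4b_1R(1-R)(1-q)^2(n-m)=0. \]
Finally I would observe that this is precisely the quadratic $\overline v^2-2\varphi\,\overline v-E^2=0$ written in the variables $\delta=\overline v-\overline v(q,m)$ and $n-m=(\varphi-\varphi(q,m))/b_1$: expanding that quadratic about $n=m$, the constant term is $\overline v(q,m)^2-2\varphi(q,m)\overline v(q,m)-E^2$, which vanishes since $\overline v(q,m)$ is a root at $\varphi=\varphi(q,m)$, and the surviving coefficients coincide with the display above once one uses $\overline v(q,m)-\varphi(q,m)=\sgn(1-R)\sqrt{\varphi(q,m)^2+E^2}=R(1-R)[(1-q)^2+(b_2-1)]$.

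The main obstacle is bookkeeping rather than conceptual: the crux is the perfect-square evaluation of $\varphi(q,m)^2+E^2$ together with the parallel simplification of $b_1(\ell-m)$, and then recognising that the fully reduced identity is nothing but the defining quadratic of $\overline v$ expanded around $n=m(q)$. Care is needed only with the sign convention $\sgn(1-R)|1-R|=1-R$, and with the fact that the manipulation is valid away from the removable singularities $q=1$ and $n=\ell(q)$, at which \eqref{eq:formOalt} is then extended by continuity.
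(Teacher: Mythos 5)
Your proof is correct, but it is organised along a genuinely different algebraic route from the paper's. The paper never combines the two fractions of \eqref{eq:formO}: it computes $b_1(\ell(q)-n)+\varphi(q,n)$ in closed form, multiplies up to exhibit $4b_1(1-R)(1-q)[(1-R)q+R](\ell(q)-n)$ as the difference of squares $\{2(1-R)(1-q)[(1-R)q+R]-\varphi\}^{2}-\{\varphi^{2}+E^{2}\}$, thereby rationalising the denominator of the second term of \eqref{eq:formO} by its conjugate $2(1-R)(1-q)[(1-R)q+R]-v(q,n)$, and then finishes by substituting back and using $-q\,v(q,m(q))=2R(1-R)(b_{2}-1)q$. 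You instead combine the fractions first (the numerator becoming $\overline{v}(q,m(q))-\overline{v}(q,n)$ via the evaluation $\overline{v}(q,m(q))=2R(1-R)(1-q)^{2}$), recentre the whole identity at $n=m(q)$ in the increments $\delta$ and $n-m$, and recognise the cross-multiplied relation as the defining quadratic $\overline{v}^{2}-2\varphi\,\overline{v}-E^{2}=0$ expanded about a known root; I checked the key evaluations, the reduction $D=2b_1R(1-q)(n-m)+q\delta$, and the final polynomial identity, and all are right. The paper's route is shorter once the difference-of-squares factorisation is spotted, and it makes the appearance of $\ell(q)-n$ in the new denominator transparent from the outset; yours replaces that flash of recognition with a mechanical expansion about a root, which sits naturally with the facts, used elsewhere in the paper, that $O(q,m(q))=0=D(q,m(q))$. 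Note that both arguments ultimately rest on the same perfect-square evaluation at $n=m(q)$ (the paper invokes it as $v(q,m(q))=-2R(1-R)(b_{2}-1)$, equivalent to your $\overline{v}(q,m(q))=2R(1-R)(1-q)^{2}$ since $v+\overline{v}=2\varphi$). Two cosmetic points: your symbol $\overline{v}$ clashes with the paper's $\overline{v}$ in Appendix~\ref{app:compstat}, so it should be renamed; and, as you correctly flag, the cross-multiplication is legitimate only where $\ell(q)-n\neq 0$ and the denominator of \eqref{eq:formO} is nonzero, the same implicit restriction present in the paper's proof, with \eqref{eq:formOalt} extended by continuity across the removable points.
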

\begin{proof}
Consider
\begin{eqnarray*}
\lefteqn{b_1(\ell(q) - n) + \varphi(q, n)} \\
 & = & R(1-R)q^2 - b_3(1-R)q + b_1 - b_1 n + (1-R)q(1-q) +
\frac{(b_2 - 1)R(1-R)q}{(1-R)q + R}\\
& & \hspace{5mm} + b_1 n - b_1 + b_3(1-R)q + R(1-R)[-2q + 2 - b_2]\\
& = & R(1-R)\left[(1 - q)^2 - (b_2 - 1) + \frac{(b_2 - 1)q}{(1 - R)q + R}\right] + (1-R)q(1-q) \\
& = & (1-R)(1-q)[R(1-q) + q] - \frac{(b_2 - 1)R^2(1-R)}{(1-R)q + R} (1-q).
\end{eqnarray*}
Then, noting that $(1-R)q + R= R(1-q) + q$,
\begin{eqnarray*}
\lefteqn{b_1 [(1-R)q + R](\ell(q) - n)} \\
& = & (1-R)(1-q)[R(1-q) + q]^2 - R^2(1-R)(b_2 -  1)(1-q) - \varphi(q,n)[R(1-q) + q],
\end{eqnarray*}
and multiplying by $4(1-R)(1-q)$,
\begin{eqnarray*}
\lefteqn{4b_1(1-R)(1-q)[(1-R)q + R](\ell(q) - n)} \\
& = & 4(1-R)^2(1-q)^2[R(1-q) + q]^2 - 4\varphi(q,n)(1-R)(1-q)[R(1-q) + q] + \varphi(q,n)^2 \\
& & \hspace{5mm} - \{\sgn(1-R)\}^2 \left(\varphi(q,n)^2 + 4R^2(1-R)^2(b_2 - 1)(1-q)^2\right) \\
& = & \left\{2(1-R)(1-q)[R(1-q)+q] - \varphi(q,n)\right\}^2 -\left\{\sgn(1-R)\right\}^2
\left\{\varphi(q,n)^2 + E(q)^2\right\}.
\end{eqnarray*}
Writing this last expression as the difference of two squares we find
\begin{align*}
&2(1-R)(1-q)[(1-R)q + R]- \varphi(q,n) -\sgn(1-R)\sqrt{\varphi(q,n)^2 + E(q)^2} \\
&\qquad =\frac{4b_1(1-R) (1-q) [(1-R)q + R](\ell(q) - n)}{2(1-R)(1-q)[R(1-q) + q] - v(q,n)}.
\end{align*}
Then
\begin{align*}
O(q,n)&=\frac{(1-R)n}{R(1-q)}-\frac{2(1-R)^{2}qn/R}{2(1-R)(1-q)\left[(1-R)q+R\right]-\varphi(q,n)-\sgn(1-R)\sqrt{\varphi(q,n)^{2}+E(q)^{2}}} \\
&=\lefteqn{\frac{(1-R)n}{R(1-q)} \left\{1 - \frac{(1-R)q(1-q)}{b_1 (\ell(q) - n)} + \frac{qv(q,n)}
{2b_1 [(1-R)q + R] (\ell(q) - n)}\right\} } \\
&=  \frac{(1-R)n \left\{2b_1 (\ell(q) - n) [(1-R)q + R]
-2[(1-R)q + R](1-R)q(1-q) + qv(q,n) \right\}}{2b_1R[(1-R)q + R](1-q) (\ell(q) - n)} \\
& = \frac{(1-R)n \left\{2b_1 [(1-R)q + R] \left[(\ell(q) - m(q)) - (n - m(q)) - \frac{(1-R)q(1-q)}{b_1}\right] + qv(q,n)\right\}}
{2b_{1}R(1-q)[(1-R)q + R](l(q) - n)}.
\end{align*}
The result then follows since
\[
2b_1 [(1-R)q + R] \left\{\ell(q) - m(q) - \frac{(1-R)q(1-q)}{b_1}\right\} = 2R(1-R)(b_2 - 1)q = -qv(q,m).
\]

\end{proof}

\begin{proof}[Proof of Lemma \ref{lemma:list}]


(1) Observe that
\begin{align*}
\ell(q)-m(q)&=\frac{1-R}{b_{1}}q(1-q)+\frac{(b_{2}-1)R(1-R)}{b_{1}}\frac{q}{(1-R)q+R} \\
&=\frac{(1-R)q}{b_{1}[(1-R)q+R]}P(q)
\end{align*}
where $P(q)=Rb_2 + (1-2R)q-(1-R)q^{2}$.
Hence the crossing points of $\ell(q)$ and $m(q)$ away from $q=0$ are given by the roots of $P(q)=0$ if such roots exist. Note that $P(-\frac{R}{1-R})=P(1)=R(b_{2}-1)>0$, since by assumption, $b_{2}>1$.

If $R<1$, then since $P$ is inverse U-shaped and $P(1)>0$ there must be two distinct solutions of the quadratic equation $P(q)=0$. As $0<P(1)=P(-R/(1-R))$, we must have $P(q)>0$ on $q\in[-R/(1-R),1]$, and the two roots must be found outside this interval. If $R>1$, the minima of $P(q)$ is given by $q_{P}:=\frac{2R-1}{2(R-1)}$. Note that $1<q_{P}<R/(R-1)$, and since $0<P(1)=P(R/(R-1))$, the root(s) of $P(q)=0$ must be contained on the interval $(1,R/(R-1))$ if they exist. The desired results can be established easily using these properties of $P$.

(2) The behaviour at $q=-R/(1-R)$ is only relevant for $R>1$ so we write this as $q=R/(R-1)$.
Note that $\ell$ explodes at $q =\frac{R}{R-1}$. It is sufficient to check the denominator of $O(q,n)$ is not equal to zero at $q=R/(R-1)$. Direct calculation gives
\[ [(1-R)q+R][\ell(q)-n] |_{q = \frac{R}{R-1}}
=-\frac{(b_{2}-1)R^{2}}{b_{1}} \]
and hence
\begin{equation}
\left. 2R(1-q)[(1-R)q+R]b_{1}[\ell(q)-n]\right|_{q= \frac{R}{R-1}}=\frac{2R^{3}(b_{2}-1)}{(R-1)}\neq0.
\label{eq:denO}
\end{equation}

(3) The following lemma records some useful identities.
\begin{lemma}
\begin{eqnarray*}
\varphi(q,m(q)) & = & R(1-R) \{ (1-q)^2 - (b_2-1) \}, \\
\varphi(q,\ell(q)) & = & (1-R)(1-q)\left\{(1-R)q+R-\frac{(b_{2}-1)R^{2}}{(1-R)q+R}\right\} ,\\
\varphi(1,n) & = & b_1(n - \ell(1)), \\
v(q,m(q)) & = & - 2R(1-R)(b_2-1), \\
v(q,\ell(q)) & = & \begin{cases}
-\frac{2R^{2}(1-R)(1-q)(b_{2}-1)}{(1-R)q+R},&(1-q)[(1-R)q+R]>0; \\
2(1-R)(1-q)[(1-R)q+R],&(1-q)[(1-R)q+R]<0,
\end{cases} \\
v(1,n) & = & 
\varphi(1,n)-\sgn(1-R)|\varphi(1,n)|.
\end{eqnarray*}
\end{lemma}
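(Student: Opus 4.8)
The plan is to verify the six identities one at a time by direct substitution, exploiting two structural features of the definitions. The first is that $\varphi(q,n)$ is affine in $n$ with slope $b_1$, so $\varphi(q,n_2)-\varphi(q,n_1)=b_1(n_2-n_1)$; this links the $\ell$- and $m$-versions and handles the $q=1$ case cheaply. The second, which does the real work in the $v$-identities, is that $\varphi(q,m(q))^2+E(q)^2$ and $\varphi(q,\ell(q))^2+E(q)^2$ turn out to be perfect squares, so the square root in $v(q,n)=\varphi(q,n)-\sgn(1-R)\sqrt{\varphi(q,n)^2+E(q)^2}$ can be extracted explicitly.

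For the three $\varphi$-identities I would proceed as follows. Substituting $m(q)$ gives $b_1(m(q)-1)=R(1-R)q^2-b_3(1-R)q$, and adding the remaining terms of $\varphi$ the $b_3$-contributions cancel, leaving $R(1-R)\{(1-q)^2-(b_2-1)\}$. The $\ell$-version then follows from $\varphi(q,\ell(q))=\varphi(q,m(q))+b_1(\ell(q)-m(q))$: writing $s:=(1-R)q+R=R(1-q)+q$ and using $q-s=-R(1-q)$, I would collect $b_1(\ell(q)-m(q))=(1-R)q(1-q)+(b_2-1)R(1-R)q/s$ over the common denominator $s$ to obtain $(1-R)(1-q)\{s-(b_2-1)R^2/s\}$. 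For $\varphi(1,n)$, since both sides are affine in $n$ with slope $b_1$, it suffices to match constant terms, i.e. to check $\varphi(1,0)=-b_1\ell(1)$, which is immediate from $\ell(1)=m(1)+(b_2-1)R(1-R)/b_1$.

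The substantive step is the evaluation of $v$. With $A=(1-q)^2$ and $B=b_2-1$ one has $(A-B)^2+4AB=(A+B)^2$, which gives
\begin{align*}
\varphi(q,m(q))^2+E(q)^2=R^2(1-R)^2\{(1-q)^2+(b_2-1)\}^2 .
\end{align*}
As $b_2>1$ the bracket is strictly positive, so the root is $R|1-R|\{(1-q)^2+(b_2-1)\}$; multiplying by $\sgn(1-R)$, using $\sgn(1-R)|1-R|=1-R$ and subtracting from $\varphi(q,m(q))$ yields $-2R(1-R)(b_2-1)$. The identity for $v(q,\ell(q))$ is analogous: the same algebraic fact with $A=s^2$ and $B=(b_2-1)R^2$ gives
\begin{align*}
\varphi(q,\ell(q))^2+E(q)^2=\frac{(1-R)^2(1-q)^2}{s^2}\{s^2+(b_2-1)R^2\}^2 .
\end{align*}
Taking the square root introduces the factor $|1-q|/|s|$, which equals $(1-q)/s$ when $(1-q)s>0$ and $-(1-q)/s$ when $(1-q)s<0$; feeding each choice into $v=\varphi-\sgn(1-R)\sqrt{\cdots}$ produces the two branches $-2R^2(1-R)(1-q)(b_2-1)/s$ and $2(1-R)(1-q)s$ respectively. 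Finally $v(1,n)$ is trivial, since $E(1)=0$ forces $\sqrt{\varphi(1,n)^2}=|\varphi(1,n)|$.

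I expect the sign bookkeeping in $v(q,\ell(q))$ to be the only genuine obstacle, since one must simultaneously track the explicit factor $\sgn(1-R)$, the sign of $1-R$ already hidden inside $\varphi(q,\ell(q))$, and the sign of $(1-q)s$ that surfaces from the square root; the two perfect-square factorizations are precisely what render this a finite case check rather than an opaque radical computation.
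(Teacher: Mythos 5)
Your proposal is correct and follows essentially the same route as the paper: direct substitution for the $\varphi$-identities, followed by recognising that $\varphi^{2}+E^{2}$ is a perfect square for $n=m(q)$ and $n=\ell(q)$ (the paper writes the square root for $v(q,\ell(q))$ in exactly the factored form you derive) and then the same sign bookkeeping via $\sgn(1-R)|1-R|=1-R$ and the sign of $(1-q)[(1-R)q+R]$. Your use of the affinity of $\varphi$ in $n$ to chain the identities is a minor organisational refinement of what the paper dismisses as "follow easily on substitution," not a different argument.
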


\begin{proof}
Most of these identities follow easily on substitution.
For $v(q,\ell(q))$ we have
\begin{align*}
v(q,\ell(q))   
&=(1-R)(1-q)\left\{(1-R)q+R-\frac{(b_{2}-1)R^{2}}{(1-R)q+R}\right\} \\
&\qquad -\sgn(1-R)\sqrt{(1-R)^{2}(1-q)^{2}\left\{(1-R)q+R+\frac{(b_{2}-1)R^{2}}{(1-R)q+R}\right\}^{2}} \\
&=(1-R)(1-q)\left\{(1-R)q+R-\frac{(b_{2}-1)R^{2}}{(1-R)q+R}\right\} \\
&\qquad -(1-R)|1-q|\left|(1-R)q+R+\frac{(b_{2}-1)R^{2}}{(1-R)q+R}\right| \\
\end{align*}
which simplifies to give the stated expression.
\end{proof}

Return to the proof of Part (3) of Lemma~\ref{lemma:list}.
Note that $\sgn(\varphi(1,n))=\sgn(n-\ell(1))$. Assume we are in the range $(1-R)n<(1-R)\ell(1)$. Then $\sgn(\varphi(1,n)) = - \sgn(1-R)$, $v(1,n)=2\varphi(1,n)$ and
\[ D(1,n) = 2b_{1}[n-m(1)]-v(1,n)+v(1,m(1)) =  2b_{1}[n-m(1)] - 2b_{1}[n-\ell(1)] +  2b_{1}[m(1)-\ell(1)] = 0. \]
Further, after some algebra we can show $\frac{\partial}{\partial q} D(q,n)|_{q=1} = - 2 b_1 R (n - m(1))$.

Consider $F(q,n)=\frac{O(q,n)}{n} = -\frac{(1-R)D(q,n)}{2R(1-q)[(1-R)q+R]b_{1}[\ell(q)-n]}$. Then both the numerator and denominator of $F$ are zero at $q=1$.
Nonetheless, we can apply L'H{\^o}pital's rule to calculate $\lim_{q\to 1}\frac{D(q,n)}{1-q}$ to deduce the expression in \eqref{eq:limatone}.

Now consider $\lim_{n \rightarrow \ell(q)} F(q,n)$. Suppose first $0 < q<1$.
Then
\[ D(q, \ell(q)) = 2(1-R)q(1-q) \left\{ [(1-R)q+R] + \frac{R^2(b_2-1)}{(1-R)q+R} \right\} \]
which is non-zero and has $\sgn(D(q, \ell(q))) = \sgn(1-R)$. It follows that for $q<1$, and $R<1$, $\lim_{n \uparrow \ell(q)} F(q,n) = - \infty$ and for $q<1$ and $R>1$, $\lim_{n \downarrow \ell(q)} F(q,n) = + \infty$.

Now suppose $q>1$, and if $R>1$ that $(1-R)q+R>0$. Then
\begin{align*}
D(q,\ell(q))   
&=2b_{1}[(1-R)q+R]\left(\frac{1-R}{b_{1}}q(1-q)+\frac{(b_{2}-1)R(1-R)}{b_{1}}\frac{q}{(1-R)q+R}\right) \\
&\qquad-2(1-R)q(1-q)[(1-R)q+R]-2R(1-R)(b_{2}-1)q\\
&=0.
\end{align*}
Then, in order to determine the value of $F(q, \ell(q))$ via L'H\^{o}pital's rule we need
\begin{equation}
\frac{\partial D}{\partial n}=2b_{1}[(1-R)q+R]-q\frac{\partial v}{\partial n}
=2b_{1}[(1-R)q+R]-b_{1}q\left(1-\frac{\sgn(1-R)\varphi}{\sqrt{\varphi^{2}+E^{2}}}\right).
\label{eq:dDdn}
\end{equation}
It follows that
\[
\left. \frac{\partial}{\partial n} D(q,n) \right|_{n = \ell} = 2 b_1 [(1-R)q + R] \left[ 1 - \frac{q[(1-R)q + R]}{[(1-R)q + R]^2 + R^2(b_2-1)} \right] \]
and hence we obtain \eqref{eq:limatn}.

(4) We prove the results for $R<1$. The results for $R>1$ can be obtained similarly, the only issue being that sometimes there is an extra case which arises when $(1-R)q + R$ changes sign. 

Note that for fixed $q$, the ordering of $m(q)$ and $\ell(q)$ is given by Part 1 of Lemma \ref{lemma:list}. The monotonicity of $F$ in $n$ for $q=1$ can be  obtained from \eqref{eq:limatone}.

If $0<q<1$, then since
\begin{align*}
2b_{1}[(1-R)q+R]-b_{1}q\left(1-\frac{\sgn(1-R)\varphi}{\sqrt{\varphi^{2}+E^{2}}}\right)>2b_{1}[(1-R)q+R]-2b_{1}q=2Rb_{1}(1-q)>0,
\end{align*}
we conclude from \eqref{eq:dDdn} that $D(q,n)$ is increasing in $n$. Since $D(q,m(q))=0$ it follows that $D(q,n)>0$ for $n>m(q)$ and $D(q,n)<0$ for $n<m(q)$. Hence, $F(q,n)=0$ if and only if $n=m(q)$, and we have
\begin{align*}
\sgn(F(q,n))&=-\sgn\left(\frac{D(q,n)}{(1-q)[(1-R)q+R][\ell(q)-n]}\right) \\
&=\sgn\left[(n-m(q))(n-\ell(q))\right].
\end{align*}
This gives the desired sign properties of $F(q,n)$ on the range $0<q<1$.

Now consider the case $q>1$. From Part 3 of this proof, we have $D(q,\ell(q))=0$. We can compute the second derivative of $D$ with respect to $n$ as
\begin{align*}
\frac{\partial^{2} D}{\partial n^{2}}&=\sgn(1-R)b_{1}^{2}q\frac{E^{2}}{(E^{2}+\varphi^{2})^{3/2}}
\end{align*}
so that (recall $R<1$) $D(q,n)$ is convex in $n$.
Since $D(q,m(q))=D(q,\ell(q))=0$, it follows that on the regime of $q>1$ we must have $D(q,n)<0$ when $n$ lies between $m(q)$ and $\ell(q)$ and $D(q,n)>0$ otherwise.
Thus $\sgn(D(q,n))=\sgn\left[(n-m(q))(n-\ell(q))\right]$. Then
\begin{align*}
\sgn(F(q,n))&=\sgn\left(\frac{D(q,n)}{\ell(q)-n}\right) =-\sgn(n-m(q)).
\end{align*}
Finally, note that $F(q,n)$ can be zero only if $n=m(q)$ or $n=\ell(q)$. But for $q>1$ the limiting expression at $n=\ell(q)$ is given by Part 3 of Lemma \ref{lemma:list}. Hence $F(q,n)=0$ if and only if $n=m(q)$. 

\end{proof}

The following lemma on further properties of $F$ is key in the proofs of the monotonicity property of $\Sigma$ and in results on comparative statics:
\begin{lemma} For $q \in (0,1]$ and $(1-R)m(q) < (1-R) n <(1-R) \ell(q)$, and for $q > 1$ and $(1-R)m(q) < (1-R) n$, we have
$\frac{\partial}{\partial n}F(q,n)\leqslant0$.
\label{lem:dFdn}
\end{lemma}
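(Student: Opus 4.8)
The plan is to bypass the $D$-based representation \eqref{eq:formOalt} entirely and work directly with the original form \eqref{eq:formO}, in which the dependence on $n$ is transparent. Dividing \eqref{eq:formO} by $n$, for $q\neq 1$ we may write
\[ F(q,n)=\frac{O(q,n)}{n}=\frac{1-R}{R(1-q)}-\frac{2(1-R)^{2}q/R}{d(q,n)},\qquad d(q,n):=2(1-R)(1-q)[(1-R)q+R]-\varphi(q,n)-\sgn(1-R)\sqrt{\varphi(q,n)^{2}+E(q)^{2}}. \]
The first term does not depend on $n$, so differentiating gives
\[ \frac{\partial F}{\partial n}(q,n)=\frac{2(1-R)^{2}q}{R}\,\frac{\partial_{n}d(q,n)}{d(q,n)^{2}}. \]
The prefactor $\tfrac{2(1-R)^{2}q}{R}$ is strictly positive (using $R\in(0,\infty)\setminus\{1\}$ and $q>0$) and $d^{2}>0$, so the sign of $\partial_{n}F$ is exactly that of $\partial_{n}d$.

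The first key step is to show $\partial_{n}d<0$. Since $\varphi$ is affine in $n$ with $\partial_{n}\varphi=b_{1}>0$,
\[ \partial_{n}d=-b_{1}\left(1+\sgn(1-R)\frac{\varphi}{\sqrt{\varphi^{2}+E^{2}}}\right). \]
For $q\neq 1$ we have $E(q)^{2}=4R^{2}(1-R)^{2}(b_{2}-1)(1-q)^{2}>0$ (this is where the Standing Assumption $b_{2}>1$ enters), hence $\left|\varphi/\sqrt{\varphi^{2}+E^{2}}\right|<1$ and the bracket lies in $(0,2)$. Thus $\partial_{n}d<0$, and therefore $\partial_{n}F<0$ at every point of the stated regions with $q\neq 1$ at which $d\neq 0$. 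This already delivers the lemma away from the curve $\{d=0\}$.

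It remains to account for two special loci. First, $\{d=0\}$: by the algebraic identity established in the proof of Lemma~\ref{lemma:different_expression_o}, $d$ vanishes precisely when $n=\ell(q)$. For $q\in(0,1]$ the hypothesis $(1-R)n<(1-R)\ell(q)$ is strict, so $n=\ell(q)$ is excluded and the argument above applies verbatim. For $q>1$ the point $n=\ell(q)$ is at most a single value in the region; there $F$ is defined by the removable limit \eqref{eq:limatn} and is $C^{1}$, so $\partial_{n}F\leqslant 0$ holds by continuity from the strict inequality on either side (this is precisely why the statement is phrased with $\leqslant$ rather than $<$). Second, the line $q=1$, where $E=0$ and the bound above degenerates: here I would instead differentiate the explicit limiting expression \eqref{eq:limatone} directly, obtaining
\[ \frac{\partial F}{\partial n}(1,n)=-(1-R)\frac{\ell(1)-m(1)}{(\ell(1)-n)^{2}}. \]
Since $(1-R)(\ell(1)-m(1))>0$ by part (1) of Lemma~\ref{lemma:list} (as $\sgn(\ell(1)-m(1))=\sgn(1-R)$), this is strictly negative, as required.

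The computation itself is short; the only genuine care needed is the bookkeeping of which points are excluded from, or sit on the boundary of, each region. The main obstacle is therefore not the differentiation but verifying that the singular set $\{d=0\}=\{n=\ell(q)\}$ either lies outside the region (for $q\leqslant 1$) or is a removable point across which $C^{1}$-smoothness lets the non-strict inequality pass (for $q>1$), together with the separate treatment of $q=1$. I emphasise that this route avoids the much messier alternative of arguing through \eqref{eq:formOalt} via the convexity/concavity of $D$ in $n$ and the various relative orderings of $m$ and $\ell$, which splinters into several sub-cases (notably when $(1-R)q+R$ changes sign for $R>1$).
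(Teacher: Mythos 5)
Your proof is correct, and it takes a genuinely different --- and substantially shorter --- route than the paper's. The paper argues through the representation \eqref{eq:formOalt}: it introduces $A(q,n)=(\ell(q)-n)\frac{\partial D}{\partial n}+D(q,n)$ as in \eqref{eq:Adef}, determines the monotonicity of $A$ in $n$ from $\frac{\partial^{2}D}{\partial n^{2}}$, computes the limits of $A$ as $(1-R)n\to\pm\infty$, and then deduces the sign of $A$ (hence of $\partial F/\partial n$) by a case analysis over $0<q<1$ and $q>1$, with extra subcases for $R>1$ around $q=R/(R-1)$, and with $q=1$ and $q=R/(R-1)$ treated separately at the end. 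You instead observe that in \eqref{eq:formO} the entire $n$-dependence of the denominator $d$ enters through the affine function $\varphi$, so that $\partial_n d=-b_{1}\bigl(1+\sgn(1-R)\varphi/\sqrt{\varphi^{2}+E^{2}}\bigr)$, which is strictly negative as soon as $E(q)^{2}>0$, i.e.\ for all $q\neq1$ under the Standing Assumption $b_{2}>1$. This one line replaces all of the paper's case analysis; what the paper's heavier machinery buys is reusability ($D$ and the sign facts about it are used again in Lemma~\ref{lemma:onto} and in the comparative statics proofs), whereas your computation is tailored to this single lemma. Your treatment of $q=1$ by differentiating \eqref{eq:limatone} is also correct, and the sign follows from part (1) of Lemma~\ref{lemma:list} exactly as you say.

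Two bookkeeping points should be tightened, though neither damages the argument. First, the difference-of-squares identity in the proof of Lemma~\ref{lemma:different_expression_o} yields only the containment $\{d=0\}\subseteq\{n=\ell(q)\}$, not the equality you assert: $d$ times its conjugate factor equals $4b_{1}(1-R)(1-q)[(1-R)q+R](\ell(q)-n)$, and on parts of the curve $n=\ell(q)$ it is the conjugate factor, not $d$, that vanishes. Concretely, for $R<1$, $q>1$ (and for $R>1$, $1<q<R/(R-1)$) one has $d(q,\ell(q))\neq0$, so \eqref{eq:formO} is non-singular there and your strict inequality holds at $n=\ell(q)$ directly --- which is fortunate, because the $C^{1}$-smoothness across that point that your fallback argument invokes is asserted rather than proved (\eqref{eq:limatn} gives the limiting value of $F$, not of its $n$-derivative). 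In the one situation where $d(q,\ell(q))=0$ genuinely occurs with $q>1$ (namely $R>1$, $q>R/(R-1)$), the point $n=\ell(q)$ lies outside the lemma's region anyway, since $\ell(q)>m(q)$ there and the region is $n<m(q)$. Second, for $R>1$ the line $q=R/(R-1)$ belongs to the stated region and is excluded from the set where $F$ is given directly by the formula (it lies in $\mathcal{S}$); the paper treats it separately precisely because $\ell$ explodes there and \eqref{eq:formOalt} degenerates. In your formulation it is harmless --- there $d=-\varphi+\sqrt{\varphi^{2}+E^{2}}>0$, consistent with \eqref{eq:denO} --- but you should say explicitly that the limit-extension of $F$ across this line coincides with \eqref{eq:formO} by continuity, so that your differentiation is valid there as well.
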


\begin{proof}

Direct computation gives
\[ 
(\ell(q)-n)^2 \frac{\partial }{\partial n}\left( \frac{D(q,n)}{\ell(q)-n} \right)
=(\ell(q)-n)\frac{\partial D}{\partial n}+D(q,n)
= A(q,n)
\]
where $A$ is defined in \eqref{eq:Adef}.
Differentiating $A$ we have
\begin{align*}
\frac{\partial }{\partial n}A(q,n)&= \sgn(1-R) \frac{b_{1}^{2}E(q)^{2}q(\ell(q)-n)}{(\varphi^{2}+E(q)^{2})^{3/2}}.
\end{align*}
Hence for $q>0$ and $R<1$, $A(q,n)$ is increasing in $n$ for $n<\ell(q)$ and decreasing in $n$ for $n>\ell(q)$. 
If $R>1$, then $A(q,n)$ is decreasing in $n$ for $n<\ell(q)$ and increasing in $n$ for $n>\ell(q)$.

Now we calculate the limiting value of $A(q,n)$ as $n\to\pm\infty$. Clearly $\varphi(q,n)\to\pm\infty$ as $n\to\pm\infty$. Then,
\begin{align*}
\lim_{(1-R)n\to+\infty}v(q,n)=\lim_{(1-R)\varphi\to+\infty}\varphi- \sgn(1-R)\sqrt{\varphi^{2}+E(q)^{2}}=0
\end{align*}
and
\begin{align*}
\lim_{(1-R)n\to+\infty}(\ell(q)-n)\left(1- \sgn(1-R)\frac{\varphi(n,q)}{\sqrt{\varphi(n,q)^{2}+E(q)^{2}}}\right)&=0.
\end{align*}
Observe that
\[
A(q,n) 
=2b_{1}[(1-R)q+R](\ell(q)-m(q))-b_{1}q(\ell(q)-n)\left(1- \sgn(1-R) \frac{\varphi}{\sqrt{\varphi^{2}+E^{2}}}\right) -qv(q,n)+qv(q,m(q))
\]
and thus
\[ 
\lim_{(1-R)n\to+\infty}A(q,n) = 2b_{1}[(1-R)q+R](\ell(q)-m(q))+qv(q,m(q))
=2(1-R)[(1-R)q+R]q(1-q).
\] 

Now we compute the limiting value of $A(q,n)$ as $\sgn(1-R) n\to-\infty$. In this case $v(q,n)$ is no longer converging. But consider
\begin{eqnarray*}
\lefteqn{ b_{1}q(\ell(q)-n)\left(1- \sgn(1-R)\frac{\varphi}{\sqrt{\varphi^{2}+E^{2}}}\right)+qv(q,n) } \\
&=& b_{1}q\ell(q)+q(\varphi-b_{1}n)- \sgn(1-R)\frac{q\varphi}{\sqrt{\varphi^{2}+E^{2}}}\left(b_{1}\ell(q)+(\varphi-b_{1}n)+\frac{E^{2}}{\varphi}\right).
\end{eqnarray*}
Using the fact that $\varphi-b_{1}n$ is independent of $n$, we can obtain
\begin{eqnarray*}
\lefteqn{ \lim_{(1-R)n\to-\infty}b_{1}q(\ell(q)-n)\left(1- \sgn(1-R)\frac{\varphi}{\sqrt{\varphi^{2}+E^{2}}}\right)+qv(q,n) } \\
&=&2b_{1}q\ell(q) - 2q\left[b_{1}-(1-R)(b_{3}-2R)q-(2-b_{2})R(1-R)\right]
\end{eqnarray*}
and thus
\begin{align*}
\lim_{(1-R)n\to-\infty}A(q,n)&=2b_{1}[(1-R)q+R](\ell(q)-m(q))+qv(q,m(q))-2b_{1}q\ell(q) \\
&\qquad+2q\left[b_{1}-(1-R)(b_{3}-2R)q-(2-b_{2})R(1-R)\right] \\
&=\frac{2R^{2}(1-R)(b_{2}-1)q(1-q)}{(1-R)q+R}
\end{align*}
after some algebra.

Suppose $R<1$.
For $0<q<1$ we have $A(q,n)$ increasing in $n$ for $n<l(q)$ and decreasing in $n$ for $n>\ell(q)$. Since on this range of $q$ $\lim_{n\to+\infty}A(q,n)=2(1-R)[(1-R)q+R]q(1-q)>0$ and $\lim_{n\to-\infty}A(q,n)=\frac{2R^{2}(1-R)(b_{2}-1)q(1-q)}{(1-R)q+R}>0$, we conclude $A(q,n)>0$ for all $n$.

If $q>1$ then $A(q,\ell(q))=D(q,\ell(q))=0$. But $A(q,n)$ attains its maximum at $n=\ell(q)$, hence we have $A(q,n)\leqslant 0$ for $q>1$. 
Putting the cases together, $(1-q) A(q,n) \geq 0$ and $\frac{\partial F}{\partial n} \leq 0$.

Now suppose $R>1$. Suppose $0<q<1$ or $q>R/(R-1)$. Then $A(q,n)$ decreasing in $n$ for $n<l(q)$ and increasing in $n$ for $n>\ell(q)$. Since $\lim_{n\to+\infty}A(q,n)<0$ and $\lim_{n\to-\infty}A(q,n)<0$, we conclude $A(q,n)<0$ for all $n$.
If $1<q<\frac{R}{R-1}$ then $A(q,n)$ attains its minimum of zero at $n=\ell(q)$. Hence $A(q,n)\geqslant 0$ for $1<q<\frac{R}{R-1}$.
Again we find
$(1-R)(1-q) A(q,n) \geq 0$ and $\frac{\partial F}{\partial n} \leq 0$.

It remains to check the result at $q=1$ and, if $R>1$, at $q = \frac{R}{R-1}$. At $q=1$ the result follows from \eqref{eq:limatone}. For $q=\frac{R}{R-1}$, using \eqref{eq:denO} we have
\[ F \left( \frac{R}{R-1}, n \right) = \frac{ (R-1)^2 D \left( \frac{R}{R-1}, n \right) }{2 R^3 (b_2-1)} \]
and the monotonicity of $F(\frac{R}{R-1},n)$ in $n$ follows from the monotonicity of $D( \frac{R}{R-1}, n)$ in $n$.

\end{proof}

\begin{proof}[Proof of Lemma \ref{lemma:onto}]
For any $u\in(0,q_{M})$, then since $(1-R)n_u(q)$ is decreasing in $q$ and $n'(\zeta(u))=0$, $n_{u}(q)$ can only cross $m(q)$ at some $q\geqslant q_{M}$.
Moreover, for $u \leq q \leq \zeta(u)$,  $(1-R)m(u) = (1-R)n_u(u) \geq (1-R)n_u(q) \geq (1-R)n_u(\zeta(u)) = (1-R) m(\zeta(u)) \geq (1-R)m_M$.

Since $n_{q_{M}}(q_{M})=m_{M}$, we have $\lim_{u\uparrow q_{M}}m(\zeta(u))=m_{M}$ and in turn $\lim_{u\uparrow q_{M}}\zeta(u)=q_{M}$. Then $\lim_{u\uparrow q_{M}}\Sigma(u)=0$.

Now consider $\Lambda(u):=\ln(1+\Sigma(u))=\int_{u}^{\zeta(u)}-\frac{R}{(1-R)q}\frac{O(q,n_{u}(q))}{n_{u}(q)}dq$. From the fact that $O(u, n_u(u))=O(u,m(u))= 0 = O(\zeta(u),m(\zeta(u))) =  O(\zeta(u), n_u(\zeta(u)))$ we have
\begin{align*}
\frac{d\Lambda}{du}=\int_{u}^{\zeta(u)}-\frac{R}{(1-R)q}\left(\frac{\partial}{\partial n}\frac{O(q,n_{u}(q))}{n_{u}(q)}\right)\frac{\partial n_{u}(q)}{\partial u}dq<0
\end{align*}
where we have used Lemma \ref{lem:dFdn} and the monotonicity of $n$ to make the conclusion about the sign.

We now show that $\lim_{u\downarrow 0}\Sigma(u)=+\infty$. We assume $R<1$; the proof for $R>1$ is similar. Consider a quadratic function $H(x)=(1-R)(m'(0)-x)-R(l'(0)-x)x$. Then trivially $H(m'(0))>0$. Choose a constant $k$ such that $m'(0)<k<\alpha<0$ where $\alpha$ is the negative root of $H(x)=0$. Then $H(k)>0$ and equivalently $k<\frac{(1-R)(m'(0)-k)}{R(l'(0)-k)}$. Now let $b(q)=1+kq$.
It is clear from the definition of $D$ that $D(0,1)=0$ and then
\[ \left. \frac{d}{dq} D(q, 1 + kq) \right|_{q=0}  = \left. \frac{\partial }{\partial q} D(q, n) \right|_{q=0,n=1} + k \left. \frac{\partial }{\partial n} D(q, n) \right|_{q=0,n=1}
 =  - 2 R b_1 m'(0) + 2 R b_1 k \]
and
\begin{align*}
\lim_{q\downarrow 0}O(q,b(q))= - \frac{(1-R)\frac{d}{dq} D(q, 1 + kq) |_{q=0} }{2 R^2 b_1 [\ell'(0) - k]}          =\frac{(1-R)(m'(0)-k)}{R(l'(0)-k)}.
\end{align*}
Then for all $\epsilon>0$, there exists $K_{\epsilon} \in (0,1)$ such that $O(q,b(q))>\frac{(1-R)(m'(0)-k)}{R(l'(0)-k)}-\epsilon$ for $q<K_{\epsilon}$. Choose $\epsilon$ such that $0<\epsilon<\frac{(1-R)(m'(0)-k)}{R(l'(0)-k)}-k$. Then we have $O(q,b(q))>k$ on $0<q<K_{\epsilon}$  and solutions to $n' = O(q,n)$ cross $b(q)$ from below. Let $\psi_{u}=\inf(q\geqslant u:n_{u}(q)>b(q))$. Then for $u<q<K_{\epsilon}\wedge \psi_{u}$, $n_{u}'(q)=O(q,n_{u}(q))>O(q,b(q))>k$. Moreover, there also exists $K_{m}$ such that $m'(q)<\frac{1}{2}(m'(0)+k)$ for $q<K_{m}$. Hence on $u<q<K_{\epsilon}\wedge \psi_{u}\wedge K_{m}$, $n'_{u}(q)-m'(q)>k-\frac{1}{2}(m'(0)+k)=\frac{1}{2}(k-m'(0))=:\widehat{k}>0$ and then $n_{u}(q)-m(q)>\widehat{k}(q-u)$. On the other hand, for $\psi_{u}<q<K_{\epsilon}\wedge K_{m}$, $m(q)<1+\frac{q}{2}(m'(0)+k)$ and hence $n_{u}(q)-m(q)>(1+kq)-(1+\frac{q}{2}(m'(0)+k))=\widehat{k}q>\widehat{k}(q-u)$. We conclude $n_{u}(q)-m(q)>\widehat{k}(q-u)$ for $u<q<Q:=K_{\epsilon}\wedge K_{m}$.

Hence, using \eqref{eq:formOalt}
and L'H{\^o}pital's rule,
\begin{align*}
\ln(1+\Sigma(u))&=\int_{u}^{\xi(u)}-\frac{R}{(1-R)q}\frac{O(q,n_{u}(q))}{n_{u}(q)}dq \\
&>\int_{u}^{Q}\frac{2b_{1}\left[(1-R)q+R\right](n_{u}(q)-m(q))-q\left[v(q,n_{u}(q))-v(q,m(q))\right]}{2q(1-q)[(1-R)q+R]b_{1}[l(q)-n_{u}(q)]}dq.
\end{align*}
For the denominator, and for $u<q<Q \leq 1$ we have
\begin{align*}
2q(1-q)[(1-R)q+R]b_{1}[l(q)-n_{u}(q)]&<2q(1-q)[(1-R)q+R]b_{1}[l(q)-m(q)] \\
&=2q^{2}(1-q)\{(1-R)(1-q)[(1-R)q+R]+(b_{2}-1)R(1-R)\} \\
&<2q^{2}\{M+(b_{2}-1)R(1-R)\}
\end{align*}
where $M:=\sup_{0<q<1}(1-R)(1-q)[(1-R)q+R]$.
For the numerator, note that for $q < \zeta(u)$
\begin{align*}
&v(q,n_{u}(q))-v(q,m(q)) \\
&=\varphi(q,n_{u}(q))-\varphi(q,m(q))-\{\sqrt{\varphi(q,n_{u}(q))^{2}+E(q)^{2}}-\sqrt{\varphi(q,m(q))^{2}+E(q)^{2}}\} \\
&<\varphi(q,n_{u}(q))-\varphi(q,m(q)) \\
&=b_{1}(n_{u}(q)-m(q)).
\end{align*}
Then,
\begin{align*}
&2b_{1}[(1-R)q+R](n_{u}(q)-m(q))-q\left[v(q,n_{u}(q))-v(q,m(q))\right] \\
&>\{2b_{1}[(1-R)q+R]-b_{1}q\}(n_{u}(q)-m(q)) \\
&=b_{1}L(q)(n_{u}(q)-m(q))
\end{align*}
where $L(q):=\{2[(1-R)q+R]-q\}$. Since $L$ is linear and $L(0)=2R>0$, we can choose to work on a small interval $(0,q_{L})$ such that $L(q)>\min(2R,L(q_{L}))>0$. For sufficiently small $u$ such that $u<q_{L}$, we have $b_{1}L(q)(n_{u}(q)-m(q))>b_{1}\min(2R,L(q_{L}))\widehat{k}(q-u)$ on $u<q<Q\wedge q_{L}$.

Putting everything together and setting $\widehat{Q}:=Q\wedge q_{L}\wedge 1$, for $u< \widehat{Q}$ we deduce that
\begin{align*}
\ln(1+\Sigma(u))&>\int_{u}^{\widehat{Q}}\frac{b_{1}\min(2R,L(q_{L}))\widehat{k}(q-u)}{2q^{2}[M+(b_{2}-1)R(1-R)]}dq \\
&=\frac{b_{1}\min(2R,L(q_{L}))\widehat{k}}{2[M+(b_{2}-1)R(1-R)]}\left(\ln\frac{\widehat{Q}}{u}+\frac{u}{\widehat{Q}}-1\right)
\end{align*}
Letting $u\downarrow 0$ and noting that $\widehat{Q}$ does not depend on $u$ we conclude that $\Sigma(u)\to\infty$.

\end{proof}

\section{Comparative Statics}
\label{app:compstat}
\begin{proof}[Proof of Proposition \ref{prop:compstat}]


(1) Set $\overline{m}(q) = b_1(m(q)-1)$ and similarly $\overline{n}(q) = b_1(n(q)-1)$ and $\overline{\ell}(q) = b_1(\ell(q)-1)$. The idea behind this transformation is that $\overline{m}$ is constructed such that it does not depend on $b_1$. $\overline{\ell}$ has a similar property. The free boundary value problem can be written as to find $(\overline{n},q_*,q^*)$ such that $\overline{n}'= \overline{O}(q,\overline{n})$ subject to $\overline{n}(q_*) = \overline{m}(q_*)$ and $\overline{n}(q^*) =\overline{m}(q^*)$.
Here $\overline{O}(q, \overline{n}) : = b_1 O(q, \frac{\overline{n}}{b_1}+1) = b_1 O(q,n)$.

Note that $\zeta(u)=\inf \{ q\geqslant u: (1-R)n_{u}(q)< (1-R)m(q) \}=\inf \{ q\geqslant u:(1-R)\overline{n}_{u}(q)<(1-R)\overline{m}(q) \}$.

Define $\overline{\varphi}(q,\overline{n}) = \varphi(q,n) = \varphi(q, \frac{\overline{n}}{b_1}+1)$, $\overline{v}(q,\overline{n}) = v(q,n) = v(q, \frac{\overline{n}}{b_1}+1)$ and $\overline{D}(q,\overline{n}) = D(q,n) = D(q, \frac{\overline{n}}{b_1}+1)$. Then, as functions of $q$ and $\overline{n}$, $\overline{\varphi}$, $\overline{v}$ and $\overline{D}$ are all independent of $b_1$.

We have
\[ \overline{O}(q,\overline{n})=-\frac{(1-R)(\overline{n}+b_{1})\overline{D}(q,\overline{n})}{2R(1-q)[(1-R)q+R][\overline{\ell}(q)-\overline{n}]} \]
By the above remarks the only dependence on $b_1$ is through the term $(\overline{n}+b_{1})$. Further
\[ \overline{n}' = (\overline{n}+b_{1}) \overline{F}(q,\overline{n}) \]
where $\overline{F}$ given by
\[ \overline{F}(q,\overline{n}) = F(q,n) = -\frac{(1-R)\overline{D}(q,\overline{n})}{2R(1-q)[(1-R)q+R][\overline{\ell}(q)-\overline{n}]} \]
does not depend on $b_1$. By Lemma~\ref{lem:dFdn}, $\overline{F}$ is decreasing in the second argument.

Let $\widehat{b}_{1}>\widetilde{b}_{1}$ be two positive values of $b_1$. Define $\widehat{n}_{u}$ and $\widetilde{n}_{u}$ the solutions to the initial value problem $\overline{n}'(q)=\overline{O}(q,\overline{n}(q))$ with $\overline{n}(u)=\overline{m}(u)$ under parameters $\widehat{b}_{1}$ and $\widetilde{b}_{1}$ respectively. We extend this notation to $O$, $\zeta$, $\Sigma$ and $(q_{*},q^{*})$ in a similar fashion.

If $\overline{n}_{u}$ is a solution to the initial value problem with $\overline{n}_u(u) = \overline{m}(u)$ we must have $(1-R)\overline{O}(q,\overline{n}_{u}(q))<0$ and hence $(1-R)\overline{O}$ is decreasing in $b_{1}$. Then $(1-R)\widehat{n}_{u}$ cannot upcross $(1-R)\widetilde{n}_{u}$ and since $(1-R)\widehat{n}_{u}'(u)=(1-R)\widehat{O}(u,\widehat{n}_{u}(u))<(1-R)\widetilde{O}(u,\widetilde{n}_{u}(u))=(1-R)\widetilde{n}_{u}'(u)$, we must have $(1-R)\widehat{n}_{u}(q)<(1-R)\widetilde{n}_{u}(q)$ at least up to $q=\widehat{\zeta}(u)\wedge\widetilde{\zeta}(u)$. From this we conclude $\widehat{\zeta}(u)<\widetilde{\zeta}(u)$. On the other hand,  $\overline{F}(q, \overline{n})$ depends on $b_{1}$ only through $\overline{n}$. It follows that
\[
- \ln(1+\overline{\Sigma}(u))
=\int_{u}^{\overline{\zeta}(u)} \frac{R}{q(1-R)}\frac{O(q, \frac{\overline{n}(q)}{b_1} + 1)}{\frac{\overline{n}(q)}{b_1} + 1} dq
=\int_{u}^{\overline{\zeta}(u)} \frac{R}{q(1-R)}\frac{\overline{O}(q,\overline{n}(q))}{\overline{n}(q)+b_{1}} dq
=\int_{u}^{\overline{\zeta}(u)} \frac{R}{q(1-R)} \overline{F}(q,\overline{n}(q)) dq.
\]
But, by the monotonity of $\overline{n}_u$ and $\overline{\zeta}$ in $b_1$
\[ \int_{u}^{\widehat{\zeta}(u)} \frac{R}{q(1-R)} \overline{F}(q,\widehat{n}(q))dq
> \int_{u}^{\widehat{\zeta}(u)} \frac{R}{q(1-R)} \overline{F}(q,\widetilde{n}(q)) dq
> \int_{u}^{\widetilde{\zeta}(u)} \frac{R}{q(1-R)} \overline{F}(q,\widetilde{n}(q))dq \]
where we use $(1-R) \overline{F}(q,n)<0$ and  and the fact that $\overline{F}$ is decreasing in $\overline{n}$ over the relevant range.
We conclude that
$\ln(1+\widehat{\Sigma}(u)) < \ln(1+\widetilde{\Sigma}(u))$ and hence
$\widehat{q}_{*}=\widehat{\Sigma}^{-1}(\xi)<\widetilde{\Sigma}^{-1}(\xi)=\widetilde{q}_{*}$.


To prove the monotonicity of the sale boundary $q^*$, one can parameterise the family of solutions via its right boundary point $(n_{v}(\cdot),\varsigma(v),v)$. See \cite{HobsonTseZhu:16} for the use of a similar idea.

(2) Now we consider the monotonicity of the limits of the no-transaction wedge in $b_3$. We use a different transformation and comparison result. Set $a(q)=n(q)-m(q)$. Then the original free boundary value problem becomes to solve $a'(q)=\underline{O}(q,a(q))$ subject to boundary conditions $a(q_{*})=a(q^{*})=0$ where
\begin{align*}
\underline{O}(q,a)=-\frac{(1-R)(a+m(q))D(q,a+m(q))}{2R(1-q)[(1-R)q+R]b_{1}[\ell(q)-m(q)-a]}-\frac{2R(1-R)}{b_{1}}q+\frac{b_{3}(1-R)}{b_{1}}.
\end{align*}
Observe that $b_{1}[\ell(q)-m(q)]=(1-R)q(1-q)+(b_{2}-1)R(1-R)\frac{q}{(1-R)q+R}$ does not depend on $b_{3}$.
Further, \begin{align*}
\varphi(q,a+m(q))&=b_{1}a+\varphi(q,m(q))=b_{1}a+R(1-R)\{(1-q)^{2}-(b_{2}-1)\}
\end{align*}
and $v(q,m(q))=-2R(1-R)(b_{2}-1)$ are both independent of $b_{3}$.
Hence $D(q,a+m(q))=2b_{1}[(1-R)q+R]a-q[v(q,a+m(q))-v(q,m(q))]$
and
\begin{align*}
\frac{O(q, a+m(q))}{a+m(q)} =  - \frac{(1-R)D(q,a+m(q))}{2R(1-q)[(1-R)q+R]b_{1}[\ell(q)-m(q)-a]}
\end{align*}
are independent of $b_{3}$. Recall we are assuming $R<1$. Then $O(q,n) \leq 0$ over the relevant range and
\begin{align*}
\frac{\partial \underline{O}}{\partial b_{3}}(q,a)&=-\frac{(1-R)D(q,a+m(q))}{2R(1-q)[(1-R)q+R]b_{1}[\ell(q)-m(q)-a]}\frac{\partial m}{\partial b_{3}}+\frac{1-R}{b_{1}} \\
&=-\frac{O(q,a+m(q))}{a+m(q)}\times\frac{1-R}{b_{1}}q+\frac{1-R}{b_{1}} > 0.
\end{align*}


Suppose $\widehat{b}_{3}>\widetilde{b}_{3}$. Using similar ideas in Part 1 of the proof we can deduce $\widehat{\zeta}(u)>\widetilde{\zeta}(u)$ and $\widehat{a}_{u}(q)>\widetilde{a}_{u}(q)$ for $q<\widetilde{\zeta}(u)$. 
Hence, using the fact that $\frac{O(q, a+m(q))}{a+m(q)}$ does not depend on $b_3$
\begin{align*}
\ln(1+\widehat{\Sigma}(u))&=\int_{u}^{\widehat{\zeta}(u)}\left(-\frac{R}{q(1-R)}\frac{O(q,\widehat{n}_{u}(q))}{\widehat{n}_{u}(q)}\right)dq \\
&=\int_{u}^{\widehat{\zeta}(u)}\left(-\frac{R}{q(1-R)}\frac{O(q,\widehat{a}_u(q)+m(q))}{\widehat{a}_{u}(q)+m(q)}\right)dq \\
&>\int_{u}^{\widetilde{\zeta}(u)}\left(-\frac{R}{q(1-R)}\frac{O(q,\widehat{a}_u(q)+m(q))}{\widehat{a}_u(q)+m(q)}\right)dq \\
&>\int_{u}^{\widetilde{\zeta}(u)}\left(-\frac{R}{q(1-R)}\frac{O(q,\widetilde{a}_u(q)+m(q))}{\widetilde{a}_u(q)+m(q)}\right)dq \\
&=\ln(1+\widetilde{\Sigma}(u))
\end{align*}
where we use the monotonicity of $\zeta(u)$ and  the property that $\frac{O(q,n)}{n}$ is decreasing in $n$ and hence $\frac{O(q,a+m(q))}{a+m(q)}$ is decreasing in $a$. Thus  $\widehat{q}_{*}=\widehat{\Sigma}^{-1}(\xi)>\widetilde{\Sigma}^{-1}(\xi)=\widetilde{q}_{*}$.
The monotonicity property of the sale boundary can be proved in a similar fashion by parameterising the family of solutions with their right boundary points.

\end{proof}

\begin{proof}[Proof of Theorem~\ref{thm:compstat2}]
(1) We write out the proof assuming $R<1$. The case $R>1$ follows similarly.

We use \eqref{eq:formO} to compute
\begin{align*}
\frac{\partial}{\partial b_{1}}O(q,n;b_{1})&=-\frac{2(1-R)^{2}qn/R}{\{2(1-R)(1-q)[(1-R)q+R]-\varphi(q,n)-\sqrt{\varphi(q,n)^{2}+E(q)^{2}}\}^{2}}\\
&\qquad \times\left(1+\frac{\varphi(q,n)}{\sqrt{\varphi(q,n)^{2}+E(q)^{2}}}\right)\frac{\partial \varphi}{\partial b_{1}}
\end{align*}
and hence, for $q>0$, $\sgn\left(\frac{\partial}{\partial b_{1}}O(q,n;b_{1})\right)=-\sgn\left(\frac{\partial \varphi}{\partial b_{1}}\right)=-\sgn(n-1)=+1$, since $n(\cdot)$ is bounded above by 1.

Further, $\overline{m}(q):=b_{1}(m(q;b_{1})-1)$ is independent of $b_{1}$ and from this we deduce $\frac{\partial}{\partial b_{1}}m(q;b_{1})=-\frac{m(q;b_{1})-1}{b_{1}}$ and hence over the continuation region $q\in[q_{*},q^{*}]$ we have $\sgn\left(\frac{\partial}{\partial b_{1}}m(q;b_{1})\right)=-\sgn(m(q;b_{1})-1)=+1$. Using the signs of $\frac{\partial}{\partial b_{1}}O(q,n;b_{1})|_{n=n(q)}$ and $\frac{\partial}{\partial b_{1}}m(q;b_{1})$ together with the fact that $q_{*}$ is decreasing in $b_{1}$, we conclude $n({}\cdot{};b_{1})$ is increasing in $b_{1}$. If we extend the domain of definition of $n$ to $[0,\infty)$ by setting $n(q)=n(q_{*})$ for $q<q_{*}$ and $n(q)=n(q^{*})$ for $q>q^{*}$ then we have $n({}\cdot{};b_{1})$ being increasing in $b_{1}$ on $[0,\infty)$.


Starting from the fact that $n(q;b_{1})$ is increasing in $b_{1}$, we can deduce that each of $-(1-q)N(q;b_{1})$, $hW(h;b_{1})$, $w(h;b_{1})$ and $(1-p)h'(p;b_{1})$ is increasing in $b_{1}$. Then for $\widehat{b}_{1}>\widetilde{b}_{1}$ (and using the overscripts to label the functions and parameters under the corresponding choice of $b_{1}$), we have
\begin{align}
\sgn(1-p)\widehat{h}'(p)>\sgn(1-p)\widetilde{h}'(p).
\label{eq:upcrossonly}
\end{align}

Recall that $G(p)=n(q_{*})^{-R}(1+\lambda p)^{1-R}$ and $G(p)=n(q^{*})^{-R}(1-\gamma p)^{1-R}$ on the purchase and sale region respectively. Using the monotonicity of $n$ in $b_{1}$ we conclude $\widehat{G}(p)<\widetilde{G}(p)$ over $p\in(0,\widehat{p}_{*})\cup(\widetilde{p}^{*},1/\gamma)$.

Suppose $G(p;b_{1})$ is not decreasing in $b_{1}$. Then since $G$ is continuous, $\widehat{G}(p)$ must cross $\widetilde{G}(p)$ at least twice, with the first cross being an upcross and the last cross being a downcross. Denote the $p$-coordinate of the first upcross and last downcross by $k_{u}$ and $k_{d}$ respectively.

Away from $p=1$, \eqref{eq:upcrossonly} implies that $\widehat{G}(p)$ cannot downcross $\widetilde{G}(p)$. Then the only possibility is that there are precisely two crossings with $0<k_{u}<k_{d}=1$. But if $k_{d}=1$ such that $K:=\widehat{G}(1)=\widetilde{G}(1)$, the relationship $\frac{1}{G(1)}\left(G(1)-\frac{G'(1)}{1-R}\right)^{1-1/R}=n(1)$ gives
\[ 
\widehat{G}'(1)=(1-R)\left(K-(K\widehat{n}(1))^{-R/(1-R)}\right)>(1-R)\left(K-(K\widetilde{n}(1))^{-R/(1-R)}\right)
=\widetilde{G}'(1)
\] 
contradicting the hypothesis that $k_{d}=1$ is a downcross.

(2) Now consider the monotonicity in $b_3$.
For $R<1$ a similar argument to the above can be applied if we can show that $n({}\cdot{};b_{3})$ is decreasing in $b_{3}$. But this follows immediately as $\sgn\left(\frac{\partial}{\partial b_{3}}O(q,n;b_{3})\right)=-\sgn\left(\frac{\partial \varphi}{\partial b_{3}}\right)=-1=\sgn\left(\frac{\partial}{\partial b_{3}}m(q;b_{3})\right)$ and $q_{*}$ is increasing in $b_{3}$.

For $R>1$ we cannot use this argument. However, the monotonicity of the value function in $b_3$, and hence the monotonicity of $\mathcal{C}$ can be proved by a comparison argument. The value function only depends on the parameters through $R$ and the auxiliary parameters, so when comparing two models which differ only through $b_3$ we may equivalently compare two models which differ only in $\alpha$.

Consider a pair of models, the only difference being that in the first model $Y$ has drift $\tilde{\alpha}$, whereas in the second model $Y$ has drift $\hat{\alpha}$ where $\hat{\alpha} > \tilde{\alpha}$. Write $\epsilon = \hat{\alpha} - \tilde{\alpha}>0$. Suppose that parameters are such that Standing Assumption~\ref{ass:sa}
holds in the first model; then necessarily Standing Assumption~\ref{ass:sa} holds in the second model. Let $(\tilde{Y},\hat{Y}) = (\tilde{Y}_t,\hat{Y}_t)_{t \geq 0}$
be given by
\[ (\tilde{Y}_t,\hat{Y}_t) = (ye^{\eta W_t + (\tilde{\alpha} - \frac{\eta^2}2) t}, ye^{\eta W_t + (\hat{\alpha} - \frac{\eta^2}2) t}) \]
so that $\hat{Y}_t = e^{\epsilon t} \tilde{Y}_t$. Let $(\tilde{C}, \tilde{\Pi}, \tilde{\Theta} = \theta + \tilde{\Phi} - \tilde{\Psi})$ be an admissible strategy for an agent in the first model. Suppose $\tilde{\Theta}$ is non-negative, and note that the optimal strategy has this property, even if the initial endowment in the illiquid asset is negative, since in that case there is an initial transaction into the no-transaction wedge which is contained in the half-plane $\theta \geq 0$. We may assume we start in the no-transaction region. Then $\tilde{X}_0 = x$ and $\tilde{X}= (\tilde{X}_t)_{t \geq 0}$ solves
\[
d\tilde{X}_{t} = r(\tilde{X}_t - \tilde{\Pi}_{t})dt +  \frac{\tilde{\Pi}_{t}}{S_{t}}dS_{t} -\tilde{C}_{t}dt-\tilde{Y}_{t}(1+\lambda)d\tilde{\Phi}_{t}+\tilde{Y}_{t}(1-\gamma)d\tilde{\Psi}_{t} .
\]
Define the absolutely continuous, increasing process $\kappa$ by $\kappa_t = \int_0^t \left\{ d \tilde{\Phi}_s \wedge ( d \tilde{\Psi}_s + \epsilon \tilde{\Theta}_s ds ) \right\}$ and set
\begin{eqnarray*}
\hat{\Pi}_t & = & \tilde{\Pi}_t \\
\hat{\Theta}_t & = & \tilde{\Theta}_t e^{-\epsilon t} \\
\hat{C}_t & = & \tilde{C}_t + (\lambda + \gamma) \tilde{Y}_t d\kappa_t + (1-\gamma) \epsilon \tilde{\Theta}_t \tilde{Y}_t dt \\
\hat{\Phi}_t & = & \int_0^t e^{-\epsilon s} \left( d\tilde{\Phi}_s - d \kappa_s \right) \\
\hat{\Psi}_t & = & \int_0^t e^{-\epsilon s} \left( d\tilde{\Psi}_s + \epsilon \tilde{\Theta}_s ds - d \kappa_s \right)
\end{eqnarray*}
Then $\hat{\Theta}_t \hat{Y}_t = \tilde{\Theta}_t \tilde{Y}_t$ and the corresponding wealth process solves
\begin{eqnarray*}
d\hat{X}_{t} & = & r(\hat{X}_t - \hat{\Pi}_{t})dt +  \frac{\hat{\Pi}_{t}}{S_{t}}dS_{t} -\hat{Y}_{t}(1+\lambda)d\hat{\Phi}_{t}+\hat{Y}_{t}(1-\gamma)d\hat{\Psi}_{t} -\hat{C}_{t}dt \\
& = & r(\hat{X}_t - \tilde{\Pi}_{t})dt +  \frac{\tilde{\Pi}_{t}}{S_{t}}dS_{t} -\hat{Y}_{t} e^{-\epsilon t} (1+\lambda)[d\tilde{\Phi}_{t} - d\kappa_t] +\hat{Y}_{t} e^{-\epsilon t} (1-\gamma)[d\tilde{\Psi}_{t} + \epsilon \tilde{\Theta}_t dt - d\kappa_t] \\
&& -\tilde{C}_{t}dt - (1-\gamma) \epsilon \tilde{\Theta}_t \tilde{Y}_t dt - (\lambda+ \gamma) \tilde{Y}_t d \kappa_t \\
&=& r(\hat{X}_t - \tilde{\Pi}_{t})dt +  \frac{\tilde{\Pi}_{t}}{S_{t}}dS_{t} -\tilde{C}_{t}dt-\tilde{Y}_{t}(1+\lambda)d\tilde{\Phi}_{t}+\tilde{Y}_{t}(1-\gamma)d\tilde{\Psi}_{t} .
\end{eqnarray*}
If $\hat{X}_0 = x = \tilde{X}_0$ then $\hat{X}$ solves the same equation as $\tilde{X}$ and $\hat{X}_t = \tilde{X}_t \geq 0$. Then, for any admissible strategy in the first model for which $(\Theta_t)_{t \geq 0}$ is positive, including the optimal strategy in this model, there is a corresponding admissible strategy in the second model with strictly larger consumption at all future times. Hence the value function is strictly greater in the second model.
\end{proof}

\section{The consistency condition on transaction costs}
\label{app:tcc}
Fix positive constant $\epsilon>0$ and define $\delta_{1}(\epsilon):=1-W(h(1-\epsilon))>0$ and $\delta_{2}(\epsilon):=W(h(1+\epsilon))-1>0$. Then for
\begin{align*}
o(\epsilon, \delta_{1},\delta_{2}):=\ln\left(\frac{1-\epsilon}{1+\epsilon}\right)-\ln\left(\frac{\delta_{2}(1-\delta_{1})}{\delta_{1}(1+\delta_{2})}\right),
\end{align*}
we have for $p_{*}<1<p^{*}$
\begin{align*}
\ln(1+\xi)+o(\epsilon, \delta_{1}(\epsilon),\delta_{2}(\epsilon))&=\left[\int_{p_{*}}^{1-\epsilon}\frac{dp}{p(1-p)}+\int_{1+\epsilon}^{p^{*}}\frac{dp}{p(1-p)}\right]-\left[\int_{q_{*}}^{1-\delta_{1}(\epsilon)}\frac{dq}{q(1-q)}+\int_{1+\delta_{2}(\epsilon)}^{q^{*}}\frac{dq}{q(1-q)}\right] \\
&=\left[\int_{h_{*}}^{h(1-\epsilon)}\frac{dh}{w(h)}+\int_{h(1+\epsilon)}^{h^{*}}\frac{dh}{w(h)}\right]-\left[\int_{q_{*}}^{1-\delta_{1}(\epsilon)}\frac{dq}{q(1-q)}-\int_{1+\delta_{2}(\epsilon)}^{q^{*}}\frac{dq}{q(1-q)}\right] \\
&=\left[\int_{q_{*}}^{W(h(1-\epsilon))}\frac{N'(q)dq}{(1-R)qN(q)}+\int_{W(h(1+\epsilon))}^{q^{*}}\frac{N'(q)dq}{(1-R)qN(q)}\right]\\
&\qquad-\left[\int_{q_{*}}^{1-\delta_{1}(\epsilon)}\frac{dq}{q(1-q)}-\int_{1+\delta_{2}(\epsilon)}^{q^{*}}\frac{dq}{q(1-q)}\right] \\
&=\int_{q_{*}}^{1-\delta_{1}(\epsilon)}\left(-\frac{R}{q(1-R)}\frac{O(q,n(q))}{n(q)}\right)dq+\int_{1+\delta_{2}(\epsilon)}^{q^{*}}\left(-\frac{R}{q(1-R)}\frac{O(q,n(q))}{n(q)}\right)dq.
\end{align*}
On sending $\epsilon\downarrow 0$, we have $\delta_{1}(\epsilon)\downarrow 0$ and $\delta_{2}(\epsilon)\downarrow 0$ and thus
\begin{align*}
\int_{q_{*}}^{q^{*}}\left(-\frac{R}{q(1-R)}\frac{O(q,n(q))}{n(q)}\right)dq=\ln(1+\xi)+\lim_{\epsilon\downarrow 0}o(\epsilon, \delta_{1}(\epsilon),\delta_{2}(\epsilon)).
\end{align*}
Now,
\begin{align*}
\frac{\delta_{2}(\epsilon)}{\delta_{1}(\epsilon)}=\frac{W(h(1+\epsilon))-1}{1-W(h(1-\epsilon))}=\frac{W(h(1+\epsilon))-1}{\epsilon}\frac{\epsilon}{1-W(h(1-\epsilon))}.
\end{align*}
But
\begin{align*}
\frac{1-W(h(1-\epsilon))}{\epsilon}&=\frac{(1-R)h(1-\epsilon)-\epsilon(1-\epsilon)h'(1-\epsilon)}{(1-R)\epsilon h(1-\epsilon)}=1-\frac{(1-\epsilon)G'(1-\epsilon)}{(1-R)G(1-\epsilon)}
\end{align*}
and thus $\lim_{\epsilon\downarrow 0}\frac{1-W(h(1-\epsilon))}{\epsilon}=1-\frac{G'(1)}{(1-R)G(1)}$. 
Similarly, we have $\lim_{\epsilon\downarrow 0}\frac{W(h(1+\epsilon))-1}{\epsilon}=1-\frac{G'(1)}{(1-R)G(1)}$. Hence $\lim_{\epsilon\downarrow 0}o(\epsilon, \delta_{1}(\epsilon),\delta_{2}(\epsilon))=0$ and \eqref{eq:trancost} holds. In case either $p_{*}=1$ or $p^{*}=1$, a similar argument can be used to show that \eqref{eq:trancost} is still valid.
\end{document}